\let\coloneqq\relax
\newcolumntype{x}[1]{>{\centering\arraybackslash}p{#1}}
\newtheorem{thm}{Theorem}
\newtheorem*{thm*}{Theorem}
\newtheorem*{prop*}{Proposition}
\newtheorem{lemma}[thm]{Lemma}
\newtheorem*{lemma*}{Lemma}
\newtheorem*{cor*}{Corollary}
\newtheorem*{cj*}{Conjecture}
\newtheorem{Def}[thm]{Definition}
\newtheorem*{Def*}{Definition}
\newtheorem{remark}{Remark}
\def\thmhead@plain#1#2#3{%
  \thmname{#1}\thmnumber{\@ifnotempty{#1}{ }\@upn{#2}}%
  \thmnote{ {\the\thm@notefont#3}}}
\let\thmhead\thmhead@plain
\theoremstyle{definition}
\newcommand{\bb}{\begin{equation}\begin{aligned}\hspace{0pt}}
\newcommand{\bbb}{\begin{equation*}\begin{aligned}}
\newcommand{\ee}{\end{aligned}\end{equation}}
\newcommand{\eee}{\end{aligned}\end{equation*}}
\newcommand*{\coloneqq}{\mathrel{\vcenter{\baselineskip0.5ex \lineskiplimit0pt \hbox{\scriptsize.}\hbox{\scriptsize.}}} =}
\newcommand\floor[1]{\left\lfloor#1\right\rfloor}
\newcommand\ceil[1]{\left\lceil#1\right\rceil}
\newcommand{\eqt}[1]{\stackrel{\mathclap{\scriptsize \mbox{#1}}}{=}}
\newcommand{\leqt}[1]{\stackrel{\mathclap{\scriptsize \mbox{#1}}}{\leq}}
\newcommand{\geqt}[1]{\stackrel{\mathclap{\scriptsize \mbox{#1}}}{\geq}}
\newcommand{\ketbra}[1]{\ket{#1}\!\!\bra{#1}}
\newcommand{\R}{\mathds{R}}
\newcommand{\N}{\mathds{N}}
\newcommand{\C}{\mathds{C}}
\DeclareMathOperator{\Tr}{Tr}
\DeclareMathAlphabet{\pazocal}{OMS}{zplm}{m}{n}
\newcommand{\HH}{\pazocal{H}}
\newcommand{\lsmatrix}{\left(\begin{smallmatrix}}
\newcommand{\rsmatrix}{\end{smallmatrix}\right)}
\newcommand*\rel@kern[1]{\kern#1\dimexpr\macc@kerna}
\newcommand*\widebar[1]{%
  \begingroup
  \def\mathaccent##1##2{%
    \rel@kern{0.8}%
    \overline{\rel@kern{-0.8}\macc@nucleus\rel@kern{0.2}}%
    \rel@kern{-0.2}%
  }%
  \macc@depth\@ne
  \let\math@bgroup\@empty \let\math@egroup\macc@set@skewchar
  \mathsurround\z@ \frozen@everymath{\mathgroup\macc@group\relax}%
  \macc@set@skewchar\relax
  \let\mathaccentV\macc@nested@a
  \macc@nested@a\relax111{#1}%
  \endgroup
}
\tikzset{meter/.append style={draw, inner sep=10, rectangle, font=\vphantom{A}, minimum width=30, line width=.8, path picture={\draw[black] ([shift={(.1,.3)}]path picture bounding box.south west) to[bend left=50] ([shift={(-.1,.3)}]path picture bounding box.south east);\draw[black,-latex] ([shift={(0,.1)}]path picture bounding box.south) -- ([shift={(.3,-.1)}]path picture bounding box.north);}}}
\tikzset{roundnode/.append style={circle, draw=black, fill=gray!20, thick, minimum size=10mm}}
\tikzset{squarenode/.style={rectangle, draw=black, fill=none, thick, minimum size=10mm}}
\definecolor{Blues5seq1}{RGB}{239,243,255}
\definecolor{Blues5seq2}{RGB}{189,215,231}
\definecolor{Blues5seq3}{RGB}{107,174,214}
\definecolor{Blues5seq4}{RGB}{49,130,189}
\definecolor{Blues5seq5}{RGB}{8,81,156}
\definecolor{Greens5seq1}{RGB}{237,248,233}
\definecolor{Greens5seq2}{RGB}{186,228,179}
\definecolor{Greens5seq3}{RGB}{116,196,118}
\definecolor{Greens5seq4}{RGB}{49,163,84}
\definecolor{Greens5seq5}{RGB}{0,109,44}
\definecolor{Reds5seq1}{RGB}{254,229,217}
\definecolor{Reds5seq2}{RGB}{252,174,145}
\definecolor{Reds5seq3}{RGB}{251,106,74}
\definecolor{Reds5seq4}{RGB}{222,45,38}
\definecolor{Reds5seq5}{RGB}{165,15,21}
\newtheorem{definition}{Definition}
\pgfplotsset{width=10cm,compat=1.9}
\newcommand*{\addFileDependency}[1]{
  \typeout{(#1)}
  \@addtofilelist{#1}
  \IfFileExists{#1}{}{\typeout{No file #1.}}
}
\begin{document}

\title{Optical fibres with memory effects and their quantum communication capacities}

 \author{\IEEEauthorblockN{Francesco Anna Mele}
\IEEEauthorblockA{\textit{NEST, Scuola Normale Superiore and Istituto Nanoscienze, Consiglio Nazionale delle Ricerche, Piazza dei Cavalieri 7, IT-56126 Pisa, Italy} \\
francesco.mele@sns.it}\\
\and
\IEEEauthorblockN{  Giacomo De Palma}
\IEEEauthorblockA{\textit{Department of Mathematics, University of Bologna, Piazza di Porta San Donato 5, 40126 Bologna BO, Italy} \\
giacomo.depalma@unibo.it}\\
\and

\IEEEauthorblockN{Marco Fanizza}
\IEEEauthorblockA{\textit{{F\'{\i}sica Te\`{o}rica: Informaci\'{o} i Fen\`{o}mens Qu\`{a}ntics, Departament de F\'{i}sica, Universitat Aut\`{o}noma de Barcelona, ES-08193 Bellaterra (Barcelona), Spain}}\\
marco.fanizza@uab.cat}\\
\and
\IEEEauthorblockN{Vittorio Giovannetti}
\IEEEauthorblockA{\textit{NEST, Scuola Normale Superiore and Istituto Nanoscienze, Consiglio Nazionale delle Ricerche, Piazza dei Cavalieri 7, IT-56126 Pisa, Italy}\\
vittorio.giovannetti@sns.it}\\
\and
\IEEEauthorblockN{ Ludovico Lami}
\IEEEauthorblockA{\textit{QuSoft, Science Park 123, 1098 XG Amsterdam, the Netherlands} \\
\textit{Korteweg-de Vries Institute for Mathematics, University of Amsterdam, Science Park 105-107, 1098 XG Amsterdam, the Netherlands}\\
\textit{Institute for Theoretical Physics, University of Amsterdam, Science Park 904, 1098 XH Amsterdam, the Netherlands}\\
ludovico.lami@gmail.com}
}

\maketitle

\begin{abstract}
If the transmissivity of an optical fibre falls below a critical value, its use as a reliable quantum channel is known to be drastically compromised. However, if the memoryless assumption does not hold --- e.g.~when input signals are separated by a sufficiently short time interval --- the validity of this limitation is put into question. In this work we introduce a model of optical fibre that can describe memory effects for long transmission lines. We then solve its quantum capacity, two-way quantum capacity, and secret-key capacity exactly. By doing so, we show that --- due to the memory cross-talk between the transmitted signals --- reliable quantum communication is attainable even for highly noisy regimes where it was previously considered impossible.
\end{abstract}

\begin{IEEEkeywords}
Quantum Shannon Theory, Continuous Variable systems, Non-markovian noise, Memory effects in quantum communication, Long-distance quantum communication
\end{IEEEkeywords}

\section{Introduction}
The potential applications of a global quantum internet~\cite{quantum_internet_Wehner,Pirandola20} include secure communication~\cite{bennett1984quantum}, efficient entanglement and qubits distribution, enhanced quantum sensing capabilities~\cite{Sidhu}, distributed and blind quantum computing~\cite{Distributed_QC,secure_access_qinternet}, as well as groundbreaking experiments in fundamental physics~\cite{Sidhu}. These applications heavily rely on establishing long-distance quantum communication across optical fibres or free-space links. However, the vulnerability of optical signals to noise poses a significant obstacle to achieving this goal. To overcome this challenge, one possible 
solution is to exploit quantum repeaters~\cite{repeaters, Munro2015} along the communication line. However, current implementations of quantum repeaters remain in the realm of proof-of-principle experiments. In addition, quantum repeaters will likely impose substantial demands on technology resources, making them potentially expensive. 
Consequently, a pressing problem is to develop protocols that can operate without --- or with a modest number of --- quantum repeaters. Recently, in~\cite{Die-Hard-2-PRL,Die-Hard-2-PRA} a theoretical proof-of-principle solution to this problem that takes advantage of \emph{memory effects}~\cite{memory-review, Dynamical-Model} in optical fibres has been proposed. 

Memory effects arise when signals are fed into the optical fibre separated by a sufficiently short time interval~\cite{memory-review, Dynamical-Model,Banaszek-Memory,Ball-Memory}. In this scenario, the noise within the fibre is influenced by prior input signals, exhibiting a form of "memory." Consequently, the commonly assumed memoryless (iid) paradigm, which posits that noise acts uniformly and independently on each signal, becomes invalid.
The exploration of memory effects in optical fibres has been extensively addressed in~\cite{Memory1, Memory2, Memory3, Die-Hard-2-PRA, Die-Hard-2-PRL}. 
While these studies capture  fundamental aspects of the problem, especially in the context of "short" communication lines, a more comprehensive analysis reveals limitations when extending these findings to practical configurations.
Specifically, in scenarios involving "long" communication lines where consecutive signals continually interact throughout the entire length of the fibre, the existing models become inadequate, as they do not accommodate two essential requirements:
\begin{itemize}
    \item \emph{Property~1}: 
    No information can be transmitted across the fibre if its transmissivity is precisely zero;
    \item \emph{Property~2}: The model must remain consistent when optical fibres are composed. In other words, the combination of two models corresponding to fibres of lengths $L_1$ and $L_2$ should yield a model corresponding to a fibre of length $L_1+L_2$.
\end{itemize}
In this work, we overcome the limitations of~\cite{Memory1,Memory2,Memory3,Die-Hard-2-PRA,Die-Hard-2-PRL} by developing a new model of optical fibres with memory effects that accurately encapsulates the essential properties of long optical fibres, such as those outlined in Properties~1 and~2 above. 
Our main result is the calculation of the exact value of the quantum capacity $Q$, the two-way quantum capacity $Q_2$, and the secret-key capacity $K$~\cite{MARK, Sumeet_book} of optical fibres with memory effects in the absence of thermal noise. We recall that:
\begin{itemize}
    \item the \emph{quantum capacity}~\cite{MARK, Sumeet_book}, denoted as $Q$, is the maximum achievable rate of qubits that can be reliably transmitted through the channel in the limit of infinite channel uses. The rate of qubits is the ratio between the number of qubits transmitted and the number of uses of the channel.
    \item the \emph{two-way quantum capacity}~\cite[Chapters 14 and 15]{Sumeet_book}, denoted as $Q_2$, is the maximum achievable rate of qubits that can be reliably transmitted through the channel in the limit of infinite channel uses by assuming that the sender Alice and the receiver Bob have free access to a public, noiseless, two-way classical communication line. Note that $Q_2$ is also the proper capacity for entanglement distribution. Indeed, an ebit (i.e.~a two-qubit maximally entangled states) can teleport an arbitrary qubit~\cite{teleportation}, and the ability of transmitting an arbitrary qubit implies the ability of distributing an ebit.
    \item the \emph{secret-key capacity}~\cite[Chapters 14 and 15]{Sumeet_book}, denoted as $K$, is the maximum achievable rate of secret-key bits that can be reliably transmitted through the channel in the limit of infinite channel uses by assuming that the sender Alice and the receiver Bob have free access to a public, noiseless, two-way classical communication line. 
\end{itemize}
We refer to Section~\ref{sec_def_capacity} for rigorous definitions of capacities (see also~\cite{Sumeet_book,memory-review}). According to our analysis, for any arbitrarily low non-zero value of the transmissivity $\lambda\in(0,1]$ of the fibre and for any arbitrarily large value of its associated thermal noise $\nu\ge0$, there exists a non-zero time interval separating successive signals below which $Q$, $Q_2$, and $K$ all become strictly positive. Specifically, this shows that the presence of memory effects facilitates qubit distribution ($Q>0$) even when $\lambda\in(0,\frac{1}{2}]$, which is not achievable with memoryless optical fibres. Additionally, memory effects enable two-way entanglement distribution ($Q_2>0$) and quantum key distribution ($K>0$) even when $\lambda\in(0,\frac{\nu}{\nu+1}]$, surpassing the capabilities of memoryless optical fibres. Our model therefore supports in a theoretically consistent way the claim that memory effects can be exploited to achieve long-distance quantum communication even in the presence of substantial loss. Since loss is the main factor limiting current quantum communication technologies, the potential of memory effects should be explored thoroughly.

The structure of the paper is as follows. In Section~\ref{Sec_preliminaries}, we 
briefly review some preliminary notions necessary for stating our results. In Section~\ref{sec_local_interaction_model}, we review a previous model of optical fibres with memory effects, termed as \emph{localised interaction model} (LIM). In Section~\ref{Section_DIM}, we introduce a new model of optical fibre with memory effects, termed as \emph{delocalised interaction model} (DIM), which addresses the limitations observed in the localised interaction model. In Section~\ref{Sec_Main_results}, we outline and discuss our main results. In Section~\ref{sec_sm_notation}, we establish notation and introduce basic concepts crucial for proving our results. In Section~\ref{SM_sec_def_model}, we give a detailed examination of the delocalised interaction model. In Section~\ref{sec_solution_property_sm}, we precisely explain the limitations of the local interaction model and elucidates how they are resolved by the delocalised interaction model. In Section~\ref{section_toeplitz}, we review the \emph{Avram--Parter's theorem}~\cite{Avram1988,Parter1986} and, by leveraging it, we establish a matrix-analysis result that plays a crucial role in proving our main results. Finally, in Section~\ref{Sec_capacities_DIM}, we present the proof of our main results.

\section{Preliminaries}\label{Sec_preliminaries} The signals transmitted along an optical fibre can be described in terms of an ordered array of 
localised e.m.~pulses $S_1$, $S_2$, $\cdots$, $S_n$ that travel along the fibre separated by a fixed delay time $\delta t$, each of which associated with independent annihilation operators $a_1$, $a_2$, $\cdots$, $a_n$~\cite{Caves}. 
The memoryless regime is reached when $\delta t$ is sufficiently large to prevent cross-talking among the transmitted signals: accordingly they will experience the same type of noise which, under very general conditions, is typically identified  with a \emph{thermal attenuator} channel $\mathcal{E}_{\lambda,\nu}$. 
This is a continuous-variable~\cite{BUCCO} single-mode quantum channel characterised by two parameters: $\lambda\in [0,1]$, which represents the transmissivity of the fibre 
, and $\nu\in [0,\infty)$, which quantifies the thermal noise 
added by the environment. In the limit of zero temperature $\nu=0$, the transformation is conventionally referred to as the \emph{pure-loss channel}.
Mathematically, the action of $\mathcal{E}_{\lambda,\nu}$ on a generic input state $\rho$ of the $i$th signal can be expressed as a beam splitter mixing the latter with a dedicated local bosonic bath $E_i$, initialised in the thermal state $\tau_\nu$ with mean photon number $\nu$. 
In formula, this can be written as $\mathcal{E}_{\lambda,\nu}(\rho)\coloneqq\Tr_{E_i}\left[U_\lambda \big(\rho \otimes\tau_\nu \big) {U_\lambda}^\dagger\right]$, 
where $U_\lambda \coloneqq e^{\arccos(\sqrt{\lambda})(a_i^\dagger b_i-a_i\, b_i^\dagger)}$ is the beam splitter unitary, $a_i$ and $b_i$ are the annihilation operators of $S_i$ and $E_i$, and $\Tr_{E_i}$ represents the partial trace w.r.t.~$E_i$. 
The quantum capacity $Q$~\cite{holwer, LossyECEAC1, LossyECEAC2}, two-way quantum capacity $Q_2$~\cite{PLOB}, and secret-key capacity $K$~\cite{PLOB} of the pure-loss channel $\mathcal{E}_{\lambda,0}$ have been determined exactly. In contrast, only bounds are known for the capacities $Q$~\cite{PLOB, Rosati2018, Sharma2018, Noh2019,holwer, Noh2020,fanizza2021estimating}, $Q_2$~\cite{PLOB,Pirandola2009,Ottaviani_new_lower,lower-bound}, and $K$~\cite{PLOB,Pirandola2009,Ottaviani_new_lower,lower-bound} of the thermal attenuator $\mathcal{E}_{\lambda,\nu}$. In particular, it is known that the quantum capacity of the thermal attenuator $Q(\mathcal{E}_{\lambda,\nu})$ vanishes if the transmissivity of the fibre falls below the critical value of $\lambda\le\frac{1}{2}$, and, additionally, the converse holds for $\nu=0$. 
It is also known that the two-way quantum capacity $Q_2(\mathcal{E}_{\lambda,\nu})$ and the secret-key capacity $K(\mathcal{E}_{\lambda,\nu})$ of the thermal attenuator vanish if and only if the transmissivity falls below the critical value of $\lambda\le\frac{\nu}{\nu+1}$~\cite{lower-bound}.
Since typically the transmissivity of an optical fibre decreases exponentially with its length, under the memoryless assumption there are strong limitations on the distance at which it is possible to perform qubit distribution $(Q>0)$, two-way entanglement distribution $(Q_2>0)$, and quantum key distribution $(K>0)$ without relying on quantum repeaters.

\section{Localised Interaction Model}\label{sec_local_interaction_model}
Early attempts to incorporate memory effects into optical fibres were made in~\cite{Memory1, Memory2, Memory3} with a model which from now on we shall refer to as "Localised Interaction Model" (LIM). In these works, following the approach outlined in~\cite{Dynamical-Model}, intra-signal couplings are induced 
by an ordered sequence of collisional events in which each  
 transmitted pulse interacts  with a common environment 
$E$ (see upper panel of Fig.~\ref{fig:new_model_main}). The latter is characterised by a resetting mechanism that endeavours to restore $E$ to its initial configuration
over a thermalization time $t_E$. Consequently,
when the time delay $\delta t$ separating two successive input signals exceeds $t_E$, each pulse encounters the same environmental state, rendering the communication effectively memoryless.
Conversely, when $\delta t$ is smaller than or comparable to $t_E$, after colliding with one of the signals, the environment $E$ does not have sufficient time to revert to $\tau_\nu$ and functions as a mediator for pulse interactions.
LIM emulates this intricate dynamics of the $n$ input signals via the $n$-mode quantum channel $\Phi_{\lambda,\mu,\nu}^{(1,n)}$ illustrated in Fig.~\ref{fig:griglia_main}(b), where the parameter $\mu\in[0,1]$ serves as the "memory parameter" of the model. It quantifies what is the fraction of the energy lost by an input signal that can potentially be absorbed by the subsequent ones. 
Ranging from $\mu=0$ (where $\Phi_{\lambda,0,\nu}^{(1,n)}$ reduces to $n$fold memoryless channels ${\mathcal{E}}_{\lambda,\nu}^{\otimes n}$) to $\mu=1$ (full memory), $\mu$ effectively encapsulates the interplay between the time interval $\delta t$ separating two consecutive input signals and the thermalization time $t_E$. For instance, one plausible expression for $\mu$ might be $\mu= \exp(-\delta t/t_E)$.

{Mathematically, the channel $\Phi_{\lambda,\mu,\nu}^{(1,n)}$ acts on any $n$-mode input state $\rho_{S_1 \ldots S_n}$ as
\bb\label{def_Phi_1}
    \Phi_{\lambda,\mu,\nu}^{(1,n)}(\rho_{S_1\ldots S_n})&\coloneqq\Tr_{E}\big[ \mathcal{U}_{\lambda}^{S_n E}\circ \mathcal{E}^{(E)}_{\mu,\nu}\circ\mathcal{U}_{\lambda}^{S_{n-1} E}\circ \mathcal{U}_{\lambda}^{S_{n-1} E}\circ \mathcal{E}^{(E)}_{\mu,\nu}\\
    &\quad\circ\ldots \circ\mathcal{U}_{\lambda}^{S_1 E}\circ \mathcal{E}^{(E)}_{\mu,\nu}\left( \rho_{S_1\ldots S_n}\otimes \tau_\nu^{(E)} \right)\big]\,.
\ee
Here, for each $i=1,\ldots,n$, the channel $\mathcal{U}_{\lambda}^{S_{i} E}(\cdot)\coloneqq U_{\lambda}^{(S_iE)}\cdot (U_{\lambda}^{(S_iE)})^\dagger$ denotes the beam splitter unitary channel of transmissivity $\lambda$ acting on the $i$th signal $S_i$ and on the environment $E$; the channel $\mathcal{E}^{(E)}_{\mu,\nu}$ is the thermal attenuator of transmissivity $\mu$ and thermal noise $\nu$ acting on $E$; the state $\tau_\nu^{(E)}$ is the thermal state of mean photon number $\nu$ on $E$. In \eqref{def_Phi_1}, the thermal attenuator $\mathcal{E}^{(E)}_{\mu,\nu}$ models the thermalisation process affecting $E$, which endeavours to restore the state of $E$ to the thermal state $\tau_\nu^{(E)}$. Note that, in \eqref{def_Phi_1}, we could have omitted the rightmost thermal attenuator $\mathcal{E}^{(E)}_{\mu,\nu}$, as the latter leaves the thermal state unchanged, i.e.~$\mathcal{E}^{(E)}_{\mu,\nu}(\tau_\nu^{(E)})=\tau_\nu^{(E)}$. Hence, by writing each thermal attenuator $\mathcal{E}^{(E)}_{\mu,\nu}$ in terms of beam splitter and thermal state, one obtains the interferometric representation of $\Phi_{\lambda,\mu,\nu}^{(1,n)}$ depicted in Fig.~\ref{fig:griglia_main}(b). The physical interpretation of the mathematical definition of $\Phi_{\lambda,\mu,\nu}^{(1,n)}$ in \eqref{def_Phi_1} is as follows: the signal $S_1$ interacts with the environment $E$, which is initialised in the thermal state $\tau_\nu^{(E)}$, via the beam splitter evolution $\mathcal{U}_{\lambda}^{S_{1} E}$; right after such an interaction, $E$ is affected by the thermalisation process $\mathcal{E}^{(E)}_{\mu,\nu}$; the signal $S_2$ interacts with $E$ via the beam splitter evolution $\mathcal{U}_{\lambda}^{S_{2} E}$; right after such an interaction, the thermalisation process $\mathcal{E}^{(E)}_{\mu,\nu}$ acts once again, and so and so forth for each signal $S_i$ with $i=3,\ldots n$. After all these interactions, the final state of the signals $S_1\ldots S_n$ (tracing out the environment $E$) is precisely the output of the channel $\Phi_{\lambda,\mu,\nu}^{(1,n)}$.
}
\begin{figure}[t]
	\centering
	\includegraphics[width=1\linewidth]{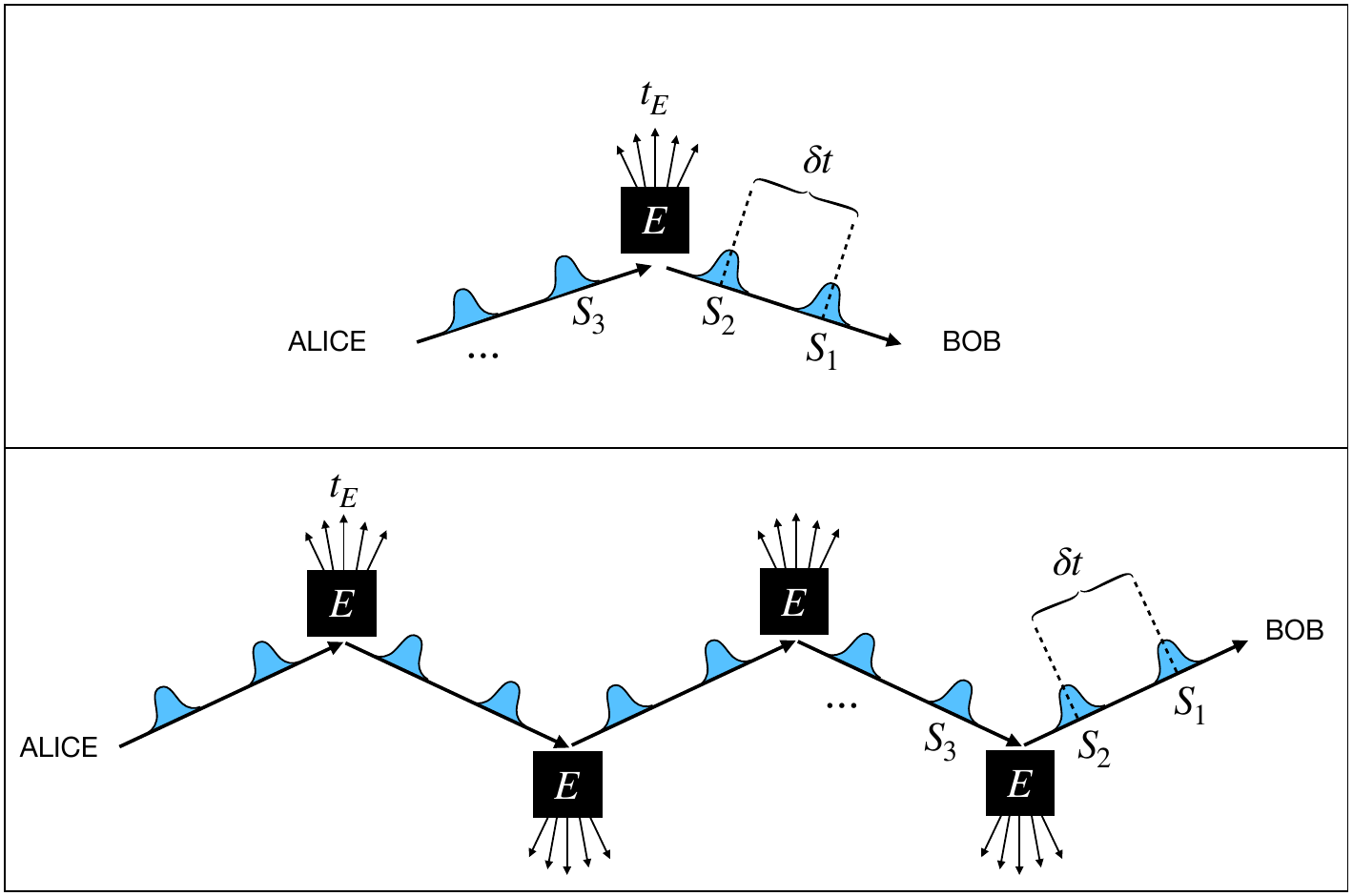}
	\caption{Pictorial representation of the mechanism which is responsible for the intra-signal interactions between a sequence of pulses $S_1$, $S_2$, $\cdots$ that travel along an optical fibre. In the LIM~\cite{Memory1, Memory2, Memory3} such couplings occurs in a single location, through the mediation of a single  common environment (upper panel). A more realistic description of the effect would instead allow for multiple cross-talks events distributed over the entire length of the fibre (bottom panel). 
	}
    \label{fig:new_model_main}
\end{figure}

\section{Delocalised Interaction Model}\label{Section_DIM} Here, we present an improved version of LIM, which we dub "Delocalised Interaction Model" (DIM). This model can be used to describe memory effects in long optical fibres, where transmitted signals have a chance of experiencing multiple interactions along the entire length of the communication line. 
The idea behind the DIM is relatively simple: a spatially homogeneous optical fibre of finite length $L$ is seen as the composition of $M$ identical optical fibres of length $L/M$.  
In the limit of large $M$, such fibres are sufficiently short that they can be effectively described via LIM mappings. The resulting input-output relation of the global fibre is hence computed by first properly concatenating such individual terms and then taking the continuum limit $M\to\infty$. 
Specifically, since the transmissivity $\lambda$ is exponentially decreasing in the fibre length, each of the $M$ components of the fibre has transmissivity $\lambda^{1/M}$. Each component is thus associated with the LIM map $\Phi_{\lambda^{1/M},\mu,\nu}^{(1,n)}$, characterised by the same local temperature parameter $\nu$ and by the same memory parameter~$\mu$ of the global fibre. 
The resulting input-output DIM map is hence provided by the $M$-fold concatenation 
\begin{eqnarray} \label{concatenated} 
\Phi_{\lambda,\mu,\nu}^{(M,n)} \coloneqq \left(  \Phi_{\lambda^{1/M},\mu,\nu}^{(1,n)}\right)^M = \underbrace{ \Phi_{\lambda^{1/M},\mu,\nu}^{(1,n)}  \circ \cdots \circ \Phi_{\lambda^{1/M},\mu,\nu}^{(1,n)}}_{M} \;, 
\end{eqnarray} 
whose interferometric representation is given in Fig.~\ref{fig:griglia_main}(c). 
As for the LIM, the DIM reduces to the memoryless thermal attenuator when the memory parameter $\mu$ vanishes, as shown in Section~\ref{SM_sec_def_model}. 
If $\mu>0$, instead, the local environments are not always in the thermal state $\tau_\nu$, and their state depends on all previous input signals, i.e.~the model exhibits memory effects. As we show in
Theorem~\ref{thm_strong_sm} in Section~\ref{SM_sec_def_model}, 
the mapping $\Phi_{\lambda,\mu,\nu}^{(M,n)}$ \emph{strongly converges} for $M\to\infty$ to an $n$-mode quantum channel, which we denote as $\Phi_{\lambda,\mu,\nu}^{(n)}$.  
The family of quantum channels $\{\Phi_{\lambda,\mu,\nu}^{(n)}\}_{n\in\N}$, which forms a \emph{quantum memory channel}~\cite{memory-review}, characterises our model of optical fibre with memory effects. By analysing such a channel, in Section~\ref{sec_solution_property_sm} we show that the DIM satisfies the above Properties~1--2. We provide a more detailed explanation of the DIM in Sections~\ref{SM_sec_def_model}--\ref{sec_solution_property_sm} below.
 \begin{figure*}[t]
	\includegraphics[width=0.95 \linewidth]{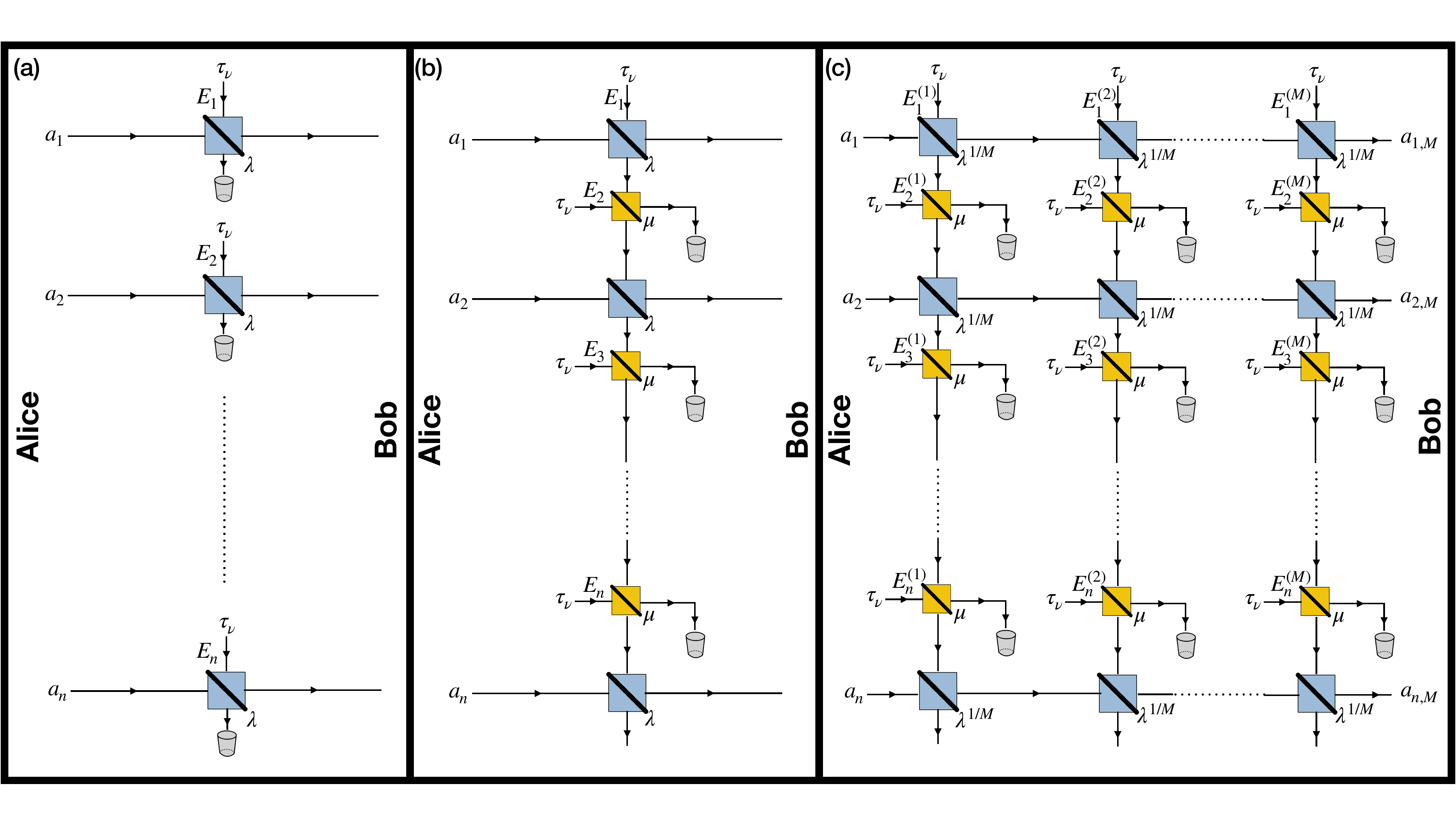}
	\caption{Interferometric representation of the $n$-mode quantum channels which describe the transmission of e.m.~pulses along an optical fibre of transmissivity $\lambda$ and local temperature $\nu$ for different memory configurations. Panel (a) memoryless regime: here each one of the input modes $a_1$, $a_2$, $\cdots$, $a_n$ evolves independently from the others, undergoing to the same thermal attenuation mapping $\mathcal{E}_{\lambda,\nu}$ induced by beam splitter couplings with the local thermal state $\tau_\nu$ of the local environments $E_1$, $E_2$, $\cdots$, $E_n$. Panel (b) LIM channel 
	 $\Phi_{\lambda,\mu,\nu}^{(1,n)}$: in this case signal cross-talks are
mediated by the yellow beam splitters of transmissivity~$\mu$, which model the resetting mechanism that endeavours to restore $E_1$ to its initial configuration $\tau_\nu$. These allow the photons lost by the $i$th input mode to emerge in the output of the subsequent ones by letting them to interfere with the local bath  $E_{i+1}$, $E_{i+2}$, $\cdots$, $E_n$. Setting $\mu=0$ the LIM reduces to the memoryless case of panel (a), i.e.\ $\Phi_{\lambda,0,\nu}^{(1,n)}=\mathcal{E}_{\lambda,\nu}^{\otimes n}$.
	Panel (c) DIM channel
	$\Phi_{\lambda,\mu,\nu}^{(M,n)}$: the fibre is described  by the concatenation~(\ref{concatenated}) of 
	$M$ LIM channels $\Phi_{\lambda^{1/M},\mu,\nu}^{(1,n)}$ of transmissivity $\lambda^{1/M}$.   In the Heisenberg representation, $\Phi_{\lambda,\mu,\nu}^{(M,n)}$ maps the annihilation operator $a_i$ (depicted on the left) of the $i$th input signal into the annihilation operators $a_{i,M}$ (depicted on the right) of the $i$th output signal for all $i=1,2,\ldots,n$. For $j=1,\cdots, M$,  the symbols $E^{(j)}_{1}$, $E^{(j)}_{2}$, $\cdots$, $E^{(j)}_{n}$ 
represent the single-mode environments associated with the $j$th infinitesimal optical fibre element: all of them are initialised in the same thermal state 
$\tau_\nu$. }
\label{fig:griglia_main}
\end{figure*}

\section{Overview of main results} \label{Sec_Main_results} 
In the absence of memory effects ($\mu=0$), it is known that no quantum communication tasks can be achieved when the transmissivity is sufficiently low ($Q=0$ for $\lambda\le \frac{1}{2}$, and $Q_2=K=0$ for $\lambda\le \frac{\nu}{\nu+1}$). However, in the forthcoming Theorem~\ref{main_pos_cap_thm_main}, we establish that for any transmissivity $\lambda>0$ and thermal noise $\nu\ge0$, there exists a critical value of the memory parameter $\mu$ above which it becomes possible to achieve qubit distribution ($Q>0$), entanglement distribution ($Q_2>0$), and secret-key distribution ($K>0$). This result is particularly intriguing as it demonstrates that --- at least within our model --- memory effects provide an advantage, enabling quantum communication tasks to be performed in highly noisy regimes where it was previously considered impossible without the use of quantum repeaters. Let $Q(\lambda,\mu,\nu)$, $Q_2(\lambda,\mu,\nu)$, and $K(\lambda,\mu,\nu)$ be the quantum capacity, two-way quantum capacity, and secret key capacity, respectively, of the DIM quantum memory channel $\{\Phi_{\lambda,\mu,\nu}^{(n)}\}_{n\in\N}$.
\begin{thm}\label{main_pos_cap_thm_main}
    Let $\lambda\in(0,1)$ and $\mu\in[0,1)$. 
    In the absence of thermal noise ($\nu=0$), it holds that
    \bb\label{main_cond_strictly_q_main}
Q(\lambda,\mu,\nu=0)>0\,\Longleftrightarrow\,\sqrt{\mu}>\frac{ \log_2\left(\frac{1}{\lambda}\right)-1 }{  \log_2\left(\frac{1}{\lambda}\right)+1  }\,.
    \ee
    In addition, for all $\nu\ge0$ it holds that
	\bb\label{main_cond_strictly_q2_main} K(\lambda,\mu,\nu),\,Q_2(\lambda,\mu,\nu)>0\Longleftrightarrow 
	 &\sqrt{\mu}> \frac{ \ln\left(\frac{1}{\lambda}\right)-\ln(1+\frac{1}{\nu}) }{  \ln\left(\frac{1}{\lambda}\right)+\ln(1+\frac{1}{\nu})  }\,.
	\ee
\end{thm}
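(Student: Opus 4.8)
The plan is to reduce the memory channel to a continuum of independent single-mode thermal attenuators by a Fourier diagonalisation in the signal index, and then invoke the single-mode positivity thresholds already recorded in~\eqref{q_equal_zero} and~\eqref{q2_equal_zero}. First I would fix the Gaussian description of $\Phi_{\lambda,\mu,\nu}^{(n)}$ in the Heisenberg picture obtained from the continuum limit~\eqref{concatenated}: the output signal operators are a linear image of the input signal operators together with the environment operators, the signal-to-signal block being a lower-triangular Toeplitz matrix $T^{(n)}$ (causality forces $a_i$ to affect only the signals $j\ge i$, and spatial homogeneity of the fibre makes the entries depend only on $j-i$, up to boundary terms). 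The crucial structural observation is that the environments form an iid family of thermal states $\tau_\nu$, which is invariant under any passive beam-splitter network; hence the transformation that diagonalises $T^{(n)}$ acts trivially on them and preserves the environmental photon number $\nu$.

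Next I would diagonalise $T^{(n)}$ in the many-uses limit $n\to\infty$ via the discrete Fourier transform, using the asymptotic spectral theory of Toeplitz operators: the symbol of $T^{(n)}$ evaluated on the unit circle $z=e^{i\phi}$ is obtained by solving the first-order ODE in the fibre coordinate $s\in[0,1]$ that governs the continuum concatenation, and I expect it to yield a frequency-dependent effective attenuation
\bb
\ln\frac{1}{\tau(\phi)} \,=\, \ln\frac{1}{\lambda}\cdot\frac{1-\mu}{1-2\sqrt{\mu}\cos\phi+\mu}\,.
\ee
Since the environments stay thermal with photon number $\nu$, the whole channel is then (asymptotically) unitarily equivalent to the direct integral $\int^{\oplus}\mathcal{E}_{\tau(\phi),\nu}\,d\phi$ of single-mode thermal attenuators. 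The most favourable frequency is the staggered mode $\phi=\pi$, at which the attenuation is minimal:
\bb
\tau_{\max} \,=\, \sup_\phi\,\tau(\phi) \,=\, \tau(\pi) \,=\, \lambda^{\frac{1-\sqrt{\mu}}{1+\sqrt{\mu}}}\,.
\ee

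With this decomposition the two equivalences follow from the single-mode facts. For achievability ($\Leftarrow$) I would encode only in a narrow band of Fourier modes around $\phi=\pi$, where the channel is arbitrarily close to $\mathcal{E}_{\tau_{\max},\nu}$ used in parallel; by~\eqref{q_equal_zero} this has $Q>0$ once $\tau_{\max}>\tfrac12$ (at $\nu=0$), and by~\eqref{q2_equal_zero} together with the lower bound of~\cite{lower-bound} it has $Q_2,K>0$ once $\tau_{\max}>\frac{\nu}{\nu+1}$, and the positive single-mode coherent information (resp. two-way bound) transfers to the memory channel. For the converse ($\Rightarrow$), when $\tau_{\max}$ sits at or below the relevant threshold every mode $\mathcal{E}_{\tau(\phi),\nu}$ lies in the corresponding zero-capacity regime of~\eqref{q_equal_zero}, \eqref{q2_equal_zero}; since the underlying certificates (antidegradability for $Q$ at $\nu=0$, and the entanglement-breaking/PPT property characterising $\tau(\phi)\le\frac{\nu}{\nu+1}$ for $Q_2,K$) are preserved by the direct-integral composition, the full channel inherits the vanishing capacity. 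Rearranging $\tau_{\max}>\tfrac12$ and $\tau_{\max}>\frac{\nu}{\nu+1}$ for $\sqrt{\mu}$ then reproduces exactly~\eqref{main_cond_strictly_q_main} and~\eqref{main_cond_strictly_q2_main}.

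The main obstacle I anticipate is rigour in the double limit and in the converse. Interchanging the continuum limit $M\to\infty$ (available as strong convergence from Theorem~\ref{thm_strong_sm}) with the many-uses limit $n\to\infty$, and controlling the Toeplitz boundary corrections (Szeg\H{o}-type asymptotics) in infinite dimension under an energy constraint, must be done so that the per-mode parameters $\tau(\phi)$ are the genuine effective channel and not an artefact of the approximation. Even more delicate is the converse: positivity is comparatively easy because a single good frequency suffices, whereas proving the \emph{exact} vanishing of $Q$ (resp. $Q_2,K$) requires a structural zero-capacity certificate to hold for the entire finite-$n$ channel simultaneously and to survive both limits, rather than merely mode-by-mode in the asymptotic picture.
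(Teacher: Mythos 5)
Your overall strategy coincides with the paper's: diagonalise the (lower-triangular, non-normal) Toeplitz signal block by passive unitaries, observe that the iid thermal environments keep the reduced single-mode channels thermal with the same $\nu$, identify the effective transmissivity via Toeplitz asymptotics, and read off the thresholds at the best frequency. Your symbol $\tau(\phi)=\lambda^{(1-\mu)/(1+\mu-2\sqrt{\mu}\cos\phi)}$ is exactly the paper's $\eta^{(\lambda,\mu)}(x)$ with $\phi=x/2$, and $\tau_{\max}=\lambda^{(1-\sqrt{\mu})/(1+\sqrt{\mu})}$ with the resulting algebra is correct. Your achievability argument (encode in a narrow band of modes near $\phi=\pi$, which carries a constant fraction of the uses) is the paper's argument, made rigorous there by a pointwise (not merely distributional) Avram--Parter-type statement for the singular values (Theorems~\ref{thm_conv_main_sm} and~\ref{Thm_consequence_avram_parter}); note that because the matrix is triangular one genuinely needs singular-value (Avram--Parter) rather than eigenvalue (Szeg\H{o}) asymptotics, as you implicitly use.

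The one place where you diverge — and where your proposal has a gap you yourself flag but do not close — is the converse. You propose to propagate zero-capacity certificates (antidegradability for $Q$ at $\nu=0$, entanglement breaking for $Q_2,K$) through the direct integral, which would require controlling the finite-$n$ channel mode by mode. The paper avoids certificates entirely: Lemma~\ref{lemmino_LL} shows that for \emph{every finite} $n$ all singular values of the Toeplitz section are bounded by $\sup_x s(x)$, i.e.\ $\eta_i^{(n,\lambda,\mu)}\le\eta^{(\lambda,\mu)}(2\pi)$ exactly, with no boundary corrections to worry about. Combined with the composition rule $\mathcal{E}_{\lambda_1,\nu}\circ\mathcal{E}_{\lambda_2,\nu}=\mathcal{E}_{\lambda_1\lambda_2,\nu}$, each factor $\mathcal{E}_{\eta_i^{(n,\lambda,\mu)},\nu}$ is a post-processing of $\mathcal{E}_{\eta^{(\lambda,\mu)}(2\pi),\nu}$, so the whole memory channel is simulated by iid uses of the worst-case attenuator and data processing gives $C(\lambda,\mu,\nu)\le C(\mathcal{E}_{\eta^{(\lambda,\mu)}(2\pi),\nu})$; the single-mode zero-capacity results in~\eqref{q_equal_zero} and~\eqref{q2_equal_zero} then finish the converse. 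This simulation route is both what makes the converse rigorous at finite $n$ and what spares you from proving that your certificates survive the double limit; if you want to complete your version, the uniform symbol bound on the finite-section singular values is the missing lemma.
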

The proof of Theorem~\ref{main_pos_cap_thm_main} is provided in Theorem~\ref{pos_cap_thm_sm} in Section~\ref{Sec_capacities_DIM}.

By relating $\mu$ to the temporal interval $\delta t$ between subsequent input signals (e.g.~$\mu=e^{-\delta t/t_E}$ with $t_E$ being the thermalisation timescale), Theorem~\ref{main_pos_cap_thm_main} can also be expressed in terms of the critical $\delta t$ below which the capacities $Q$, $Q_2$, and $K$ all become strictly positive.

The forthcoming Theorem~\ref{main_Cap_absence_thermal_main} provides the exact solution for the capacities in the absence of thermal noise. It is worth noting that in the presence of thermal noise ($\nu > 0$), the capacities remain unknown even in the absence of memory effects ($\mu = 0$), as the capacities of the thermal attenuator are currently unknown.
\begin{thm}\label{main_Cap_absence_thermal_main}
    Let $\lambda\in(0,1)$ and $\mu\in[0,1)$. 
    In absence of thermal noise ($\nu=0$), it holds that
    \bb\label{main_formula_cap_nu_0}
        Q\left( \lambda,\mu,\nu=0\right)&=\int_{0}^{2\pi}\frac{\mathrm{d}x}{2\pi}\, \max\left\{0,\log_2 \left(\frac{\eta^{(\lambda,\mu)}(x)}{1-\eta^{(\lambda,\mu)}(x)}\right)\right\}\,,\\
        Q_2\left( \lambda,\mu,\nu=0 \right)&=K\left( \lambda,\mu,\nu=0 \right)\\&=\int_{0}^{2\pi}\frac{\mathrm{d}x}{2\pi}\,\log_2\left(\frac{1}{1-\eta^{(\lambda,\mu)}(x)}\right)\,,
    \ee
    where
    \bb\label{main_formula_effect_transm_main}
        \eta^{(\lambda,\mu)}(x) \coloneqq \lambda^{\frac{1-\mu}{ 1+\mu-2\sqrt{\mu}\cos(x/2) }}\quad\forall \,x\in[0,2\pi]\,.
    \ee
\end{thm}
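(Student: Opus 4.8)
The plan is to exploit the translation invariance of the Delocalised Interaction Model along the signal index. The channel $\Phi_{\lambda,\mu,\nu}^{(n)}$ is Gaussian, and its Heisenberg action on the $n$ signal modes has a block-Toeplitz structure, since the same memory coupling links signal $i$ to signal $i+1$ for every $i$. For such shift-invariant Gaussian channels the natural move is to diagonalise in Fourier space: I would apply a discrete-Fourier-transform interferometer to Alice's $n$ input modes and, separately, to Bob's $n$ output modes, turning the coupled channel into a direct sum of independent single-mode channels labelled by a quasi-momentum $x$. At zero temperature every environment is in the vacuum, which is invariant under passive interferometers, so each decoupled single-mode channel is a \emph{pure-loss channel} $\mathcal{E}_{\eta^{(\lambda,\mu)}(x),0}$ whose transmissivity is exactly the Fourier symbol $\eta^{(\lambda,\mu)}(x)$. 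Because the Fourier interferometers are local invertible unitaries acting separately on Alice's and Bob's side, they leave $Q$, $Q_2$, $K$ unchanged, so the memory-channel capacities reduce to per-mode averages of the single-mode pure-loss capacities. Inserting the exact values recalled in the Preliminaries, $Q(\mathcal{E}_{\eta,0})=\max\{0,\log_2\frac{\eta}{1-\eta}\}$ and $Q_2(\mathcal{E}_{\eta,0})=K(\mathcal{E}_{\eta,0})=\log_2\frac{1}{1-\eta}$, and replacing the Riemann sum over the discrete quasi-momenta by $\int_0^{2\pi}\frac{\mathrm{d}x}{2\pi}$ as $n\to\infty$ produces exactly the three formulas of the statement.

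The first concrete computation is the symbol $\eta^{(\lambda,\mu)}(x)$ itself. I would write the Heisenberg-picture transfer matrix of a single LIM slice $\Phi_{\lambda^{1/M},\mu,\nu}^{(1,n)}$ acting on the signal amplitudes together with the memory line, then Fourier-transform over the signal index, so that the banded coupling of amplitude $\sqrt{\mu}$ between neighbouring signals becomes, for each quasi-momentum, multiplication by a scalar of modulus $\sqrt{\mu}$ and phase linear in the momentum, which in the normalisation of the statement is $\sqrt{\mu}\,e^{ix/2}$. This reduces each slice to a scalar single-mode attenuation with an $x$-dependent amplitude. Composing $M$ slices of elementary transmissivity $\lambda^{1/M}$ and letting $M\to\infty$ — legitimate by the strong convergence $\Phi_{\lambda,\mu,\nu}^{(M,n)}\to\Phi_{\lambda,\mu,\nu}^{(n)}$ of Theorem~\ref{thm_strong_sm} — turns the composition into the solution of a scalar linear ODE. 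Resumming the geometric series generated by the memory line yields the closed factor $1+\mu-2\sqrt{\mu}\cos(x/2)=|1-\sqrt{\mu}\,e^{ix/2}|^2$ in the denominator of the exponent, so that the accumulated attenuation is $\lambda$ raised to the power $\tfrac{1-\mu}{1+\mu-2\sqrt{\mu}\cos(x/2)}$, i.e.\ precisely $\eta^{(\lambda,\mu)}(x)$ of~(\ref{main_formula_effect_transm_main}).

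With the decoupling established, the capacity identities follow from additivity. For the quantum capacity I would use that pure-loss channels are degradable, so $Q$ is additive across the decoupled modes and the single-letter coherent-information value applies mode by mode; for $Q_2$ and $K$ the PLOB bound gives a single-letter, additive expression that is simultaneously an upper bound and achievable for each pure-loss channel. Achievability for the memory channel is then obtained by pulling codes for $\bigotimes_x \mathcal{E}_{\eta^{(\lambda,\mu)}(x),0}$ back through the local invertible Fourier interferometers, while the converse follows because the same interferometers convert any memory-channel protocol into a protocol over the decoupled channel without improving its performance. Averaging the single-mode values and passing to the continuum then recovers~(\ref{main_formula_cap_nu_0}).

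The hard part will be that the Fourier diagonalisation is exact only for \emph{cyclic} boundary conditions, whereas the physical DIM channel has a genuine non-circulant boundary where the memory line is initialised and terminated; for finite $n$ the transfer matrix is Toeplitz rather than circulant, so the decoupling holds only up to boundary corrections. The delicate step is therefore to show that the per-mode capacity of the true channel converges, as $n\to\infty$, to the Fourier-space integral, controlling the boundary terms by a Szeg\H{o}-type Toeplitz-versus-circulant asymptotic argument, or by sandwiching the true channel between two circulant channels whose symbols agree in the bulk and whose per-mode capacities converge to the same integral. One must also justify the interchange of the two limits — the $M\to\infty$ limit defining $\Phi_{\lambda,\mu,\nu}^{(n)}$ and the $n\to\infty$ limit defining the capacities — which is where the strong-convergence statement of Theorem~\ref{thm_strong_sm} together with continuity of the relevant capacity functionals carries the weight.
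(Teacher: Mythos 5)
Your proposal is correct in substance and follows the same overall strategy as the paper: reduce the memory channel by passive unitaries to a tensor product of independent attenuators, identify the limiting transmissivity profile with the Fourier symbol $\eta^{(\lambda,\mu)}(x)$ of the underlying Toeplitz transfer matrix, and conclude by additivity of $Q$, $Q_2$, $K$ for pure-loss channels plus a Riemann-sum limit. The one organisational difference is where the Toeplitz asymptotics enter. You diagonalise approximately by DFT (exact only for circulant matrices) and defer the Toeplitz-versus-circulant discrepancy to a boundary-correction argument; the paper instead performs the decoupling \emph{exactly} for every finite $n$ via the singular value decomposition of $\bar{A}^{(n,\lambda,\mu)}$ (Theorem~\ref{thm_strong_sm}), so that all the asymptotics are concentrated in a single statement about the ordered singular values of a lower-triangular Toeplitz matrix. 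That statement is precisely your acknowledged ``hard part,'' and it is where the real work lies: the standard Avram--Parter theorem only gives weak convergence of the empirical singular-value distribution, which is not enough to run the two-sided simulation argument. The paper proves a strengthening (Theorem~\ref{Thm_consequence_avram_parter}, via Lemma~\ref{lemma_consequence}) giving \emph{uniform} convergence of the ordered singular values $s_j^{(n)}\to s(2\pi j/n)$ on all but a vanishing fraction of indices, together with the uniform bound $s_n^{(n)}\le s(2\pi)$ (Lemma~\ref{lemmino_LL}), i.e.\ $\eta_i^{(n,\lambda,\mu)}\le\eta^{(\lambda,\mu)}(2\pi)<1$ for all $n$. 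These two ingredients, combined with the attenuator composition rule $\mathcal{E}_{\lambda_1,0}\circ\mathcal{E}_{\lambda_2,0}=\mathcal{E}_{\lambda_1\lambda_2,0}$ and monotonicity of the ordered transmissivities, let the paper sandwich the capacity between $\frac{1}{P}C(\bigotimes_p\mathcal{E}_{\eta^{(\lambda,\mu)}(2\pi p/P)\mp 1/P,0})$ from both sides (Theorem~\ref{Cap_absence_thermal_sm}); note that the Avram--Parter route (singular values) is forced here because $\bar{A}^{(n,\lambda,\mu)}$ is triangular and non-normal, so a plain Szeg\H{o}/eigenvalue argument would not apply. Your symbol computation via the Laguerre generating function with $w=\sqrt{\mu}\,e^{ix/2}$ does reproduce $|1-\sqrt{\mu}e^{ix/2}|^{2}$ in the exponent and matches \eqref{main_formula_effect_transm_main}, so modulo carrying out the uniform Toeplitz asymptotics your plan closes.
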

The proof of Theorem~\ref{main_Cap_absence_thermal_main} is provided in Theorem~\ref{Cap_absence_thermal_sm} in Section~\ref{Sec_capacities_DIM}.

In Fig.~\ref{fig_main}, we plot the capacities given in the above Theorem~\ref{main_Cap_absence_thermal_main}. If $\mu=0$ (i.e.~in the memoryless case), all these capacities are equal to the corresponding capacities of the pure-loss channel. Upon plotting the capacities in Fig.~\ref{fig_main}, one observes that for any $\lambda\in (0,1]$ all the capacities $Q\left(\lambda,\mu,\nu=0\right)$, $Q_2\left(\lambda,\mu,\nu=0\right)$, and $K\left(\lambda,\mu,\nu=0\right)$ are monotonically increasing in $\mu \in [0,1]$. Hence, this means that if the memory parameter $\mu$ increases (corresponding to a decrease in the time interval between consecutive signals), then quantum communication performances improve. 

\begin{figure*}[t]
\begin{minipage}{0.5\linewidth}
\centering
\includegraphics[width=1.0\linewidth]{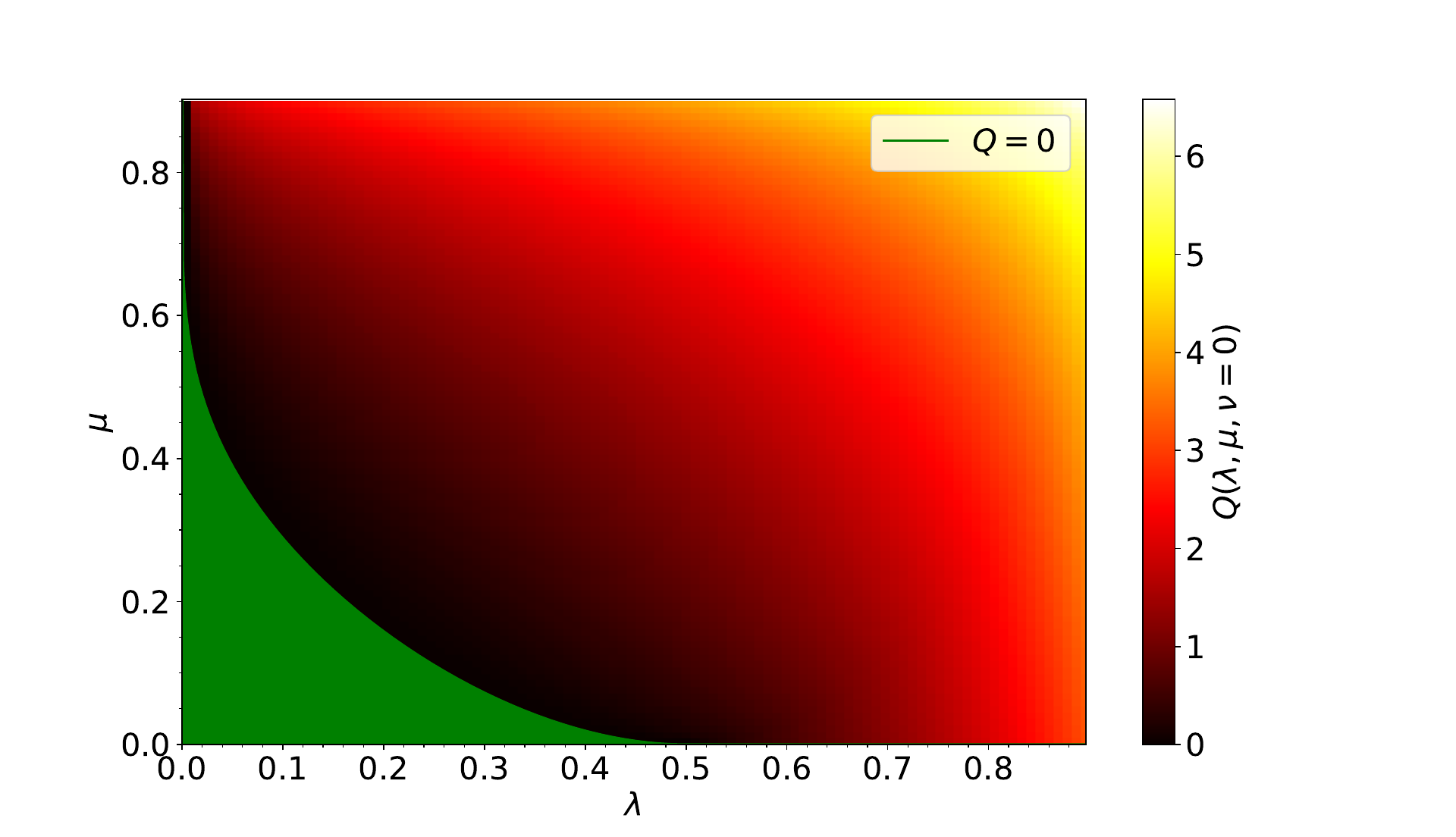} \\
(a)
\end{minipage}%
\begin{minipage}{0.5\linewidth}
\centering
\includegraphics[width=1.0\linewidth]{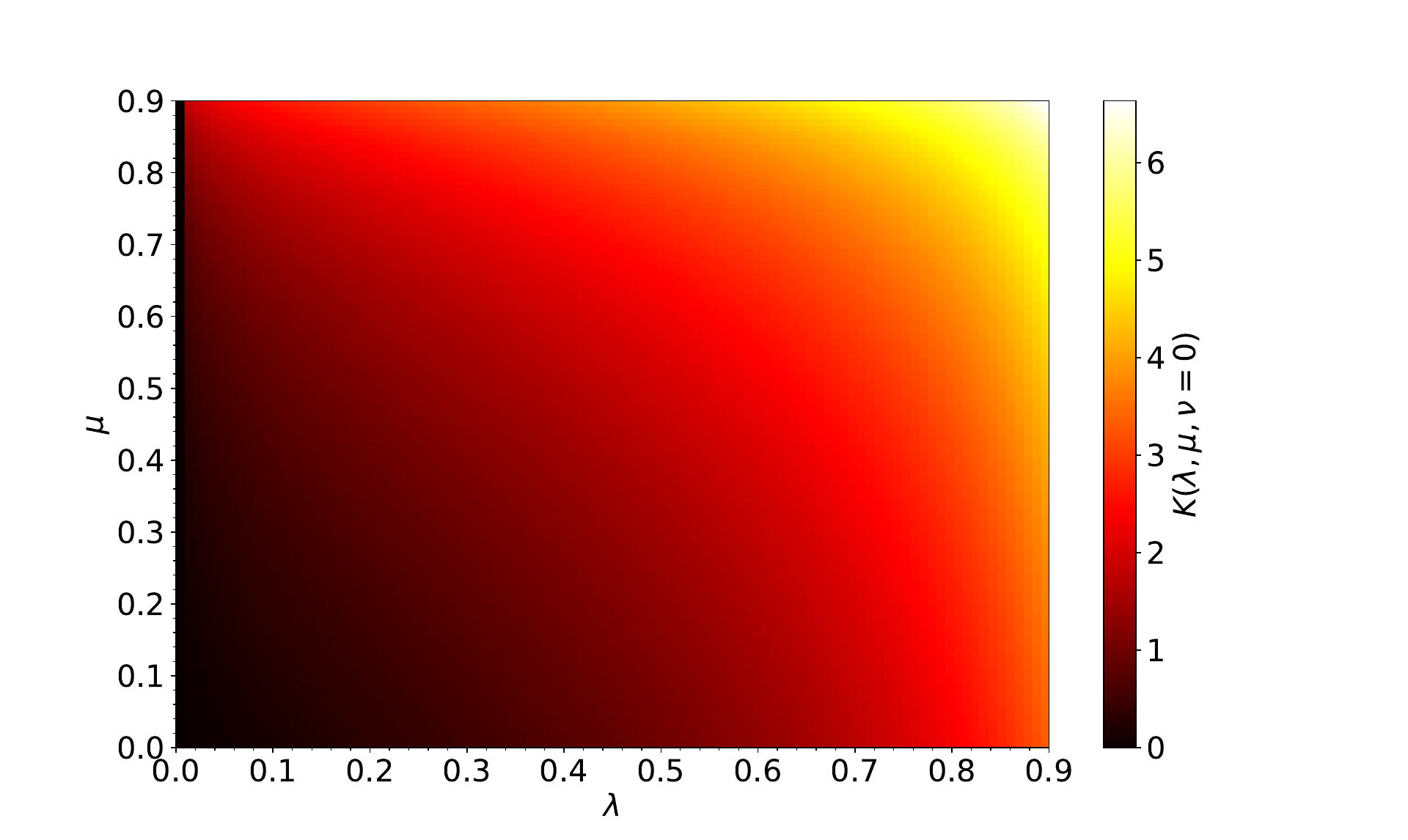} \\
(b)
\end{minipage}
\caption{In \textbf{(a)} we plot the quantum capacity $Q(\lambda,\mu,\nu=0)$ of the quantum memory channel $\{\Phi^{(n)}_{\lambda,\mu,\nu=0}\}_{n\in\N}$ with respect to $\lambda,\mu\in[0,0.9]$ for $\nu=0$. The green region is the one where the quantum capacity vanishes, which is given by all the points $(\lambda,\mu)$ such that $\sqrt{\mu}\le\frac{ \log_2\left(\frac{1}{\lambda}\right)-1 }{  \log_2\left(\frac{1}{\lambda}\right)+1  }$, as established by Theorem~\ref{main_pos_cap_thm_main}.$\quad$ In
\textbf{(b)} we plot the secret-key capacity $K(\lambda,\mu,\nu=0)$ of the quantum memory channel $\{\Phi^{(n)}_{\lambda,\mu,\nu=0}\}_{n\in\N}$ with respect to $\lambda,\mu\in[0,0.9]$ for $\nu=0$. This capacity is exactly equal to the two-way quantum capacity $Q_2(\lambda,\mu,\nu=0)$, as guaranteed by Theorem~\ref{main_Cap_absence_thermal_main}.$\quad$ The capacities plotted in \textbf{(a)} and in \textbf{(b)} are calculated by exploiting the expressions given in \eqref{main_formula_cap_nu_0}. In \textbf{(a)} and in \textbf{(b)} we have limited the range of $\lambda$ and $\mu$ to the interval $[0,0.9]$ because all the capacities $Q$, $Q_2$, and $K$ diverge when either $\lambda$ or $\mu$ approaches $1$.}
\label{fig_main}
\end{figure*}

\section{Notation and basic notions}\label{sec_sm_notation}
Let $\HH$ be an Hilbert space, let $\mathcal{L}(\HH)$ the space of linear operators on $\HH$, and let $\mathfrak{S}(\HH)$ be the set of quantum states on $\HH$. The trace norm of a linear operator $\Theta$ is defined by $\|\Theta\|_1\coloneqq \Tr\sqrt{\Theta^\dagger\Theta}$. A map $\Phi:\mathcal{L}(\HH)\to\mathcal{L}(\HH)$ is a quantum channel if it is linear, completely positive, and trace preserving. Given a quantum channel $\Phi$ and a sequence of quantum channels $\{  
\Phi_k \}_{k\in\N}$, we say that $\{\Phi_k\}_{k\in\N}$ \emph{strongly converges} to $\Phi$ when
\bb\label{strong_convergence}
    \lim\limits_{k\to\infty}\|\Phi_k(\rho)-\Phi(\rho)\|_1=0\quad\forall\,\rho\in\mathfrak{S}(\HH)\,.
\ee
We now define quantum memoryless and memory channels, and introduce relevant notations.
\begin{Def}
    Let $\{\Phi^{(n)}\}_{n\in\N}$ be a family of quantum channels with $\Phi^{(n)}:\mathcal{L}(\HH^{\otimes n})\to\mathcal{L}(\HH^{\otimes n})$ for all $n\in\N$. The family $\{\Phi^{(n)}\}_{n\in\N}$ is \emph{memoryless} if there exists a quantum channel $\Phi:\mathcal{L}(\HH)\to\mathcal{L}(\HH)$ such that $\Phi^{(n)}=\Phi^{\otimes n}$ for all $n\in\N$. In such a scenario, we will also refer to $\Phi$ as a memoryless channel.
\end{Def}  
\begin{Def}~\label{def_memory}
    Let $\{\Phi^{(n)}\}_{n\in\N}$ be a family of quantum channels with $\Phi^{(n)}:\mathcal{L}(\HH^{\otimes n})\to\mathcal{L}(\HH^{\otimes n})$ for all $n\in\N$. The family $\{\Phi^{(n)}\}_{n\in\N}$ is a \emph{memory quantum channel} if it is not memoryless. In such a scenario, the term "$n$ channel uses" corresponds to the quantum channel $\Phi^{(n)}$.
\end{Def}

\subsection{Quantum information with continuous variables systems}
Let us provide a brief overview of the formalism of quantum information with continuous variable systems~\cite{BUCCO}. An $m$-mode system is associated with the Hilbert space $L^2(\mathbb R^m)$ of all square-integrable complex-valued functions over $\mathbb{R}^m$. Each mode represents a single mode of electromagnetic radiation with a definite frequency and polarisation. Let us denote as $S_1$, $S_2$, $\ldots$, $S_m$ the $m$ modes of the system. For each $j=1,2,\ldots,m$, the annihilation operator $a_j$ of the mode $S_j$ is defined as $a_j\coloneqq \frac{\hat{x}_j+i\hat{p}_j}{\sqrt{2}}$, where $\hat{x}_j$ and $\hat{p}_j$ are the well-known position and momentum operators of the mode $S_j$. The operator $a_j^\dagger$ is referred to as the creation operator of the mode $S_j$. The operator $a_j^\dagger a_j$ is the photon number operator of $S_j$, and the state 
\bb
    \ket{n}_{S_j}\coloneqq \frac{(a_j^\dag)^n}{\sqrt{n!}} \ket{0}_{S_j}\,,
\ee
is its $n$th Fock state, where $\ket{0}_{S_j}$ denotes its vacuum state. The characteristic function $\chi_\rho: \mathbb{C}^{m}\to \mathbb{C}$ of an $m$-mode state $\rho$ is defined as 
\bb
\chi_\rho(\alpha) \coloneqq&\ \Tr\left[ \rho D(\alpha) \right] \,,
\ee
where for all $\alpha\in \mathbb{C}^{m}$ the displacement operator $ D(\alpha)$ is defined as 
\bb\label{def_charact_func}
D(\alpha) \coloneqq \exp\left[\sum_{j=1}^{m} (\alpha_j a^\dag_j - \alpha_j^* a_j)\right]\,.
\ee
Any state $\rho$ can be written in terms of its characteristic function as
\bb
\rho =\ \int \frac{\mathrm{d}^{2m}\alpha}{\pi^m}\, \chi_\rho(\alpha) D(-\alpha)\,,
\ee
and, consequently, quantum states and characteristic functions are in one-to-one correspondence.   The following Lemma (see (\cite[Lemma 4]{G-dilatable} or~\cite[Theorem 2]{Cushen1971})) will be useful for the following.
\begin{lemma}~\label{lemmino_charact}
Let $m\in\N$, let $\{\sigma_k\}_{k\in\N}$ be a sequence of $m$-mode states, and let $\sigma$ be an $m$-mode state. Then, $\{\sigma_k\}_{k\in\N}$ converges in trace norm to $\sigma$ if and only if the sequence of characteristic functions $\{\chi_{\sigma_k}\}_{k\in\N}$ converges pointwise to the characteristic function $\chi_\sigma$. In formula, it holds that
\bb
    \lim\limits_{k\to\infty}\|\sigma_k-\sigma\|_1=0\Longleftrightarrow\lim\limits_{k\to\infty}\chi_{\sigma_k}(\alpha)=\chi_{\sigma}(\alpha)\quad\forall\,\alpha\in\mathbb{C}\,.
\ee
\end{lemma}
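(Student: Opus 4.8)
The plan is to prove the two implications separately, the forward one being elementary and the reverse one carrying all the analytic weight. For the forward direction, assume $\|\sigma_k-\sigma\|_1\to 0$. Since each displacement operator $D(\alpha)$ is unitary, its operator norm equals $1$, so by the trace-norm/operator-norm duality (Hölder's inequality for Schatten classes) one has
\begin{equation}
|\chi_{\sigma_k}(\alpha)-\chi_\sigma(\alpha)|=\left|\Tr\!\left[(\sigma_k-\sigma)D(\alpha)\right]\right|\le \|\sigma_k-\sigma\|_1\,\|D(\alpha)\|\xrightarrow[k\to\infty]{}0
\end{equation}
for every fixed $\alpha\in\mathbb{C}^m$, which is exactly pointwise convergence of the characteristic functions.

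For the reverse direction I would proceed in two stages. First I would upgrade the pointwise convergence of characteristic functions to \emph{weak} convergence of the states, i.e.\ convergence of all Fock-basis matrix elements $\bra{\mathbf m}\sigma_k\ket{\mathbf n}\to\bra{\mathbf m}\sigma\ket{\mathbf n}$. The key observation is that every matrix element is recovered from the characteristic function by integrating it against a fixed kernel, $\bra{\mathbf m}\sigma_k\ket{\mathbf n}=\int\frac{\mathrm d^{2m}\alpha}{\pi^m}\,\chi_{\sigma_k}(\alpha)\,\bra{\mathbf m}D(-\alpha)\ket{\mathbf n}$, where the matrix elements of the displacement operator decay like a Gaussian (times a Laguerre polynomial) in $|\alpha|$ and are therefore integrable. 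Crucially, every $\sigma_k$ is a state, so $|\chi_{\sigma_k}(\alpha)|=|\Tr[\sigma_k D(\alpha)]|\le\|\sigma_k\|_1=1$ uniformly in $k$ and $\alpha$; the integrand is thus dominated by a fixed integrable function and the dominated convergence theorem yields convergence of each matrix element.

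The second and harder stage is to promote weak convergence of states to convergence in trace norm, which is where the hypothesis that the limit $\sigma$ is itself a state (trace one) is indispensable. I would run a truncation / Scheffé-type argument: given $\e>0$, choose a finite-rank projection $P$ onto the dominant eigenvectors of $\sigma$ so that $\Tr[\sigma(\id-P)]<\e$; weak convergence applied to the finite-rank operator $P$ gives $\Tr[\sigma_k P]\to\Tr[\sigma P]$, so eventually $\Tr[\sigma_k(\id-P)]<2\e$. Decomposing $\sigma_k-\sigma$ into the four blocks determined by $P$ and $\id-P$, the compressed block $P(\sigma_k-\sigma)P$ lives in a finite-dimensional space, where weak and trace-norm convergence coincide, while the diagonal tail blocks have trace norm $<2\e$ and the off-diagonal blocks are controlled through $\|PX(\id-P)\|_1\le\sqrt{\Tr[PXP]\,\Tr[(\id-P)X(\id-P)]}$ for $X\ge0$ (a Cauchy--Schwarz estimate in the Hilbert--Schmidt norm). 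Collecting the estimates gives $\limsup_k\|\sigma_k-\sigma\|_1=O(\sqrt\e)$, and letting $\e\to0$ completes the proof.

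I expect this last stage to be the main obstacle: pointwise convergence of characteristic functions is in general strictly weaker than trace-norm convergence, since probability mass can escape to infinity and produce a sub-normalised weak limit, and it is precisely the normalisation $\Tr\sigma=1$ that enforces the tightness ruling this out. This is the quantum analogue of combining L\'evy's continuity theorem with Scheff\'e's lemma; indeed the whole statement is the content of the Cushen--Hudson-type results cited as \cite[Lemma 4]{G-dilatable} and \cite[Theorem 2]{Cushen1971}, which one may alternatively invoke directly.
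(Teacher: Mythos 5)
Your proposal is correct, but note that the paper does not actually prove this lemma: it is stated with a pointer to the literature (\cite[Lemma 4]{G-dilatable}, \cite[Theorem 2]{Cushen1971}), i.e.\ the ``paper's proof'' is the citation you mention in your last sentence. Your self-contained argument is sound and is essentially the standard proof of that quoted result: the forward direction by H\"older duality against the unitary $D(\alpha)$ is exactly right; the inversion-formula-plus-dominated-convergence step legitimately upgrades pointwise convergence of $\chi_{\sigma_k}$ (uniformly bounded by $1$ since the $\sigma_k$ are states) to convergence of all Fock matrix elements, because $|\bra{\mathbf m}D(-\alpha)\ket{\mathbf n}|$ decays Gaussianly and is integrable; and your truncation argument is a correct proof of the ``quantum Scheff\'e'' lemma that weak convergence of states to a \emph{normalised} limit implies trace-norm convergence, with the Cauchy--Schwarz bound $\|PX(\id-P)\|_1\le\sqrt{\Tr[PXP]\,\Tr[(\id-P)X(\id-P)]}$ for $X\ge0$ correctly controlling the off-diagonal blocks. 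One small point worth making explicit if you write this up: convergence of Fock matrix elements together with the uniform bound $\|\sigma_k\|_1=1$ is what lets you conclude $\Tr[\sigma_k P]\to\Tr[\sigma P]$ for a finite-rank projection $P$ whose range is not spanned by Fock vectors (expand the eigenvectors of $P$ in the Fock basis and control the tails using the uniform bound). You correctly identify where the hypothesis $\Tr\sigma=1$ enters; without it the weak limit could be sub-normalised and the equivalence would fail. In short, what your route buys is a self-contained, elementary proof; what the paper's route buys is brevity by deferring to Cushen--Hudson.
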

{
Let $a_1,a_2,\ldots,a_m$ be the annihilation operators of an $m$-mode system. For any $m\times m$ unitary matrix $u$ there exists an $m$-mode Gaussian unitary $U_u$, referred to as a "passive unitary", such that~\cite{BUCCO}
\bb
    (U_u)^\dagger a_i U_u=\sum_{j=1}^m u_{ij} a_{j}\qquad\forall i\in\{1,\ldots,n\}\,.
\ee
or, in vectorial notation,
\bb
(U_u)^\dagger \mathbf{a}U_u= u\mathbf{a}\,,
\ee 
where we have introduced the vector of annihilation operators $\textbf{a}\coloneqq (a_1,a_2,\ldots,a_m)^\intercal$. It simple to prove that any passive unitary $U_u$ preserves the total photon number, i.e.
    \bb
        (U_u)^\dagger \left(\sum_{j=1}^m a_j^\dagger a_j \right)U_u=\sum_{j=1}^m a_j^\dagger a_j\,.
    \ee
Moreover, the unitary channel
\bb\label{definition_passive unitary}
    \mathcal{U}_u(\cdot)\coloneqq U_u \cdot (U_u)^\dagger
\ee
is called a "passive unitary channel". It acts at the level of characteristic functions as
    \bb\label{passive_transf_charact}
        \chi_{\mathcal{U}_u(\rho)}(\alpha)=\chi_\rho(u^\dagger \alpha)\qquad\forall\,\alpha\in\mathbb{C}^m\, 
    \ee
for any $m$-mode quantum state $\rho$.

}
\subsection{Beam splitter and thermal attenuator}
Let $\HH_S$ and $\HH_E$ be single-mode systems and let $a$ and $b$ denote their annihilation operators, respectively.
For all $\lambda\in[0,1]$ the unitary operator describing a beam splitter of transmissivity $\lambda$ is
\bb
	U_{\lambda}^{S E}\coloneqq\exp\left[\arccos\sqrt{\lambda}\left(a^\dagger b-a\, b^\dagger\right)\right]\,.
\ee
Under the beam splitter unitary, the annihilation operators $a$ and $b$ transform as
\bb\label{transf_beam}
    \left(U_\lambda^{SE}\right)^\dagger a\, U_{\lambda}^{SE}&=\sqrt{\lambda}\,a+\sqrt{1-\lambda}\,b\,,\\
    U_\lambda^{SE} a\, \left(U_{\lambda}^{SE}\right)^\dagger&=\sqrt{\lambda}\,a-\sqrt{1-\lambda}\,b\,,\\
    \left(U_\lambda^{SE}\right)^\dagger b\, U_{\lambda}^{SE}&=-\sqrt{1-\lambda}\,a+\sqrt{\lambda}\,b\,,\\
    U_\lambda^{SE} b\, \left(U_{\lambda}^{SE}\right)^\dagger&=\sqrt{1-\lambda}\,a+\sqrt{\lambda}\,b\,.
\ee
At the level of characteristic functions, the beam splitter transforms any two-mode state $\rho\in\mathfrak{S}(\HH_S\otimes\HH_E)$ as
\bb\label{charact_beam}
    \chi_{U^{SE}_\lambda\rho U_\lambda^{{SE}^\dag}}(\alpha,\beta) &\coloneqq \Tr\left[U^{SE}_\lambda\rho U_\lambda^{{SE}^\dag}\, D_S(\alpha)\otimes D_E(\beta)\right]
    \\&= \chi_\rho\left( \sqrt\lambda\,\alpha - \sqrt{1-\lambda}\,\beta,\, \sqrt{1-\lambda}\,\alpha + \sqrt\lambda\, \beta \right)
\ee
for all $\alpha,\beta\in\mathbb{C}$. For all $\lambda\in[0,1]$ and $\nu\ge0$, a \emph{thermal attenuator} $\mathcal{E}_{\lambda,\nu}:\mathfrak{S}(\HH_S)\to\mathfrak{S}(\HH_S)$ is a quantum channel defined as follows: 
\bb\label{def_therm}
    \mathcal{E}_{\lambda,\nu}(\rho)\coloneqq\Tr_E\left[U_\lambda^{SE}\big(\rho^S \otimes\tau_\nu^E \big) {U_\lambda^{SE}}^\dagger\right]\,,
\ee
Here, $\tau_\nu\in\mathfrak{S}(\HH_E)$ denotes the thermal state with mean photon number equal to $\nu$, which is defined by 
\bb
    \tau_{\nu}\coloneqq \frac{1}{\nu+1}\sum_{n=0}^\infty \left(\frac{\nu}{\nu+1}\right)^{n}\ketbra{n}\,,
\ee
and its characteristic function reads
\bb\label{charact_therm}
\chi_{\tau_\nu}(\alpha) \coloneqq  e^{-\frac12 (2\nu+1)|\alpha|^2}\quad\forall\,\alpha\in\mathbb{C}\,.
\ee
By exploiting~\eqref{charact_beam} and~\eqref{charact_therm}, it can be shown that the thermal attenuator transforms any single-mode state $\rho$ at the level of characteristic functions as follows:
\bb\label{caract_att}
\chi_{\mathcal{E}_{\lambda,\nu}(\rho)}(\alpha) \coloneqq \chi_\rho\left(\sqrt\lambda \alpha\right)\, e^{-\frac12 (2\nu+1)(1-\lambda)|\alpha|^2}\quad\forall\,\alpha\in\mathbb{C}\,.
\ee
By exploiting~\eqref{caract_att} and the one-to-one correspondence between quantum states and characteristic functions, one can show that the following composition rule holds for any $\lambda_1,\lambda_2\in[0,1]$ and $\nu\ge0$:
\bb\label{composition_them}
\mathcal{E}_{\lambda_1,\nu}\circ\mathcal{E}_{\lambda_2,\nu}=\mathcal{E}_{\lambda_1\lambda_2,\nu}\,.
\ee

\subsection{Quantum capacity, two-way quantum capacity, and secret-key capacity}\label{sec_def_capacity}
In the framework of quantum Shannon theory~\cite{MARK,Sumeet_book}, the fundamental limitations of point-to-point quantum communication are established by the \emph{capacities} of quantum channels. The capacities quantify the maximum amount of information that can be reliably transmitted per channel use in the asymptotic limit of many uses. Different notions of capacities have been defined, based on the type of information to be transmitted, such as qubits or secret-key bits, and the additional resources permitted in the protocol design, such as classical feedback. In this paper, we investigate three distinct capacities: the quantum capacity $Q$, the two-way quantum capacity $Q_2$, and the secret-key capacity $K$. We recall that $Q$ measures the efficiency in the transmission of qubits with no additional resources, while $Q_2$ and $K$ instead gauge the efficiency in the transmission of qubits and secret key, respectively, with the additional free resource of a public two-way classical communication channel between the sender and the receiver~\cite{MARK, Sumeet_book}. We give rigorous definitions of the quantum capacity $Q$, two-way quantum capacity $Q_2$, and secret-key capacity $K$ below in Sections~\ref{paragraph_quantum_cap}, \ref{paragraph_two_way}, and \ref{paragraph_secret_key}, respectively (see~\cite{Sumeet_book,memory-review} for another presentation).

Let $C$ represent one of the three capacities mentioned above, namely $C\in\{Q,Q_2,K\}$. We will employ the following notation:
\begin{itemize}
    \item The capacity $C$ of a memoryless quantum channel $\Phi$ will be denoted as $C(\Phi)$.
    \item The capacity $C$ of a quantum memory channel $\{\Phi^{(n)}\}_{n\in\mathbb{N}}$ will be denoted as $C\left(\{\Phi^{(n)}\}_{n\in\mathbb{N}}\right)$ (refer to Definition~\ref{def_memory}).
\end{itemize}
Hence, for a quantum memoryless channel $\Phi$ it holds that $C(\{\Phi^{\otimes n}\}_{n\in\N})=C(\Phi)$.

{
\subsubsection{Definition of quantum capacity of quantum (memory) channels}\label{paragraph_quantum_cap}
In this paragraph, we briefly define the quantum capacity of quantum (memory) channels~\cite{Sumeet_book}. Let us begin with the definition of $(|M|,\varepsilon)$-code for quantum communication. Given a quantum channel $\Phi_{A\to B}$, an \emph{$(|M|,\varepsilon)$-code for quantum communication over $\Phi_{A\to B}$} consists of a channel $\mathcal{E}_{M\to A}$, called the "encoding channel", and a channel $\mathcal{D}_{B \to M}$, called the "decoding channel", such that the channel fidelity between $\mathcal{D}_{B \to M}\circ \Phi_{A\to B}  \circ \mathcal{E}_{M\to A}$ and the identity channel $\operatorname{id}_{M}$ is not smaller than $1-\varepsilon$:
\bb
F( \mathcal{D}_{B \to M}\circ \Phi_{A\to B}  \circ \mathcal{E}_{M\to A}, \operatorname{id}_{M} ) \geq 1-\varepsilon
\ee
where the channel fidelity between two channels $\Phi_1$ and $\Phi_2$ is defined in terms of terms of the fidelity $F$ between states as follows~\cite{Sumeet_book}:
\bb\label{channel_fidelity}
F(\Phi_1,\Phi_2) \coloneqq \inf_{\rho} F ((\operatorname{id} \otimes \Phi_1)(\rho),(\operatorname{id} \otimes \Phi_2)(\rho))\,.
\ee
The optimisation in \eqref{channel_fidelity} is over every (arbitrarily large) reference system and every bipartite state $\rho$ on the channel input and on the reference system. The \emph{one-shot quantum capacity $Q_{\varepsilon}(\Phi_{A\to B})$ of the channel $\Phi_{A\to B}$} is defined as
\bb
&Q_{\varepsilon}(\Phi) \coloneqq \sup_{\mathcal{E},\mathcal{D}} \{\log_2 |M|\\
& : \exists \text{$(|M|,\varepsilon)$-code for quantum communication over } \Phi_{A\to B}\},
\ee
where the optimisation is over every encoding channel $\mathcal{E}$ and decoding channel $\mathcal{D}$.

Let $\{\Phi^{(n)}\}_{n\in\N}$ be a quantum memory channel, which maps $n$ (Alice's) input systems $A_1,\ldots, A_n$ to $n$ (Bob's) output systems $B_1\ldots B_n$ for all $n\in\N$. The \emph{quantum capacity} of $\{\Phi^{(n)}\}_{n\in\N}$ is defined as
\bb
Q\left(\{\Phi^{(n)}\}_{n\in\N}\right) \coloneqq \inf_{\varepsilon \in (0,1)} \liminf_{n \to \infty}\frac{1}{n}Q_{\varepsilon}(\Phi^{(n)}).
\ee
Naturally, given a quantum (memoryless) channel $\Phi$, its quantum capacity is defined as 
\bb
    Q(\Phi)\coloneqq Q\left(\{\Phi^{\otimes n}\}_{n\in\N}\right)=\inf_{\varepsilon \in (0,1)} \liminf_{n \to \infty}\frac{1}{n}Q_{\varepsilon}(\Phi^{\otimes n})\,.
\ee

\subsubsection{Definition of two-way quantum capacity of quantum (memory) channels}\label{paragraph_two_way}
In this paragraph, we briefly define the two-way quantum capacity of quantum (memory) channels~\cite{Sumeet_book}. This is a figure of merit of quantum communication in scenarios where the sender and the receiver are able to implement arbitrary local operations assisted
by classical communication (LOCC)~\cite{Sumeet_book,LOCC}.

By definition, an \emph{$(n,|M|,\varepsilon)$ LOCC-assisted quantum communication protocol over a (memoryless) quantum channel $\Phi_{A \to B}$} consists in a product state $\ket{0}_{A_0}\otimes\ket{0}_{B_0}$ and $n+1$ LOCC channels~\cite{Sumeet_book} $\mathcal{L}^{(0)}_{A_0B_0\to A'_1 A_1 B'_1}$, $\mathcal{L}^{(1)}_{A'_1 B_1 B'_1\to A'_2 A_2 B'_2}$, $\mathcal{L}^{(2)}_{A'_2 B_2 B'_2\to A'_3 A_3 B'_3}$, $\ldots$, $\mathcal{L}^{(n-1)}_{A'_{n-1} B_{n-1} B'_{n-1}\to A'_n A_n B'_n}$, $\mathcal{L}^{(n)}_{A'_n B_n B'_n\to M_A M_B}$ such that the final state of the protocol
\bb\label{final_state_eta}
\eta_{M_AM_B} &\coloneqq (\mathcal{L}^{(n)}_{A'_{n}B_{n}B'_{n}\to M_A M_B} \circ \Phi_{A_n \to B_n} \\
&\qquad\circ \mathcal{L}^{(n-1)}_{A'_{n-1}B_{n-1}B'_{n-1} \to A'_{n}A_{n}B'_{n}}\circ \cdots\\
&\qquad\circ  \mathcal{L}^{(1)}_{A'_{1}B_{1}B'_{1} \to A'_{2}A_{2}B'_{2}}  \\
& \qquad\circ\Phi_{A_1 \to B_1}\circ \mathcal{L}^{(0)}_{A_0B_0\to A'_1 A_1 B'_1})(\ketbra{0}_{A_0}\otimes\ketbra{0}_{B_0})
\ee
satisfies
\bb
F(\eta_{M_A M_B}, \Gamma_{M_A M_B}) \geq 1- \varepsilon,
\ee
where $\Gamma_{M_A M_B}$ is a maximally entangled state of Schmidt rank $|M|$. A pictorial representation of such a protocol is provided in Fig.~\ref{fig_LOCC_protocol}.
\begin{figure}[!h]
	\centering
	\includegraphics[width=1\linewidth]{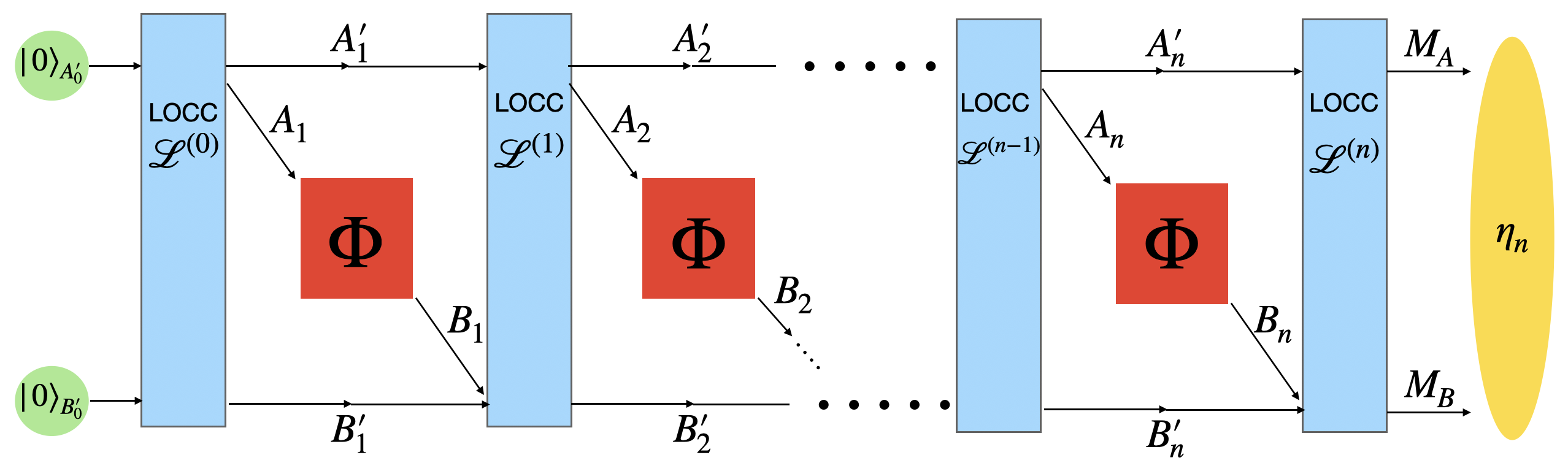}
	\caption{Pictorial representation of an LOCC-assisted quantum communication protocol over a (memoryless) quantum channel $\Phi$. 
	}
    \label{fig_LOCC_protocol}
\end{figure}

The \emph{$n$-shot two-way quantum capacity} of $\Phi_{A\to B}$ is defined as
\bb
Q_{2;n,\varepsilon}(\Phi) &\coloneqq
\sup \{\frac{1}{n} \log_2 |M| \\
&: \exists (n,|M|,\varepsilon) \text{ LOCC-assisted quantum}    \\
&\text{ communication protocol over}  \Phi_{A\to B}  \},
\ee
where the optimisation is over every LOCC-assisted quantum communication protocol. Finally, the \emph{two-way quantum capacity} of $\Phi_{A\to B}$ is defined as
\bb
Q_{2}(\Phi) \coloneqq \inf_{\varepsilon \in (0,1)} \liminf_{n \to \infty}Q_{2;n,\varepsilon}(\Phi).
\ee

Let us now define the two-way quantum capacity of a quantum memory channel $\{\Phi^{(n)}\}_{n\in\N}$, which maps $n$ (Alice's) input systems $A_1,\ldots, A_n$ to $n$ (Bob's) output systems $B_1\ldots B_n$ for all $n\in\N$. An \emph{$(n,|M|,\varepsilon)$ LOCC-assisted quantum communication protocol over $\Phi^{(n)}_{A_1\ldots A_n \to B_1\ldots B_n}$} is defined essentially in the same way as in the memoryless case above, except that the expression of the final state in \eqref{final_state_eta} now involves the $n$ uses of the quantum memory channel $\Phi^{(n)}_{A_1\ldots A_n \to B_1\ldots B_n}$ instead of the $n$ uses of the memoryless channel $\Phi$. A pictorial representation of such a protocol over a quantum memory channel is provided in Fig.~\ref{fig_LOCC_protocol_memory}.
\begin{figure}[!h]
	\centering
	\includegraphics[width=1\linewidth]{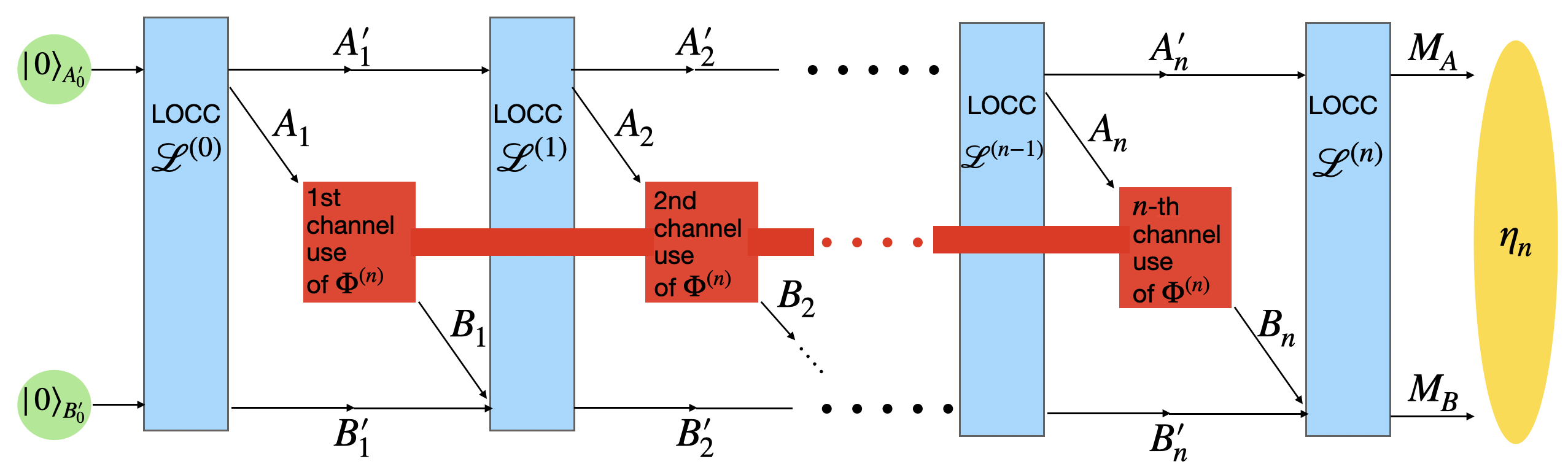}
	\caption{Pictorial representation of an LOCC-assisted quantum communication protocol over a quantum memory channel $\Phi^{(n)}_{A_1\ldots A_n \to B_1\ldots B_n}$.}
    \label{fig_LOCC_protocol_memory}
\end{figure}
The \emph{$n$-shot two-way quantum capacity} of a quantum memory channel $\Phi^{(n)}_{A_1\ldots A_n \to B_1\ldots B_n}$ is defined as
\bb
&Q_{2;n,\varepsilon}(\Phi^{(n)}) \coloneqq
\sup \{\frac{1}{n} \log_2 |M| : \exists (n,|M|,\varepsilon) \text{ LOCC-assisted} \\
&\text{ quantum communication protocol over } \Phi^{(n)}_{A_1\ldots A_n\to B_1\ldots B_n}  \},
\ee
where the optimisation is over every LOCC-assisted quantum communication protocol. Finally, the \emph{two-way quantum capacity} of a quantum memory channel $\{\Phi^{(n)}\}_{n\in\N}$ is defined as
\bb
Q_{2}\left(\{\Phi^{(n)}\}_{n\in\N}\right) \coloneqq \inf_{\varepsilon \in (0,1)} \liminf_{n \to \infty}Q_{2;n,\varepsilon}(\Phi^{(n)}).
\ee

As defined above, the two-way quantum capacity $Q_{2}$ of a quantum (memory) channel is the maximum achievable rate of \emph{ebits} (i.e.~two-qubit maximally entangled states) in the asymptotic limit of many channel uses when the sender and the receiver are assisted by LOCCs. Note that, since the sender and receiver can exploit an ebit to teleport an arbitrary qubit~\cite{teleportation} and since the ability of transmitting an arbitrary qubit implies the ability of distributing an ebit, the two-way quantum capacity $Q_{2}$ is not only the maximum achievable rate of ebits but also the maximum achievable rate of qubits. In other words, $Q_{2}$ is the proper figure of merit for both entanglement distribution and quantum communication.

\subsubsection{Definition of secret-key capacity of quantum (memory) channels}\label{paragraph_secret_key}
In this paragraph, we briefly define the secret-key capacity of quantum (memory) channels~\cite{Sumeet_book}. Let us start with the definition of an $(n,|M|,\varepsilon)$ protocol for secret key agreement over a (memory) quantum channel. The latter is defined essentially in the same way as we defined an LOCC-assisted quantum communication protocol in paragraph~\ref{paragraph_two_way}, except for the last LOCC and for the final state of the protocol. The last LOCC is now substituted by a more general LOCC $\mathcal{L}^{(n)}_{A'_{n}B_{n}B'_{n}\to M_A M_B S_A S_B}$, which outputs not only Alice's system $ M_A$ and Bob's system $ M_B $ but also an additional  Alice's system $S_A$ and an additional Bob's system $S_B$. Moreover, the final state of the protocol, denoted as $\eta_{M_A M_B S_A S_B}$, must satisfy
\bb
F(\eta_{M_A M_B S_A S_B},\gamma_{M_A M_B S_A S_B}) \geq 1-\varepsilon,
\ee
where $\gamma_{M_A M_B S_A S_B} $ is a \emph{private state} of dimension $|M|$~\cite{private,Horodecki_2009_secret2}. By definition, a private state of dimension $|M|$ is a state the form~\cite{private,Horodecki_2009_secret2}:
\bb
\gamma_{M_A M_B S_A S_B} \coloneqq U_{M_A M_B S_A S_B} (\Phi_{M_A M_B} \otimes \theta_{S_A S_B} )U_{M_A M_B S_A S_B}^\dag\,,
\ee
where $U_{M_A M_B S_A S_B}$ is a unitary of the form
\bb
U_{M_A M_B S_A S_B} \coloneqq \sum_{i,j} \ketbra{i}_{M_A} \otimes \ketbra{j}_{M_B} \otimes U^{i,j}_{S_A S_B},
\ee
with each $U^{i,j}_{S_A S_B}$ being a unitary, $\Phi_{M_A M_B}$ representing a maximally entangled state of Schmidt rank $|M|$, and $\theta_{S_A S_B}$ being an arbitrary state. As proved in~\cite{private,Horodecki_2009_secret2}, generating a private state of dimension $|M|$ is equivalent to generating a secret key of length $|M|$, which has to be secret to any third party, typically referred to as Eve. Mathematically, generating a secret-key of length $|M|$ entails generating a tripartite "secret-key state", i.e.~a state of the form 
\bb
    \frac{1}{|M|} \sum_{m=0}^{|M|-1}\ketbra{m}_{M_A} \otimes \ketbra{m}_{M_B} \otimes \sigma_E\,
\ee
where $\sigma_E$ is an arbitrary state on Eve's system.

The $n$-shot secret-key capacity of a quantum (memory) channel $\Phi^{(n)}_{A_1\ldots A_n\to B_1\ldots B_n}$ is defined as
\bb
&K_{n,\varepsilon}(\Phi^{(n)}_{A_1\ldots A_n\to B_1\ldots B_n}) \coloneqq
\sup \{\frac{1}{n} \log_2 |M| : \exists (n,|M|,\varepsilon) \\
&\text{ secret-key-agreement protocol for } \Phi^{(n)}_{A_1\ldots A_n\to B_1\ldots B_n}  \},
\ee
where the optimisation is over every secret key agreement protocol.
Finally, the secret-key capacity of a quantum (memory) channel $\{\Phi^{(n)}\}_{n\in\N}$ is defined as
\bb
K\left(\{\Phi^{(n)}\}_{n\in\N}\right) \coloneqq \inf_{\varepsilon \in (0,1)} \liminf_{n \to \infty}K_{n,\varepsilon}(\Phi^{(n)}).
\ee
Naturally, the definition of secret-key capacity of a quantum \emph{memoryless} channel $\Phi$ can be retrieved by setting $\Phi^{(n)}\coloneqq \Phi^{\otimes n}$ in the definition above:
\bb
    K(\Phi)\coloneqq K\left(\{\Phi^{\otimes n}\}_{n\in\N}\right)\,.
\ee
Due to the fact that a maximally entangled state of Schmidt rank $|M|$ is a private state of dimension $|M|$, it follows that the two-way quantum capacity is always upper bounded by the secret-key capacity:
\bb
    Q_2\left(\{\Phi^{(n)}\}_{n\in\N}\right)\le K\left(\{\Phi^{(n)}\}_{n\in\N}\right)\,.
\ee
}

\subsubsection{Quantum capacity, two-way quantum capacity, and secret-key capacity of the pure-loss channel and thermal attenuator}
For all $\lambda\in[0,1]$ the capacities $Q$~\cite{holwer, LossyECEAC1, LossyECEAC2}, $Q_2$~\cite{PLOB}, and $K$~\cite{PLOB} of the pure-loss channel $\mathcal{E}_{\lambda,0}$ are given by:
\bb\label{capacities_pure_loss}
    Q(\mathcal{E}_{\lambda,0})&=   
        \begin{cases}
        \log_2\left(\frac{\lambda}{1-\lambda}\right) &\text{if $\lambda\in[\frac{1}{2},1]$ ,} \\
        0 &\text{if $\lambda\in[0,\frac{1}{2}]$ .}
    \end{cases}\\
    Q_2(\mathcal{E}_{\lambda,0})&=K(\mathcal{E}_{\lambda,0})= \log_2\left(\frac{1}{1-\lambda}\right)\,.
\ee
In particular, it holds that
\bb\label{q_equal_zero_sm}
    \lambda\le\frac{1}{2}\Longleftrightarrow Q(\mathcal{E}_{\lambda,0})=0\,.
\ee
For $\lambda\in[0,1]$ and $\nu>0$, the capacities $Q$, $Q_2$, and $K$ of the thermal attenuator $\mathcal{E}_{\lambda,\nu}$ are not known, but bounds have been established: see~\cite{PLOB, Rosati2018, Sharma2018, Noh2019,holwer, Noh2020,fanizza2021estimating} for bounds on $Q(\mathcal{E}_{\lambda,\nu})$, while see~\cite{PLOB,Davis2018,Goodenough16,TGW,holwer,MMMM,squashed_channel,holwer,Pirandola2009,Noh2020,Ottaviani_new_lower,Pirandola18,Wang_Q2_amplifier,lower-bound} for bounds on $Q_2(\mathcal{E}_{\lambda,\nu})$.
Nevertheless, the region of zero two-way quantum capacity and zero secret-key capacity of the thermal attenuator have been determined and it reads~\cite{lower-bound}:
\bb\label{q_equal_zero_sm2}
    \lambda\le \frac{\nu}{\nu+1}\Longleftrightarrow Q_2(\mathcal{E}_{\lambda,\nu})=K(\mathcal{E}_{\lambda,\nu})=0\,.
\ee
Although the exact region of zero quantum capacity of the thermal attenuator has not been determined, there exist known bounds on this region. In particular, it has been established that~\cite{Rosati2018, extendibility,Caruso2006,holwer}
\bb\label{q_equal_zero_sm3}
    \lambda\le \frac{\nu+\frac{1}{2}}{\nu+1}&\Longrightarrow Q(\mathcal{E}_{\lambda,\nu})=0\,,\\
    \lambda>\frac{1}{1+2^{-g(\nu)}} &\Longrightarrow Q(\mathcal{E}_{\lambda,\nu})>0\,,
\ee
where 
\bb
g(x) \coloneqq (x+1)\log_2 (x+1) - x\log_2 x\qquad\forall\,x\ge0 .
\ee

\section{Analysis of the Delocalised Interaction Model}\label{SM_sec_def_model}
In this section, we will conduct a thorough and self-contained examination of the "Delocalised Interaction Model" (DIM), introduced above in Section~\ref{Section_DIM} in order to describe optical fibres with memory effects.

\begin{figure*}[t]
	\centering
	\includegraphics[width=0.9\linewidth]{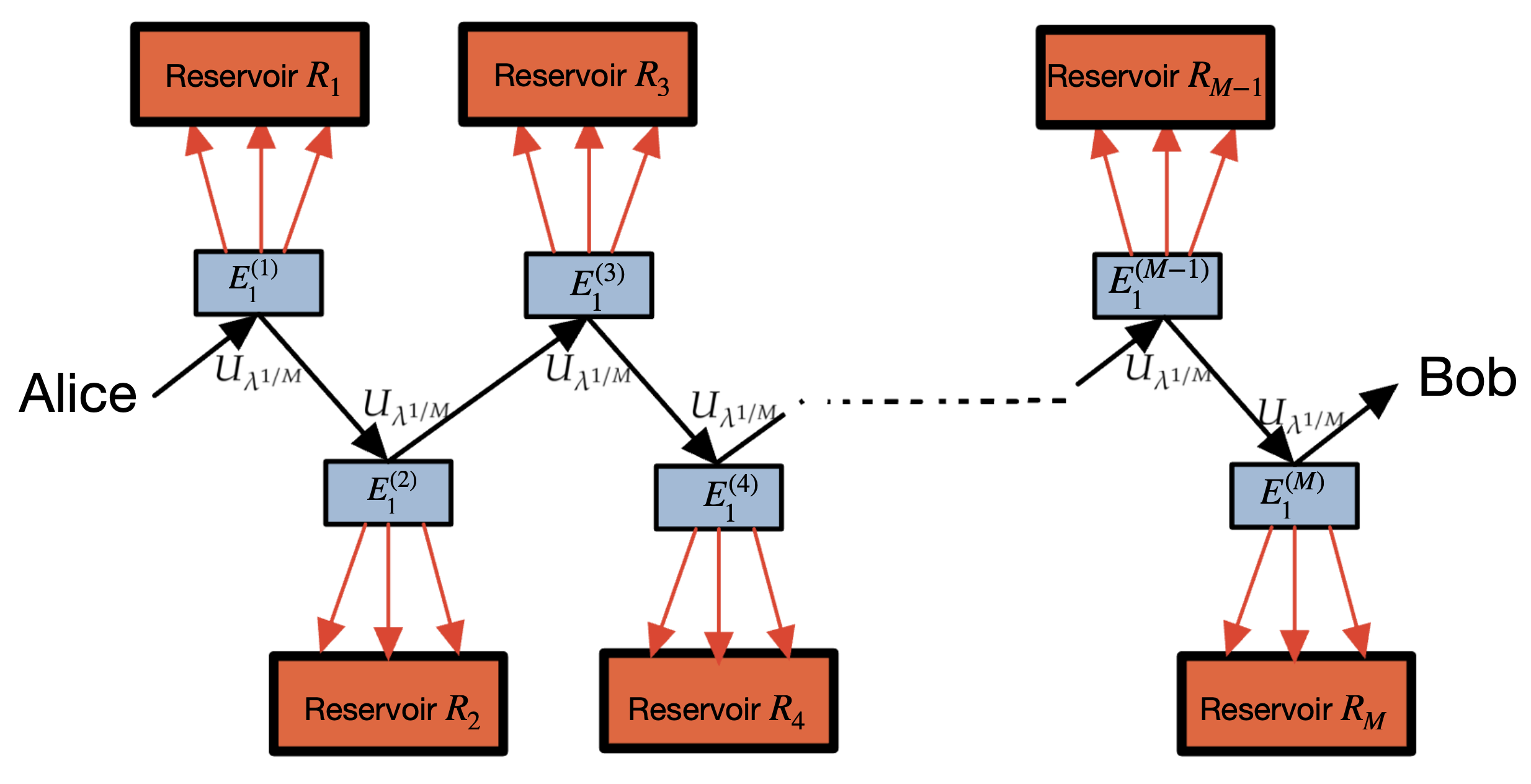}
	\caption{Depiction of our model of optical fibre with memory effects, termed as "Delocalised Interaction Model" (DIM). As Alice's input signal travels across the optical fibre, it interacts with each of the $M$ single-mode environments via the beam splitter unitary $U_{\lambda^{1/M}}$. For each $j = 1, 2, \ldots, M$, the environment $E_1^{(j)}$ undergoes a thermalisation process induced by the reservoir $R_j$, which attempts to restore the state of $E_1^{(j)}$ to its initial thermal state $\tau_\nu$. If the time interval between consecutive input signals is much longer than the thermalisation timescale, the memoryless assumption holds, and our model simplifies to that of a thermal attenuator $\mathcal{E}_{\lambda,\nu}$. However, in the opposite regime, where the time interval is shorter than the thermalisation timescale, each environment does not have sufficient time to return to its initial thermal state between consecutive signals. Consequently, in this case, when an input signal interacts with the environments $\{E^{(j)}_1\}_{j=1,2,\ldots,M}$, the state of each environment is no longer a thermal state $\tau_\nu$, but depends on the previously transmitted input signals, thereby introducing memory effects into the system.}
	\label{fig:new_model}
\end{figure*}

Consider an optical fibre with length $L$ and transmissivity $\lambda$. We can imagine the optical fibre as a composition of $M\in\N$ infinitesimal optical fibres, each with length $L/M$ and transmissivity $\lambda^{1/M}$ with $M\rightarrow\infty$. The signals transmitted by Alice, which propagate through the optical fibre, are single-mode of electromagnetic radiation with a definite frequency and polarisation. To model the noise affecting each signal, we employ the following approach. For all $j=1,2,\ldots,M$, the $j$th infinitesimal optical fibre is represented as a beam splitter with transmissivity $\lambda^{1/M}$ that interacts with a single-mode environment denoted as $E_1^{(j)}$. Additionally, each environment $E_1^{(j)}$ is influenced by both the output signal of the composition of the first $(j-1)$ infinitesimal optical fibres via the beam splitter interaction $U_{\lambda^{1/M}}$, and a remote reservoir $R_j$. The remote reservoir solely interacts with $E_1^{(j)}$ and attempts to reset its state to $\tau_\nu$ through a thermalisation process, as illustrated in Fig.~\ref{fig:new_model}.

\begin{figure*}[t]
	\centering
	\includegraphics[width=0.8\linewidth]{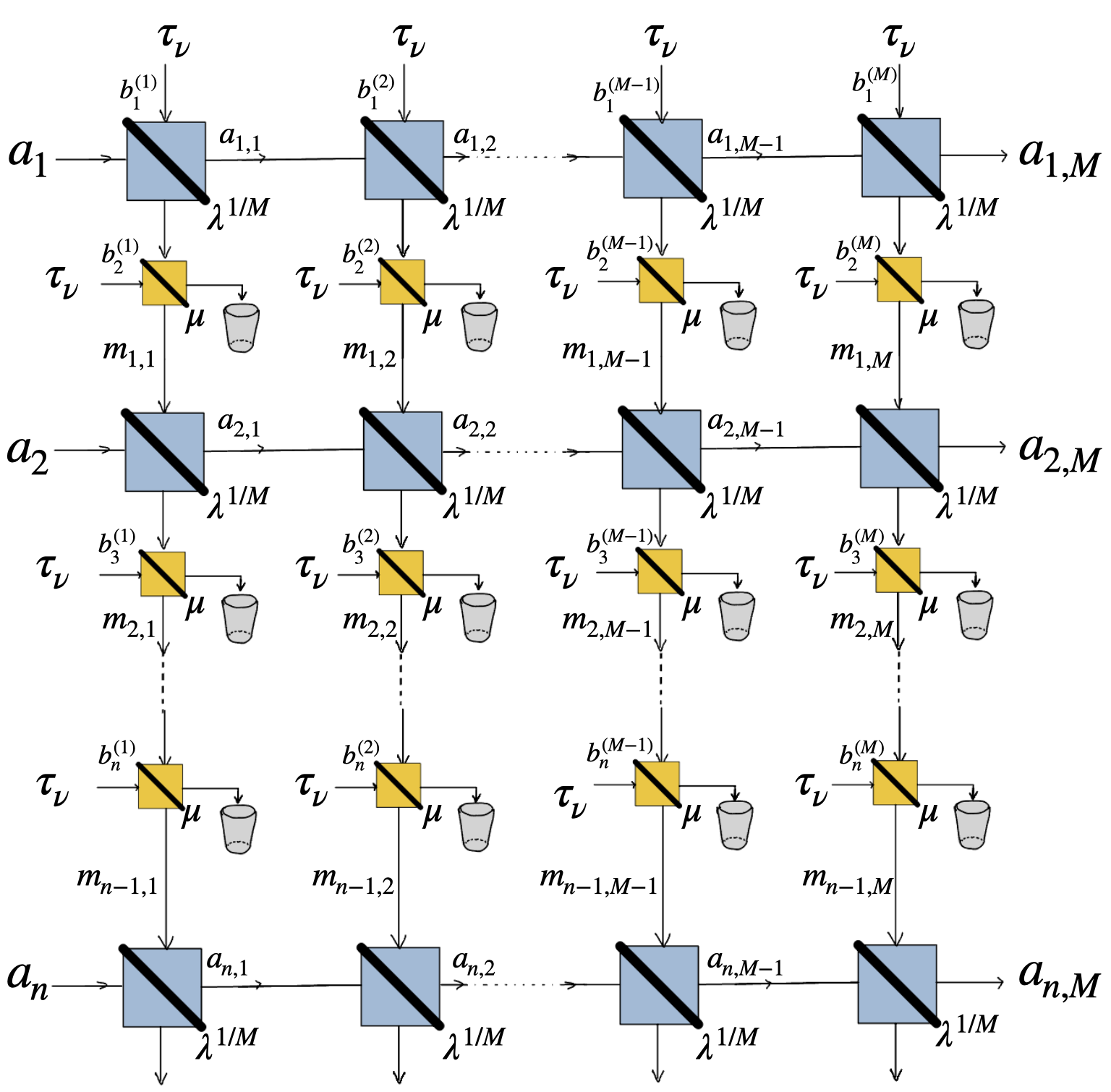}
	\caption{Depiction of the $n$-mode quantum channel $\Phi_{\lambda,\mu,\nu}^{(M,n)}$, which defines the "Delocalised Interaction Model" (DIM). Here, $M\in\N$ is the number of infinitesimal optical fibres, $n\in\N$ is the number of uses of the optical fibre, $\lambda$ is the transmissivity of the optical fibre, $\nu$ is the thermal noise, and $\mu$ is the memory parameter. For all $i=1,2,\ldots, n$, the $i$th row of the figure, which consists of $M$ blue beam splitters of transmissivity $\lambda^{1/M}$, represents the $i$th use of the optical fibre. For all $j=1,2,\ldots,M$, the $j$th column of the figure, which consists of $n$ blue splitters of transmissivity $\lambda^{1/M}$ and $n$ yellow beam splitters of transmissivity $\mu$, represents the $j$th infinitesimal optical fibre. The yellow beam splitters, coupling the environments with the thermal state $\tau_\nu$, represent the thermalisation process $\mathcal{E}_{\mu,\nu}$. If $\mu=0$ then $\Phi_{\lambda,\mu=0,\nu}^{(M,n)}=\mathcal{E}_{\lambda,\nu}^{\otimes n}$, meaning that the memoryless scenario is recovered. In the Heisenberg representation, $\Phi_{\lambda,\mu,\nu}^{(M,n)}$ maps the annihilation operator $a_i$ (depicted on the left) of the $i$th input signal into the annihilation operators $a_{i,M}$ (depicted on the right) of the $i$th output signal for all $i=1,2,\ldots,n$. Our model of optical fibre with memory effects coincides with the \emph{strong} limit $\Phi_{\lambda,\mu,\nu}^{(n)}=\lim\limits_{M\rightarrow\infty}\Phi_{\lambda,\mu,\nu}^{(M,n)}$. }
    \label{fig:griglia_SM}
\end{figure*}

We describe the thermalisation process caused by the reservoir $R_j$ on the environment $E_1^{(j)}$ as a thermal attenuator $\mathcal{E}_{\mu,\nu}$ with transmissivity $\mu\in[0,1]$ and thermal noise $\nu\ge0$. The transmissivity $\mu$ serves as a \emph{memory parameter} that relates to the time interval $\delta t$ between consecutive signals from Alice and the characteristic thermalisation time $t_E$. Specifically, we can define $\mu\coloneqq \exp(-\delta t/t_E)$ as a possible relationship between these quantities.

Let us consider the special case of $\mu=0$ to further enhance our understanding. In this case, the property $\mathcal{E}_{\mu=0,\nu}(\rho)=\tau_\nu$ holds for all single-mode states $\rho$. This means that when an input signal interacts with any environment $E_1^{(j)}$, the state of $E_1^{(j)}$ just before the interaction is a thermal state $\tau_\nu$, regardless of the previously sent input signals.
As a result, in this case, we can represent each infinitesimal optical fibre as a thermal attenuator $\mathcal{E}_{\lambda^{1/M},\nu}$ with transmissivity $\lambda^{1/M}$. Finally, according to the composition rule~\eqref{composition_them}, the composition of $M$ thermal attenuators with transmissivity $\lambda^{1/M}$ is equal to a single thermal attenuator with transmissivity $\lambda$.

On the other hand, when $\mu>0$, the environments $\{E_1^{(j)}\}_{j=1,2,\ldots, M}$ deviate from the initial thermal state $\tau_\nu^{\otimes M}$ at the moment of the interaction with Alice's signal, and their state becomes dependent on the previously transmitted input signals. This indicates that when $\mu>0$, memory effects come into play. The interaction of each input signal with the environments introduces a memory effect, causing the state of the environments to be influenced by the history of the transmitted signals. Thus, the presence of non-zero $\mu$ leads to the emergence of memory effects in the system.

The discussion above leads us to the definition of our model of optical fibre with memory effects. Namely, fixed the transmissivity $\lambda\in[0,1]$, the thermal noise $\nu\ge 0$, and the memory parameter $\mu\in[0,1]$, in our model the output state of $n$ uses of the optical fibre is given by the output of the $n$-mode quantum channel $\Phi_{\lambda,\mu,\nu}^{(M,n)}$, which is defined above in \eqref{concatenated}, in the limit $M\rightarrow\infty$. We describe the channel $\Phi_{\lambda,\mu,\nu}^{(M,n)}$ in the caption of Fig.~\ref{fig:griglia_SM}, illustrating its interferometric representation (see also Fig.~\ref{fig:griglia_main}(c)).
In the forthcoming Theorem~\ref{thm_strong_sm} we show that the sequence of quantum channels $\{\Phi_{\lambda,\mu,\nu}^{(M,n)}\}_{M\in\N}$ strongly converges for $M\to\infty$ to a quantum channel $\Phi_{\lambda,\mu,\nu}^{(n)}$ (see \eqref{strong_convergence} for the definition of strong convergence):
\bb\label{def_strong_limit}
\Phi_{\lambda,\mu,\nu}^{(n)}\coloneqq\lim\limits_{M\rightarrow\infty}\Phi_{\lambda,\mu,\nu}^{(M,n)}\qquad\text{(strongly)}\,,
\ee
and we give an explicit expression for $\Phi_{\lambda,\mu,\nu}^{(n)}$. 
The family of quantum channels $\{\Phi_{\lambda,\mu,\nu}^{(n)}\}_{n\in\N}$ defined in~\eqref{def_strong_limit} is a quantum memory channel~\cite{memory-review}, which completely defines our new model of optical fibre with memory effects, termed as "Delocalised Interaction Model" (DIM). In the DIM, when Alice transmits an $n$-mode state $\rho$ through $n$ uses of the optical fibre, Bob receives $n$ signals in the $n$-mode state $\Phi_{\lambda,\mu,\nu}^{(n)}(\rho)$.

\begin{thm}\label{thm_strong_sm}
{Let $\lambda\in[0,1]$ be the transmissivity, let $\nu\ge 0$ be the thermal noise, let $\mu\in[0,1]$ be the memory parameter, and let $n\in\N$ be the number of channel uses. For any $M\in\N$ let $\Phi_{\lambda,\mu,\nu}^{(M,n)}$ be the $n$-mode quantum channel defined in \eqref{concatenated}. The sequence of quantum channels $\{\Phi_{\lambda,\mu,\nu}^{(M,n)}\}_{M\in\N}$ strongly converges for $M\rightarrow\infty$ to a quantum channel $\Phi_{\lambda,\mu,\nu}^{(n)}$ of the form 
        \bb\label{eq_unitarily_equivalence_sm}
        \Phi_{\lambda,\mu,\nu}^{(n)}=\mathcal{U}_{O_2^\intercal}\circ \left(\bigotimes_{i=1}^{n} \mathcal{E}_{\eta_i^{(n,\lambda,\mu)},\nu}\right)\circ \mathcal{U}_{O_1}\,,
    \ee 
    where:
    \begin{itemize}
        \item $\mathcal{U}_{O_2^\intercal}$ and $\mathcal{U}_{O_1}$ are $n$-mode "passive unitary channels" (see~\eqref{definition_passive unitary} for the definition of passive unitary channels) associated with the $n\times n$ orthogonal matrices $O_2^\intercal$ and $O_1$, which are defined below in \eqref{singular_A_prop}. 
        \item The transmissivities $\{\eta_i^{(n,\lambda,\mu)}\}_{i=1,2,\ldots,n}$ of the thermal attenuators in~\eqref{eq_unitarily_equivalence_sm} and the $n\times n$ orthogonal matrices $O_2^\intercal$ and $O_1$ are defined via the singular value decomposition of the $n\times n$ real matrix $\bar{A}^{(n,\lambda,\mu)}$, which is defined below in \eqref{matrix_sm}, as
        \bb\label{singular_A_prop}
            \bar{A}^{(n,\lambda,\mu)}&=O_2^\intercal D O_1\,,
        \ee
        where $D\coloneqq\text{diag}\left(\sqrt{\eta_1^{(n,\lambda,\mu)}},\sqrt{\eta_2^{(n,\lambda,\mu)}},\ldots, \sqrt{\eta_n^{(n,\lambda,\mu)}}\right)$ is an $n\times n$ positive diagonal matrix with 
        \bb
            1\ge\eta_n^{(n,\lambda,\mu)}\ge \eta_{n-1}^{(n,\lambda,\mu)}\ge \ldots\ge \eta_1^{(n,\lambda,\mu)}\ge0\,.
        \ee
        \item For each $i,k\in\{1,2,\ldots,n\}$, the $(i,k)$ element of the $n\times n$ real matrix $\bar{A}^{(n,\lambda,\mu)}$ is defined as
        \bb\label{matrix_sm}
            \bar{A}^{(n,\lambda,\mu)}_{i,k}= \Theta(i-k)\,\sqrt{\lambda}\mu^{\frac{i-k}{2}} L_{i-k}^{(-1)}(-\ln\lambda)\,.
        \ee
    where $\Theta$ denotes the Heaviside Theta function defined as 
    \bb
\Theta(x)\coloneqq\begin{cases}
1, & \text{if $x\ge0$,} \\
0, & \text{otherwise.}
\end{cases}
    \ee
    and where $\{L_m^{(-1)}\}_{m\in\N}$ are the generalised Laguerre polynomials defined as 
    \bb
        L_m^{(-1)}(x)&\coloneqq \sum_{l=1}^m\binom{m-1}{l-1}\frac{(-x)^l}{l!}\qquad\forall\,m\in\mathbb{N}^+ \,,\\
        L_0^{(-1)}(x)&\coloneqq 1\,,
    \ee for all $x\in\mathbb{R}$.
    \end{itemize}
    That is, for any $n$-mode state $\rho$ the following limit holds:
    \bb\label{strong_limit_prop}
    \lim\limits_{M\rightarrow\infty} & \Big\| \Phi_{\lambda,\mu,\nu}^{(M,n)}(\rho)- \Phi_{\lambda,\mu,\nu}^{(n)}(\rho)  \Big\|_1=0\,,
    \ee
    where $\|\cdot\|_1$ denotes the trace norm. Eq.~\eqref{eq_unitarily_equivalence_sm} establishes that $\Phi_{\lambda,\mu,\nu}^{(n)}$ is unitary equivalent to a tensor product of $n$ distinct thermal attenuators $\bigotimes_{i=1}^{n} \mathcal{E}_{\eta_i^{(n,\lambda,\mu)},\nu}$. In particular, the capacities of the quantum memory channels $\{\Phi_{\lambda,\mu,\nu}^{(n)}\}_{n\in\N}$ and $\{\bigotimes_{i=1}^{n} \mathcal{E}_{\eta_i^{(n,\lambda,\mu)},\nu}\}_{n\in\N}$ are equal.}
\end{thm}

Before presenting the proof of Theorem~\ref{thm_strong_sm}, let us understand its meaning.  Suppose Alice prepares an $n$-mode state $\rho$, applies the passive transformation $\mathcal{U}_{O_1^\intercal}$, and then sends each of the $n$ signals through the optical fibre. Suppose further that, after receiving the signals, Bob applies the passive transformation $\mathcal{U}_{O_2}$. Consequently, Bob receives the state $\mathcal{U}_{O_2}\circ \Phi_{\lambda,\mu,\nu}^{(n)}\circ\mathcal{U}_{O_1^\intercal}(\rho)$, which can be expressed as
\bb
\mathcal{U}_{O_2}\circ \Phi_{\lambda,\mu,\nu}^{(n)}\circ\mathcal{U}_{O_1^\intercal}(\rho)=\bigotimes_{i=1}^n\mathcal{E}_{\eta_i^{(n,\lambda,\mu)},\nu}(\rho)\,,
\ee
thanks to Theorem~\ref{thm_strong_sm}.
Therefore, by employing appropriate encoding and decoding passive unitary transformations, Alice and Bob can effectively communicate via 
\bb\label{tensor_product_memory}
    \left\{\bigotimes_{i=1}^n \mathcal{E}_{\eta_i^{(n,\lambda,\mu)},\nu} \right\}_{n\in\mathbb{N}}\,,
\ee 
instead of via $\{ \Phi_{\lambda,\mu,\nu}^{(n)}\}_{n\in\mathbb{N}}$. To summarise, the quantum memory channel $\Phi_{\lambda,\mu,\nu}^{(n)}$, which models $n$ uses of the optical fibre, is unitary equivalent to a tensor product of $n$ distinct thermal attenuators. These attenuators have the same thermal noise $\nu$ but differ in their transmissivities. This equivalence allows us to understand the behaviour of the quantum memory channel in terms of individual thermal attenuators operating on each signal independently. The transmissivities $\{\eta_i^{(n,\lambda,\mu)}\}_{i=1,2,\ldots,n}$ can be calculated via the singular value decomposition of the matrix reported in~\eqref{matrix_sm}. Notably, each element of this matrix depends solely on the difference between the row and column indices, making it a \emph{Toeplitz matrix}. In Section~\ref{section_toeplitz} we will review important properties of Toeplitz matrices. Fortunately, these matrices allow for the calculation of the asymptotic distribution of their singular values as the dimension approaches infinity. Leveraging this observation, we will be able to analytically determine the transmissivities $\{\eta_i^{(n,\lambda,\mu)}\}_{i=1,2,\ldots,n}$ as the number of uses of the optical fibre, $n$, approaches infinity.

Now, we are ready to proceed with the proof of Theorem~\ref{thm_strong_sm}, which generalises the methods used in~\cite{Memory1,Memory2,Memory3}.

\begin{proof}[Proof of Theorem~\ref{thm_strong_sm}]
    { The structure of the proof is as follows:
     \begin{itemize}
         \item \textbf{Part 1)} First, we observe that, since the interferometric representation of the channel $\Phi_{\lambda,\mu,\nu}^{(M,n)}$ involves only beam splitters (see Fig.~\ref{fig:griglia_SM}), the output annihilation operators $\{a_{i,M}\}_{i=1,2,\ldots,n}$ can be expressed as a linear combination of the input annihilation operators $\{a_{i}\}_{i=1,2,\ldots,n}$, without any creation operators.
         \item \textbf{Part 2)} Second, we show that such a property implies that the sequence $\{\Phi_{\lambda,\mu,\nu}^{(M,n)}\}_{M\in\N}$ strongly converges to a quantum channel $\Phi_{\lambda,\mu,\nu}^{(n)}$ that is unitary equivalent to a tensor product of $n$ distinct thermal attenuators. 
         \item \textbf{Part 3)} Finally, in order to explicitly find the transmissivities of such thermal attenuators and the unitaries that achieve such unitary equivalence, we explicitly calculate the coefficients of the linear combination of the input annihilation operators in the expression of the output annihilation operators.
     \end{itemize} 
     We adopt the notation introduced in Fig.~\ref{fig:griglia_SM} and in Fig.~\ref{fig:griglia_main}(c). Specifically, $n\in\N$ represents the number of uses of the optical fibre, while $M\in\N$ represents the number of infinitesimal optical fibres. Moreover, for each $i=1,2,\ldots,n$ the annihilation operator associated with the $i$th input signal is denoted as $a_i$, and for each $j=1,2,\ldots,M$ the annihilation operator associated with the $j$th single-mode environment $E_1^{(j)}$ is denoted as $b_1^{(j)}$. In addition, as shown in Fig.~\ref{fig:griglia_main}(c), for each $i=2,3,\ldots n$ and $j=1,2,\ldots,M$ we denote as $E_{i}^{(j)}$ the single-mode environment associated with the thermal attenuator $\mathcal{E}_{\mu,\nu}$ that represents the thermalisation process acting on the $j$th environment $E_1^{(j)}$ right before the $i$th use of the fibre. As shown in Fig.~\ref{fig:griglia_SM}, the annihilation operator of the environment $E_{i}^{(j)}$ is denoted as $b_i^{(j)}$.  
}

\textbf{Part 1)} Let us write the quantum channel $\Phi_{\lambda,\mu,\nu}^{(M,n)}$ as
\bb\label{phys_rep}
    \Phi_{\lambda,\mu,\nu}^{(M,n)}(\cdot)=\Tr_{E}\left[U_{\lambda,\mu}^{(n,M)}\left((\cdot)\otimes \tau_\nu^{\otimes nM}\right){U_{\lambda,\mu}^{(n,M)}}^\dagger\right]\,,
\ee
where $\Tr_{E}$ denotes the partial trace over all the environment systems ($E_{i}^{(j)}$ for all $i=1,2,\ldots,n$ and all $j=1,2,\ldots,M$) and where $U_{\lambda,\mu}^{(n,M)}$ is the unitary whose interferometric representation is depicted in Fig.~\ref{fig:griglia_SM}. In addition, as depicted in Fig.~\ref{fig:griglia_SM}, for each $i=1,2,\ldots,n$ let 
\bb\label{def_a_i_M}
    a_{i,M}\coloneqq {U_{\lambda,\mu}^{(n,M)}}^\dagger a_{i} U_{\lambda,\mu}^{(n,M)}
\ee
be the output of the $i$th annihilation operator in Heisenberg representation. Crucially, since the unitary $U_{\lambda,\mu}^{(n,M)}$ is composed of only beam splitters (see Fig.~\ref{fig:griglia_SM}) and since the beam splitter transformations in~\eqref{transf_beam} map annihilation operators into annihilation operators (without introducing any creation operator), it follows that each output annihilation operator $a_{i,M}$ can be expressed as a linear combination of input annihilation operators and environmental annihilation operators. In formula,
\bb\label{heisen_annihil_output}
    a_{i,M}=\sum_{h=1}^n  A^{(M,n,\lambda,\mu)}_{i,h}\,a_h+\sum_{l=1}^{M}\sum_{h=1}^{n}E^{(M,n,\lambda,\mu)}_{l,i,h}\,b_{h}^{(l)}\,,
\ee
where $ A^{(M,n,\lambda,\mu)}_{i,h}$ and $E^{(M,n,\lambda,\mu)}_{l,i,h}$ are suitable real numbers satisfying
\bb\label{comm_rel}
[a_{i,M},(a_{j,M})^{\dagger}]=\delta_{i,j}\mathbb{1}\,,
\ee
where the latter follows from the definition of $a_{i,M}$ in \eqref{def_a_i_M}.

\textbf{Part 2)} By defining the following vectors of annihilation operators
\bb
\mathbf{a}&\coloneqq(a_1,\,a_2,\ldots,\,a_n)^\intercal\,,\\ \mathbf{a_M}&\coloneqq(a_{1,M},\,a_{2,M},\ldots,\,a_{n,M})^\intercal\,,\\
\mathbf{b}&\coloneqq (b_{1}^{(1)},\,b_{1}^{(2)},\ldots,\, b_{1}^{(M)},\, b_{2}^{(1)},\,b_{2}^{(2)},\ldots,\,b_{2}^{(M)},\\
&\qquad\ldots,\,b_{n}^{(1)},\,b_{n}^{(2)},\ldots,\,b_{n}^{(M)})^\intercal\,,
\ee
one can rewrite~\eqref{heisen_annihil_output} as
\bb\label{vector_annihilation}
    \mathbf{a_M}=A^{(M,n,\lambda,\mu)}\mathbf{a}+E^{(M,n,\lambda,\mu)}\mathbf{b}\,,
\ee
where $A^{(M,n,\lambda,\mu)}$ is an $n\times n$ real matrix and $E^{(M,n,\lambda,\mu)}$ is an $n\times nM$ real matrix which satisfy
\bb\label{implication_comm}
    A^{(M,n,\lambda,\mu)}{A^{(M,n,\lambda,\mu)}}^\intercal+E^{(M,n,\lambda,\mu)}{E^{(M,n,\lambda,\mu)}}^\intercal=\mathbb{1}_{n\times n}\,,
\ee
thanks to~\eqref{comm_rel}. Here, $\mathbb{1}_{n\times n}$ denotes the $n\times n$ identity matrix. In the Part 3) of the proof we will show that the limit of each matrix element $\lim\limits_{M\rightarrow\infty}A_{i,h}^{(M,n,\lambda,\mu)}$ does exist and it is precisely given by~\eqref{matrix_sm}. Let $\bar{A}^{(n,\lambda,\mu)}$ be the $n\times n$ matrix whose $(i,h)$ element is given by
\bb\label{lim_elements_def}
    \bar{A}_{i,h}^{(n,\lambda,\mu)}\coloneqq \lim\limits_{M\rightarrow\infty}A_{i,h}^{(M,n,\lambda,\mu)}\,.
\ee
Now, we are going to prove the validity of~\eqref{strong_limit_prop}. For this purpose, let $\rho$ be an $n$-mode input state. The characteristic function of $\Phi_{\lambda,\mu,\nu}^{(M,n)}(\rho)$ can be calculated as
\bb
&\chi_{\Phi_{\lambda,\mu,\nu}^{(M,n)}(\rho)}(\mathbf{z})\coloneqq \Tr\left[\Phi_{\lambda,\mu,\nu}^{(M,n)}(\rho)\,e^{\mathbf{a}^\dagger\mathbf{z}-\mathbf{z}^\dagger\mathbf{a} }\right]\\&\eqt{(i)} \chi_{\rho}\left({A^{(M,n,\lambda,\mu)}}^\intercal\mathbf{z}\right)\chi_{\tau_\nu^{\otimes nM}}\left({E^{(M,n,\lambda,\mu)}}^\intercal\mathbf{z}\right)\\&\eqt{(ii)}\chi_{\rho}\left({A^{(M,n,\lambda,\mu)}}^\intercal\mathbf{z}\right)e^{-(\nu+\frac{1}{2})\mathbf{z}^\dagger E^{(M,n,\lambda,\mu)}{E^{(M,n,\lambda,\mu)}}^\intercal \mathbf{z}}\\&\eqt{(iii)}\chi_{\rho}\left({A^{(M,n,\lambda,\mu)}}^\intercal\mathbf{z}\right)e^{-(\nu+\frac{1}{2})\mathbf{z}^\dagger ( \mathbb{1}_{n\times n}- A^{(M,n,\lambda,\mu)}{A^{(M,n,\lambda,\mu)}}^\intercal ) \mathbf{z}}
\ee
for all $\mathbf{z}\in\mathbb{C}^{n}$, where: in (i) we exploited~\eqref{phys_rep} and~\eqref{vector_annihilation}; in (ii) we used the expression in~\eqref{charact_therm} of the characteristic function of the thermal state $\tau_\nu$; in (iii) we just exploited~\eqref{implication_comm}. Since $\lim\limits_{M\rightarrow\infty} A^{(M,n,\lambda,\mu)}\mathbf{z} = \bar{A}^{(n,\lambda,\mu)}\mathbf{z}$ (thanks to~\eqref{lim_elements_def}) and since every characteristic function is continuous~\cite{BUCCO,HOLEVO}, it holds that
\bb\label{lim_char}
&\lim\limits_{M\rightarrow\infty}\chi_{\Phi_{\lambda,\mu,\nu}^{(M,n)}(\rho)}(\mathbf{z}) \\
&=\chi_{\rho}\left(\bar{A}^{(n,\lambda,\mu)^\intercal}\mathbf{z}\right)e^{-(\nu+\frac{1}{2})\mathbf{z}^\dagger ( \mathbb{1}_{n\times n}- \bar{A}^{(n,\lambda,\mu)}\bar{A}^{(n,\lambda,\mu)^\intercal} ) \mathbf{z}}\,.
\ee
By applying the singular value decomposition, $\bar{A}^{(n,\lambda,\mu)}$ can be written as
\bb\label{singular_A}
    \bar{A}^{(n,\lambda,\mu)}&=O_2^\intercal D O_1\,,
\ee
where $O_1$ and $O_2$ are $n\times n$ real orthogonal matrices, $D=\text{diag}(s_1^{(n,\lambda,\mu)},s_2^{(n,\lambda,\mu)},\ldots, s_n^{(n,\lambda,\mu)})$ is an $n\times n$ positive diagonal matrix with 
\bb
    1\ge s_n^{(n,\lambda,\mu)}\ge s_{n-1}^{(n,\lambda,\mu)}\ge \ldots\ge s_1^{(n,\lambda,\mu)}\ge0\,,
\ee
where the fact that $s_n^{(n,\lambda,\mu)}\le1$ follows from~\eqref{implication_comm}. Let us define for all $i=1,2,\ldots,n$ the $i$th \emph{transmissivity} $\eta_i^{(n,\lambda,\mu)}$ as the square of the $i$th diagonal element of $D$, i.e.
\bb\label{effect_transm_discrete}
\eta_i^{(n,\lambda,\mu)}\coloneqq (s_i^{(n,\lambda,\mu)})^2\,.
\ee
Note that $\eta_i^{(n,\lambda,\mu)}$ is a proper transmissivity because $0\le(s_i^{(n,\lambda,\mu)})^2\le 1$. In addition, for all $\mathbf{z}\in\mathbb{C}^n$ it holds that
\bb\label{pointwise_convergence}
     \lim\limits_{M\rightarrow\infty}\chi_{\Phi_{\lambda,\mu,\nu}^{(M,n)}(\rho)}(\mathbf{z}) 
    &\eqt{(i)}\chi_{\rho}\left(   O_1^\intercal D O_2   \mathbf{z} \right)e^{-(\nu+\frac{1}{2})\mathbf{z}^\dagger ( \mathbb{1}_{n\times n}- O_2^\intercal D^2 O_2 ) \mathbf{z}}\\
    &\eqt{(ii)}  \chi_{\mathcal{U}_{O_1}(\rho)}\left( D O_2  \mathbf{z} \right)e^{-(\nu+\frac{1}{2})\mathbf{z}^\dagger O_2^\intercal ( \mathbb{1}_{n\times n}-  D^2  ) O_2\mathbf{z}}\\
    &\eqt{(iii)}  \chi_{\mathcal{U}_{O_1}(\rho)}\left( D O_2  \mathbf{z} \right)\chi_{\tau_\nu^{\otimes n}}(\sqrt{\mathbb{1}_{n\times n}-  D^2 } O_2\mathbf{z} )\\
    &\eqt{(iv)}  \chi_{(\bigotimes_{i=1}^n\mathcal{E}_{\eta_i^{(n,\lambda,\mu)},\nu})\circ\mathcal{U}_{O_1}(\rho)}\left(  O_2  \mathbf{z} \right)\\
    &\eqt{(v)}\chi_{\mathcal{U}_{O_2^\intercal}\circ\left(\bigotimes_{i=1}^n\mathcal{E}_{\eta_i^{(n,\lambda,\mu)},\nu}\right)\circ\mathcal{U}_{O_1}(\rho)}\left(   \mathbf{z} \right)\,\,,
\ee
where: in (i) we used~\eqref{lim_char} and~\eqref{singular_A}; in (ii) we exploited~\eqref{passive_transf_charact}; (iii) is a consequence of the expression in~\eqref{charact_therm} of the characteristic function of the thermal state $\tau_\nu$; (iv) comes from the expression in~\eqref{caract_att} of the characteristic function of the output state of the thermal attenuator; in (v) we exploited again~\eqref{passive_transf_charact}. Hence,~\eqref{pointwise_convergence} establishes that the characteristic function of $\{\Phi_{\lambda,\mu,\nu}^{(M,n)}(\rho)\}_{M\in\N}$ converges pointwise for $M\rightarrow\infty$ to the characteristic function $\mathcal{U}_{O_2^\intercal}\circ\left(\bigotimes_{i=1}^n\mathcal{E}_{\eta_i^{(n,\lambda,\mu)},\nu}\right)\circ\mathcal{U}_{O_1}(\rho)$. Consequently, Lemma~\eqref{lemmino_charact} implies that  
\bb
    &\lim\limits_{M\rightarrow\infty}\Big\| \Phi_{\lambda,\mu,\nu}^{(M,n)}(\rho)- \mathcal{U}_{O_2^\intercal}\circ\left(\bigotimes_{i=1}^n\mathcal{E}_{\eta_i^{(n,\lambda,\mu)},\nu}\right)\circ\mathcal{U}_{O_1}(\rho)\Big\|_1=0\,,
\ee
i.e.~we have proved~\eqref{strong_limit_prop}.

\textbf{Part 3)} Now, we only need to show that the limit $\lim\limits_{M\rightarrow\infty}A_{i,h}^{(M,n,\lambda,\mu)}$ does exist and it is given by~\eqref{matrix_sm} for all $i,h=1,2,\ldots n$. To proceed, we begin by calculating the elements of the matrix $A^{(M,n,\lambda,\mu)}$ defined in~\eqref{vector_annihilation}. Our goal is thus to express $a_{i,M}$ in terms of $a_{1},a_2,\ldots,a_n$.

In the following derivation, $m_{i,j}$ and $a_{i,j}$ denote the annihilation operators in Heisenberg representation introduced in Fig.~\ref{fig:griglia_SM}. Specifically, $m_{i,j}$ denotes the Heisenberg evolution of the annihilation operator $b_1^{(j)}$ right before the $(i+1)$th use of the fibre. Moreover, $a_{i,j}$ denotes the Heisenberg evolution of the annihilation operator $a_i$ of the $i$th input signal right after its transmission through the first $j$ beam splitters of transmissivity $\lambda^{1/M}$. Additionally, we use the notations $a_{i,0}\coloneqq a_{i}$ and $m_{0,i}\coloneqq b^{(i)}_1$. 
By exploiting~\eqref{transf_beam}, one deduces that for all $i=1,2,\ldots,n$ and all $j=1,2,\ldots,M$ it holds that
\bb\label{eq_a_m}
a_{i,j}&=\sqrt{\lambda^{1/M}}\,a_{i,j-1}+\sqrt{1-\lambda^{1/M}}\,m_{i-1,j}\,,\\
m_{i,j}&=-\sqrt{1-\mu}\,b_{i+1}^{(j)}+\sqrt{\mu\lambda^{1/M}}\,m_{i-1,j} -\sqrt{\mu(1-\lambda^{1/M})}\,a_{i,j-1}\,.
\ee
Consequently, $a_{i,M}$ can be expressed as a linear combination of the annihilation operators $\{a_i\}_i$ and $\{b_i^{(j)}\}_{i,j}$.  In order to calculate the matrix $A^{(M,n,\lambda,\mu)}$, we only need to calculate the coefficients of $\{a_i\}_i$ in the expression of $a_{i,M}$, while we do not need to calculate the coefficients of $\{b_{i}^{(j)}\}_{i,j}$. Hence, we introduce the relation $A\simeq B$ between operators as follows:
\bb
 A\simeq B\,\Longleftrightarrow \, A=B+\text{linear combination of $\{b_{i}^{(j)}\}_{i,j}$. }
\ee
Note that~\eqref{eq_a_m} implies that for all $i=1,2,\ldots,n$ and all $j=1,2,\ldots,M$ it holds that
\bb\label{a_transf}
    a_{i,j}&=\sqrt{\lambda^{j/M}}\,a_i+\sqrt{1-\lambda^{1/M}}\sum_{l=1}^j \lambda^{\frac{j-l}{2M}}\,m_{i-1,l}
\ee
and
\bb\label{m_recursive}
    m_{i,j}&\simeq  \sqrt{\mu\lambda^{1/M}}m_{i-1,j}-\sqrt{\mu(1-\lambda^{1/M})}a_{i,j-1}\,\\
           &=\sqrt{\mu\lambda^{1/M}}m_{i-1,j}-\sqrt{\mu(1-\lambda^{1/M})\lambda^{(j-1)/M}}a_i \\
           &\quad-\sqrt{\mu}(1-\lambda^{1/M})\sum_{l=1}^{j-1} \lambda^{\frac{j-1-l}{2M}}m_{i-1,l}\,.
\ee
    From~\eqref{a_transf} we deduce that in order to express $a_{i,M}$ in terms of $\{a_i\}_i$, we now only need to express $\{m_{i,l}\}_{i,l}$ in terms of $\{a_i\}_i$ by solving the recurrence relation in~\eqref{m_recursive}. One can show that the latter implies that for all $i=1,2,\ldots,n$ and all $j=2,3\ldots,M$ it holds that
\bb
    m_{i,1}&\simeq -\sqrt{\mu(1-\lambda^{1/M})}\sum_{l=1}^i\left({\mu\lambda^{1/M}}\right)^{\frac{i-l}{2}}\,a_l\,,\\
    m_{i,j}&\simeq \sqrt{\mu\lambda^{1/M}}\,m_{i-1,j}-\sqrt{\mu}\,m_{i-1,j-1}+\sqrt{\lambda^{1/M}}\,m_{i,j-1}\,.
\ee
As a consequence of this, substituting terms $m_{l,j}$ with $l\leq i$, through repeated application of the second equality, and finally dropping the term $m_{0,j}$, one can show that 
\bb\label{eq_m0}
    m_{i,j}\simeq \lambda^{\frac{1}{2M}}\,m_{i,j-1}+(\lambda^{\frac{1}{2M}}-\lambda^{-\frac{1}{2M}})\sum_{l=1}^{i-1}(\mu\lambda^{1/M})^{\frac{i-l}{2}}\,m_{l,j-1}\,
\ee
We stress that, here and in the following, summations are zero whenever the lower extreme is strictly larger than the upper extreme.
By defining $\tilde{m}_{i,j}\coloneqq \lambda^{-\frac{j}{2M}}(\mu\lambda^{1/M})^{-\frac{i}{2}}\,m_{i,j}$, the recurrence relation in~\eqref{eq_m0} can be rewritten as
\bb\label{eq_m1}
    \tilde{m}_{i,j}&\simeq \tilde{m}_{i,j-1}+(1-\lambda^{-\frac{1}{M}})\sum_{l=1}^{i-1}\tilde{m}_{l,j-1}\,.
\ee
In order to solve this, let us introduce the following notation:
\bb
    \mathcal{F}_{i,n}\left(\{\tilde{m}_{ \cdot ,1}\}\right)\coloneqq\sum_{i_n=1}^{i-1}\sum_{i_{n-1}=1}^{i_n-1}\dots\sum_{i_1=1}^{i_2-1}\tilde{m}_{i_1,1}\,.
\ee
By noting that the cardinality of the set $\{(i_2,\ldots,i_n)\in\mathbb{N}^{n-1}\,:\,i>i_n>i_{n-1}>\ldots >i_2>d\}$ is $\binom{i-d-1}{n-1}$, we deduce that
\bb
    \mathcal{F}_{i,n}\left(\{\tilde{m}_{\cdot,1}\}\right)= \sum_{d=1}^{i-n}\binom{i-d-1}{n-1}\tilde{m}_{d,1}\,.
\ee
In addition, note that
\bb
\mathcal{F}_{i,n}\left(\{\tilde{m}_{\cdot,1}\}\right)&=0\qquad\text{if }i\le n\,,
\ee
and 
\bb
    \sum_{l=1}^{i-1}\mathcal{F}_{l,n}\left(\{\tilde{m}_{\cdot,1}\}\right)&=\mathcal{F}_{i,n+1}\left(\{\tilde{m}_{\cdot,1}\}\right)\,.
\ee
As a consequence, by exploiting~\eqref{eq_m1}, one can show by induction that for all $i=1,2,\ldots,n$ and all $j=1,2,\ldots, M$ it holds that
\bb
    \tilde{m}_{i,j}\simeq  \tilde{m}_{i,1}+\sum_{l=1}^{j-1}\binom{j-1}{l}(1-\lambda^{-1/M})^l\mathcal{F}_{i,l}\left(\{\tilde{m}_{\cdot,1}\}\right)\,,
\ee
and from this one can obtain
\bb
    m_{i,j}&\simeq\sum_{k=1}^i \mu^{\frac{i-k+1}{2}}\lambda^{\frac{j+i-k+1}{2M}}\sqrt{1-\lambda^{1/M}} \\
    &\qquad\left[\sum_{l=0}^{\min(i-k,j-1)}\binom{j-1}{l}\binom{i-k}{l}(1-\lambda^{-1/M})^{l}\right]a_k
\ee
Consequently,~\eqref{a_transf} implies that
\bb
    &a_{i,M}= \sqrt{\lambda}a_i+\sqrt{1-\lambda^{1/M}}\sum_{j=1}^M \lambda^{\frac{M-j}{2M}}m_{i-1,j}\\&\simeq \sum_{k=1}^n\Big[\sqrt{\lambda}\delta_{i,k}-\Theta(i-k-1)\,\mu^{\frac{i-k}{2}}\lambda^{\frac{M+i-k}{2M}}(1-\lambda^{1/M})\\
    &\qquad\times\sum_{j=1}^M\sum_{l=0}^{\min(i-k-1,j-1)}\binom{j-1}{l}\binom{i-k-1}{l}(1-\lambda^{-1/M})^{l}\Big]a_k,
\ee
where we have introduced the Heaviside function $\Theta(x)$ defined as $\Theta(x)=1$ if $x\ge 0$, and $\Theta(x)=0$ if $x<0$. Hence, we deduce that for all $i,k\in\{1,2,\ldots,n\}$ the $(i,k)$ matrix element of $A^{(M,n,\lambda,\mu)}$ is
\bb
   &A^{(M,n,\lambda,\mu)}_{i,k}= \sqrt{\lambda}\delta_{i,k}-\Theta(i-k-1)\,\mu^{\frac{i-k}{2}}\lambda^{\frac{M+i-k}{2M}}(1-\lambda^{1/M})\\
    &\qquad\times\sum_{j=1}^M\sum_{l=0}^{\min(i-k-1,j-1)}\binom{j-1}{l}\binom{i-k-1}{l}(1-\lambda^{-1/M})^{l}.
\ee
The sum over $j$ can be split in two parts, obtaining
\bb
   &(1-\lambda^{1/M})\sum_{j=1}^M\sum_{l=0}^{\min(i-k-1,j-1)}\binom{j-1}{l}\binom{i-k-1}{l}(1-\lambda^{-1/M})^{l}
   \\&=(1-\lambda^{1/M})\sum_{j=1}^{i-k}\sum_{l=0}^{j-1}\binom{j-1}{l}\binom{i-k-1}{l}(1-\lambda^{-1/M})^{l}\\
   &\,+(1-\lambda^{1/M})\sum_{j=i-k+1}^M\sum_{l=0}^{i-k-1}\binom{j-1}{l}\binom{i-k-1}{l}(1-\lambda^{-1/M})^{l}.
\ee

The limit for $M\rightarrow \infty$ of the first piece is zero, while for the second we use that

\bb
  \lim_{M\rightarrow \infty}(1-\lambda^{1/M})\sum_{j=i-k+1}^{M} \binom{j-1}{l}(1-\lambda^{-1/M})^{l}=-\frac{(\ln\lambda)^{l+1}}{(l+1)!}
\ee
to obtain
\bb\label{eq_m0_3}
    &\bar{A}^{(n,\lambda,\mu)}_{i,k}\coloneqq \lim\limits_{M\rightarrow\infty}A^{(M,n,\lambda,\mu)}_{i,k}\\
    &=\sqrt{\lambda}\left[\delta_{i,k}+\Theta(i-k-1)\,\mu^{\frac{i-k}{2}}\sum_{l=1}^{i-k}\binom{i-k-1}{l-1}\frac{(\ln\lambda)^l}{l!}\right].
\ee
By defining the generalised Laguerre polynomial $L_m^{(-1)}(x)\coloneqq \sum_{l=1}^m\binom{m-1}{l-1}\frac{(-x)^l}{l!}$\ for all $x\in\mathbb{R}$ and $m\in\mathbb{N}^+ $ and $L_0^{(-1)}(x)\coloneqq 1$, for all $i,k\in\{1,2,\ldots,n\}$ the $(i,k)$ matrix element of $\bar{A}^{(n,\lambda,\mu)}$ in~\eqref{eq_m0_3} can be rewritten as
\bb
    \bar{A}^{(n,\lambda,\mu)}_{i,k}=  \Theta(i-k)\,\sqrt{\lambda}\mu^{\frac{i-k}{2}} L_{i-k}^{(-1)}(-\ln\lambda)  \,.
\ee
\end{proof}


\section{Problems of the "Localised" interaction model solved by the "Delocalised" one}~\label{sec_solution_property_sm}
Memory effects in optical fibres have been studied in~\cite{Memory1,Memory2,Memory3,Die-Hard-2-PRA,Die-Hard-2-PRL}. As explained in Section~\ref{sec_local_interaction_model}, the model examined in~\cite{Memory1,Memory2,Memory3,Die-Hard-2-PRA,Die-Hard-2-PRL} corresponds to a specific case of the DIM we have presented here, specifically the case in which we set $M=1$ (see Section~\ref{SM_sec_def_model} for the meaning of $M$). This corresponds to the configuration depicted in Fig.~\ref{fig:griglia_main}(b), where the interaction between the signal and the fibre is localised. For this reason, we termed the model analysed in~\cite{Memory1,Memory2,Memory3,Die-Hard-2-PRA,Die-Hard-2-PRL} as "Localised Interaction Model" (LIM). The memory quantum channel characterising the LIM is $\{\Phi_{\lambda,\mu,\nu}^{(1,n)}\}_{n\in\N}$ (see Section~\ref{SM_sec_def_model} for its definition).

However, it is important to note that the LIM is not realistic because it fails to satisfy the following two essential properties that any "reasonable" model of an optical fibre should possess:
\begin{itemize}
    \item \emph{Property 1}: When the transmissivity $\lambda$ is exactly equal to zero, no information can be transmitted;
    \item \emph{Property 2}: The model is consistent under concatenation of optical fibres. In other words, the concatenation of $k$ optical fibres, each with transmissivity $\lambda_1,\lambda_2,\ldots,\lambda_k$ and identical thermal noise $\nu$, results in another optical fibre with a transmissivity equal to the product of the individual transmissivities $\lambda_1\lambda_2\ldots\lambda_k$, and with the same thermal noise $\nu$.
\end{itemize}
\begin{remark}
The LIM, which is characterised by the quantum memory channel $\{\Phi_{\lambda,\mu,\nu}^{(1,n)}\}_{n\in\N}$, fails to satisfy Property 1.
\end{remark}
\begin{proof}
    The LIM is associated with the channel  $\{\Phi_{\lambda,\mu,\nu}^{(1,n)}\}_{n\in\N}$ that transmits information even if $\lambda$ is exactly zero. Indeed, for $\lambda=0$ the channel $\Phi_{0,\mu,\nu}^{(1,n)}$ can be expressed in terms of the \emph{quantum shift channel of order $1$}, the thermal attenuator $\mathcal{E}_{\mu,\nu}$, and the unitary phase space inversion operation $\mathcal{V}(\cdot)\coloneqq (-1)^{a^\dagger a}\cdot(-1)^{a^\dagger a}$. Specifically, if the state of the $i$th input signal is $\rho$, then the state of the $(i+1)$th output signal is $\mathcal{V}\circ\mathcal{E}_{\mu,\nu}(\rho)$, which is not constant with respect to $\rho$ for $\mu>0$. 
\end{proof}
In particular, from the above proof we conclude that the LIM would lead to the paradoxical result that an optical fibre with zero transmissivity and a memory parameter $\mu>0$ is equivalent to a memoryless optical fibre with transmissivity $\mu>0$ with a certain time-delay $\delta t$ (e.g.~$\delta t$ can be determined in terms of $\mu$ by $\mu = \exp(-\delta t/t_E)$), provided the output of the first signal is discarded.

Furthermore, the DIM with a finite value of $M$ is also unrealistic because it fails to satisfy Property 1, as we now demonstrate.
\begin{remark}
The DIM with a finite value of $M$, which is characterised by the quantum memory channel $\{\Phi_{\lambda,\mu,\nu}^{(M,n)}\}_{n\in\N}$, does not satisfy Property 1.
\end{remark}
\begin{proof}
    The model with a finite value of $M$ is associated with the channel $\{\Phi_{\lambda,\mu,\nu}^{(M,n)}\}_{n\in\N}$ that transmits information even if $\lambda$ is exactly zero. Indeed, for $\lambda=0$ the channel $\Phi_{0,\mu,\nu}^{(M,n)}$ can be expressed in terms of the \emph{quantum shift channel of order $M$}, the thermal attenuator $\mathcal{E}_{\mu,\nu}$, and the unitary phase space inversion operation $\mathcal{V}^{M\!\!\mod 2}$. Specifically, if the state of the $i$th input signal is $\rho$, then the state of the $(i+M)$th output signal is $\mathcal{V}^{M\!\!\mod 2}\circ\mathcal{E}_{\mu,\nu}(\rho)$, which is not constant with respect to $\rho$ for $\mu>0$. 
\end{proof}
From the above proof we conclude that the model with a finite value of $M$ would lead to the paradoxical result that an optical fibre with zero transmissivity and a memory parameter $\mu>0$ is equivalent to a memoryless optical fibre with transmissivity $\mu>0$ with a time-delay $M\,\delta t$, provided the output of the first $M$ signals is discarded.

Furthermore, the DIM with a finite value of $M$ also fails to satisfy Property 2, meaning that it is not consistent under the composition of optical fibres. This can be proved as follows.
\begin{remark}
    The DIM with a finite value of $M$, which is characterised by the quantum memory channel $\{\Phi_{\lambda,\mu,\nu}^{(M,n)}\}_{n\in\N}$, does not satisfy Property 2.
\end{remark}
\begin{proof}
    One can show that $\Phi_{\lambda_1,\mu,\nu}^{(M,n)}\circ\Phi_{\lambda_2,\mu,\nu}^{(M,n)}$ is not equal to $\Phi_{\lambda_1\lambda_2,\mu,\nu}^{(M,n)}$ in general. Indeed, if $\lambda_1=\lambda_2=0$, then the channel $\Phi_{0,\mu,\nu}^{(M,n)}\circ\Phi_{0,\mu,\nu}^{(M,n)}$ is expressed in terms of a \emph{quantum shift channel of order $2M$}, while the channel $\Phi_{0,\mu,\nu}^{(M,n)}$ is expressed in terms of a \emph{quantum shift channel of order $M$}.
\end{proof}

Let us now show that our model of optical fibre with memory effects, which corresponds to the DIM with $M\rightarrow\infty$, does satisfy both Property 1 and Property 2. We recall that such a model is characterised by the quantum memory channel $\{\Phi_{\lambda,\mu,\nu}^{(n)}\}_{n\in\N}$, which is defined via the following strong limit
\bb\label{def_strong_limit2}
\Phi_{\lambda,\mu,\nu}^{(n)}\coloneqq\lim\limits_{M\rightarrow\infty}\Phi_{\lambda,\mu,\nu}^{(M,n)}\qquad\text{(strongly)}\,,
\ee
meaning that we are taking into account the model in Fig.~\ref{fig:new_model} and in Fig.~\ref{fig:griglia_SM} with an infinite number of environments.
\begin{remark}\label{Remark_prop1}
    The DIM with $M\rightarrow\infty$, which is characterised by the quantum memory channel $\{\Phi_{\lambda,\mu,\nu}^{(n)}\}_{n\in\N}$, does satisfy Property 1. 
\end{remark}
\begin{proof}
    Fix an $n$-mode state $\rho$. For $\lambda=0$, if $M\ge n$ it holds that $\Phi_{0,\mu,\nu}^{(M,n)}(\rho)=\tau_\nu^{\otimes n}$. Consequently, we have $\Phi_{0,\mu,\nu}^{(n)}(\rho)=\tau_\nu^{\otimes n}$, indicating that the output state of the fibre is independent of the input state. This implies that the fibre does not transmit any information.
\end{proof}
\begin{remark}
    The DIM with $M\rightarrow\infty$, which is characterised by the quantum memory channel $\{\Phi_{\lambda,\mu,\nu}^{(n)}\}_{n\in\N}$, does satisfy Property 2. Mathematically, this means that for all $n\in\N$ and all $\lambda_1,\lambda_2\in[0,1]$ it holds that
    \bb\label{eq_thesis_comp}
        \Phi_{\lambda_1,\mu,\nu}^{(n)}\circ\Phi_{\lambda_2,\mu,\nu}^{(n)}=\Phi_{\lambda_1\lambda_2,\mu,\nu}^{(n)}
    \ee
\end{remark}
\begin{proof}
The validity of Property 2 becomes intuitive when observing Fig.~\ref{fig:griglia_SM} in the limit as $M\rightarrow\infty$. Nevertheless, let us provide a rigorous proof to establish its validity. Let us fix an $n$-mode state $\rho$. By exploiting~\eqref{lim_char}, the characteristic function of $\Phi_{\lambda_1\lambda_2,\mu,\nu}^{(n)}(\rho)$ is given by 
\bb\label{lim_char2}
    \chi_{\Phi_{\lambda_1\lambda_2,\mu,\nu}^{(n)}(\rho)}(\mathbf{z})&=\chi_{\rho}\left(\bar{A}^{(n,\lambda_1\lambda_2,\mu)^\intercal}\mathbf{z}\right)
    \\
    &\quad \times e^{-(\nu+\frac{1}{2})\mathbf{z}^\dagger ( \mathbb{1}_{n\times n}- \bar{A}^{(n,\lambda_1\lambda_2,\mu)}\bar{A}^{(n,\lambda_1\lambda_2,\mu)^\intercal} ) \mathbf{z}}
\ee
for all $\mathbf{z}\in\mathbb{C}^n$. On the other hand, the characteristic function of $\Phi_{\lambda_1,\mu,\nu}^{(n)}\circ\Phi_{\lambda_2,\mu,\nu}^{(n)}(\rho)$ is given by
\bb\label{lim_char3}
        &\chi_{ \Phi_{\lambda_1,\mu,\nu}^{(n)}\circ\Phi_{\lambda_2,\mu,\nu}^{(n)}(\rho) }(\mathbf{z})\\
        &=\chi_{ \Phi_{\lambda_2,\mu,\nu}^{(n)}(\rho) }\left(\bar{A}^{(n,\lambda_1,\mu)^\intercal}\mathbf{z}\right)  e^{-(\nu+\frac{1}{2})\mathbf{z}^\dagger ( \mathbb{1}_{n\times n}- \bar{A}^{(n,\lambda_1,\mu)}\bar{A}^{(n,\lambda_1,\mu)^\intercal} ) \mathbf{z}}\\
        &=\chi_{\rho }\left(\bar{A}^{(n,\lambda_2,\mu)^\intercal}\bar{A}^{(n,\lambda_1,\mu)^\intercal}\mathbf{z}\right) \\
        &\qquad\times e^{-(\nu+\frac{1}{2})\mathbf{z}^\dagger \bar{A}^{(n,\lambda_1,\mu)}( \mathbb{1}_{n\times n}- \bar{A}^{(n,\lambda_2,\mu)}\bar{A}^{(n,\lambda_2,\mu)^\intercal} ) \bar{A}^{(n,\lambda_1,\mu)^\intercal}\mathbf{z}} 
        \\
        &\qquad\times e^{-(\nu+\frac{1}{2})\mathbf{z}^\dagger ( \mathbb{1}_{n\times n}- \bar{A}^{(n,\lambda_1,\mu)}\bar{A}^{(n,\lambda_1,\mu)^\intercal} ) \mathbf{z}}\\
        &=\chi_{\rho }\left(\bar{A}^{(n,\lambda_2,\mu)^\intercal}\bar{A}^{(n,\lambda_1,\mu)^\intercal}\mathbf{z}\right) \\
        &\qquad\times
        e^{-(\nu+\frac{1}{2})\mathbf{z}^\dagger ( \mathbb{1}_{n\times n}- \bar{A}^{(n,\lambda_1,\mu)}\bar{A}^{(n,\lambda_2,\mu)}\bar{A}^{(n,\lambda_2,\mu)^\intercal}\bar{A}^{(n,\lambda_1,\mu)^\intercal} ) \mathbf{z}}
    \ee
for all $\mathbf{z}\in\mathbb{C}^n$. Since quantum states and characteristic functions are in one-to-one correspondence,~\eqref{lim_char2} and~\eqref{lim_char3} imply that, in order to show~\eqref{eq_thesis_comp}, we only need to show that
\bb\label{thesis_remark_prop2}
    \bar{A}^{(n,\lambda_1,\mu)}\bar{A}^{(n,\lambda_2,\mu)}=\bar{A}^{(n,\lambda_1\lambda_2,\mu)}\,.
\ee
By using~\eqref{matrix_sm}, for all $i,k=1,2,\ldots,n$ the element $(i,k)$ of the matrix $\bar{A}^{(n,\lambda_1,\mu)}\bar{A}^{(n,\lambda_2,\mu)}$ can be expressed as
\bb
    &(\bar{A}^{(n,\lambda_1,\mu)}\bar{A}^{(n,\lambda_2,\mu)})_{i,k}\\
    &=\sum_{j=0}^n\bar{A}^{(n,\lambda_1,\mu)}_{i,j}\bar{A}^{(n,\lambda_2,\mu)}_{j,k}\\&=\sum_{j=0}^n\Theta(i-j)\,\sqrt{\lambda_1}\mu^{\frac{i-j}{2}} L_{i-j}^{(-1)}(-\ln\lambda_1)\,\Theta(j-k) \sqrt{\lambda_2}\mu^{\frac{j-k}{2}} L_{j-k}^{(-1)}(-\ln\lambda_2)\\&=\Theta(k-i)\sqrt{\lambda_1\lambda_2}\mu^{\frac{i-k}{2}}\sum_{j=k}^iL_{i-j}^{(-1)}(-\ln\lambda_1)\,L_{j-k}^{(-1)}(-\ln\lambda_2)\\&=\Theta(k-i)\sqrt{\lambda_1\lambda_2}\mu^{\frac{i-k}{2}}\sum_{j=0}^{i-k}L_{i-k-j}^{(-1)}(-\ln\lambda_1)\,L_{j}^{(-1)}(-\ln\lambda_2)\\&\eqt{(i)}\Theta(k-i)\sqrt{\lambda_1\lambda_2}\mu^{\frac{i-k}{2}}\,L_{i-k}^{(-1)}\left(-\ln(\lambda_1\lambda_2)\right)\\&=\bar{A}^{(n,\lambda_1\lambda_2,\mu)}_{i,k}\,,
\ee
where in (i) we exploited the addition formula for Laguerre polynomials. Therefore,~\eqref{thesis_remark_prop2} holds, which, as mentioned earlier, is sufficient to conclude the proof.

\end{proof}

To summarise, while the DIM with a finite value of $M$ does not satisfy Property 1 and Property 2, the DIM with $M\rightarrow\infty$ does. In particular, the DIM with $M\rightarrow\infty$ is more realistic than the LIM analysed in~\cite{Memory1,Memory2,Memory3,Die-Hard-2-PRA,Die-Hard-2-PRL} (which corresponds to the DIM with $M=1$).

Since the DIM with a finite value of $M$ fails to satisfy Property 1 and Property 2, it is not a reasonable model of optical fibre with memory effects. We will now provide an intuitive explanation of this fact and also explain why the DIM with $M \rightarrow \infty$ serves as a reasonable model. This explanation hinges on the fact that the interaction between an optical signal and the optical fibre can occur at \emph{any} position along the entire length of the fibre. Therefore, to accurately model this interaction, it is essential to represent the optical fibre as composed of infinitesimal constituents by taking the limit $M \rightarrow \infty$, corresponding to a continuous optical fibre. The reasoning behind proposing our DIM with \(M \rightarrow \infty\) is inspired by the (classical) model of a vibrating string proposed by Jean Le Rond d'Alembert in 1746~\cite{alembert1749suite2}. To model a vibrating string, d'Alembert first considered $k$ discrete masses connected by springs. He then took the limit as $k$ approached infinity, while the size and mass of each constituent approached zero, ensuring the string's density remained finite. This approach led d'Alembert to derive the celebrated wave equation for a one-dimensional string~\cite{alembert1749suite2}.

\section{Avram--Parter theorem and its consequences}\label{section_toeplitz}
In this section, we review the \emph{Avram--Parter's theorem}~\cite{Avram1988,Parter1986} and, by leveraging it, we establish a matrix-analysis result in the forthcoming Theorem~\ref{Thm_consequence_avram_parter} that will play a crucial role in the analysis of the quantum communication performances which are achievable within our model of optical fibre with memory effects. The Avram--Parter's theorem is a matrix-analysis result concerning the asymptotic behaviour of the singular values of an $n\times n$ \emph{Toeplitz} matrix as $n\to\infty$ (see Theorem~\eqref{Avram--Parter} below for its statement). Such a theorem can be seen as a generalisation of the so-called Szeg\H{o} theorem~\cite{Szego1920, GRENADER}, which pertains to eigenvalues instead of singular values. Importantly, our Theorem~\ref{Thm_consequence_avram_parter}, derived by exploiting the Avram--Parter's theorem, appears to be novel in the field of matrix analysis and we believe that it may hold independent interest. We begin with the definition of Toeplitz matrices~\cite{toeplitz_review}.
\begin{definition}
    an $n\times n$ real matrix $A$ is called a Toeplitz matrix if there exist real numbers $\{a_j\}_{j}$ such that the elements of $A$ satisfy
    \bb 
        A_{i,k}=a_{i-k}\qquad\forall\, i,k\in\{1,2,\ldots,n\}\,.
    \ee
\end{definition}
In addition, a \emph{circulant matrix} is a square matrix that has all of its row vectors composed of the same elements, with each row vector rotated one element to the right relative to the preceding row vector. Circulant matrices have the advantageous property that their spectral and singular value decompositions can be analytically calculated. In a sense, infinite-dimensional Toeplitz matrices can be viewed as infinite-dimensional circulant matrices. This intuitive idea forms the basis for the proof of the so-called Szeg\H{o} theorem~\cite{Szego1920, GRENADER}, which pertains to eigenvalues, as well as the Avram--Parter theorem~\cite{Avram1988,Parter1986}, which pertains to singular values. For an understanding of topics related to Toeplitz matrices, Szeg\H{o} theorem, and Avram--Parter theorem, we recommend referring to the references~\cite{GRUDSKY, BOETTCHER, GARONI, Grudsky-lectures, toeplitz_review}. It is worth noting that the Szeg\H{o} theorem has been previously applied in the literature of quantum information theory, specifically in works such as~\cite{Memory1,Memory2,Memory3,LL-bosonic-dephasing}.

\begin{thm}[\emph{[Avram--Parter's Theorem]}]\label{Avram--Parter}
    Let $\{a_k\}_{k\in\mathbb{Z}}$ be a sequence of real numbers. For all $n\in\N$ let $T^{(n)}$ be the $n\times n$ Toeplitz matrix with elements $T^{(n)}_{k,j}\coloneqq a_{k-j}$ for all $k,j\in\{1,2,\ldots,n\}$. Let $\{s^{(n)}_j\}_{j=1,2,\ldots,n}$ be the singular values of the matrix $T^{(n)}$ ordered in increasing order in $j$. Assume that the function
    $s:[0,2\pi]\to \C$ defined by 
    \bb\label{asymptotic_distribution0}
    s(x)\coloneqq \left|\sum_{k=-\infty}^{+\infty}a_k\,e^{\frac{ikx}{2}}\right|\,\qquad\forall \,x\in[0,2\pi]\,,
    \ee
    is bounded. Then for all continuous function $F:\mathbb{R}\to\mathbb{R}$ with bounded support it holds that
    \begin{equation}
        \lim\limits_{n\rightarrow\infty}\frac{1}{n}\sum_{j=1}^nF\left(s^{(n)}_j\right)=\int_{0}^{2\pi}\frac{\mathrm{d}x}{2\pi}F\left(s(x)\right)\,.
    \end{equation}
\end{thm}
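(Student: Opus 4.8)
The plan is to prove the identity first for even polynomial test functions and then pass to general continuous $F$ of bounded support by Weierstrass approximation. Set $M\coloneqq\sup_x s(x)$ and let $f(\theta)\coloneqq\sum_{k}a_k e^{ik\theta}$ be the symbol, so that $s(x)=|f(x/2)|$ and $\|f\|_\infty=M<\infty$ by hypothesis. Since the finite section $T^{(n)}$ is the compression of the bounded Toeplitz operator $T[f]$ to the first $n$ coordinates, $\|T^{(n)}\|_{\mathrm{op}}\le\|f\|_\infty=M$, so every singular value satisfies $s_j^{(n)}\in[0,M]$. Writing $G(t)\coloneqq F(\sqrt t)$ and observing that the squares $(s_j^{(n)})^2$ are precisely the eigenvalues of the real symmetric positive semidefinite matrix $H_n\coloneqq (T^{(n)})^{T}T^{(n)}$, the claim is equivalent to
\[
\lim_{n\to\infty}\frac1n\Tr\,G(H_n)=\frac1{2\pi}\int_0^{2\pi}G\big(|f(\theta)|^2\big)\,\mathrm{d}\theta .
\]
Because $G$ is continuous on the compact interval $[0,M^2]$ that contains the spectrum of every $H_n$, it suffices to prove this for $G(t)=t^p$, i.e.\ to compute the normalised moments $\tfrac1n\Tr(H_n^p)$.

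The core of the argument is therefore the moment identity $\lim_{n\to\infty}\tfrac1n\Tr(H_n^p)=\tfrac1{2\pi}\int_0^{2\pi}|f(\theta)|^{2p}\,\mathrm{d}\theta$ for every $p\in\N$. I would establish this first for \emph{banded} symbols, i.e.\ trigonometric polynomials with only finitely many nonzero $a_k$. In that case $\Tr(H_n^p)$ is a finite sum of products of entries $a_k$, and the diagonal entries of $H_n^p$ all equal a common bulk value except for a number of boundary indices bounded by $p$ times the bandwidth, hence independent of $n$. Thus $\tfrac1n\Tr(H_n^p)$ converges to that bulk value, which a direct Fourier computation identifies with $\tfrac1{2\pi}\int_0^{2\pi}|f|^{2p}$. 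Equivalently, one may replace $T^{(n)}$ by the circulant $C_n$ with the same coefficients: the two differ only in $O(1)$ wrap-around corners, so $\tfrac1n|\Tr(H_n^p)-\Tr((C_n^{*}C_n)^p)|\to0$, while for circulants the eigenvalues are the discrete Fourier transform of the symbol and the moments are exactly Riemann sums of $\tfrac1{2\pi}\int|f|^{2p}$.

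The main obstacle is extending the moment identity from banded symbols to a general bounded $f$ while keeping operator norms uniformly controlled in $n$. The device I would use is the Fej\'er (Ces\`aro) mean $f_N\coloneqq\sigma_N(f)$, which satisfies $\|f_N\|_\infty\le\|f\|_\infty=M$ (the Fej\'er kernel is a positive summability kernel) and $\|f-f_N\|_{L^2}\to0$. With $H_n^{(N)}\coloneqq T_n[f_N]^{T}T_n[f_N]$, the estimate $\tfrac1n\|T_n[f]-T_n[f_N]\|_F^2=\sum_k\tfrac{n-|k|}{n}\,|a_k-\widehat{f_N}(k)|^2\le\|f-f_N\|_{L^2}^2$, together with the telescoping bound $\|A^p-B^p\|_1\le p\max(\|A\|_{\mathrm{op}},\|B\|_{\mathrm{op}})^{p-1}\|A-B\|_1$ and the crude inequality $\|X\|_1\le\sqrt n\,\|X\|_F$, yields a bound on $\tfrac1n|\Tr(H_n^p)-\Tr((H_n^{(N)})^p)|$ that is uniform in $n$ and tends to $0$ as $N\to\infty$. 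Since the banded case gives $\tfrac1n\Tr((H_n^{(N)})^p)\to\tfrac1{2\pi}\int|f_N|^{2p}$ for each fixed $N$, and $\int|f_N|^{2p}\to\int|f|^{2p}$, a standard $\varepsilon/3$ interchange-of-limits argument closes the moment identity for arbitrary bounded $f$. Making the perturbation bound simultaneously uniform in $n$ and small in $N$ is the delicate technical point, and it is exactly here that the hypothesis that $s$ is bounded is used in full.

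It remains to match the normalisation stated in the theorem. The standard form of Avram--Parter delivers the limit as $\tfrac1{2\pi}\int_0^{2\pi}G(|f(\theta)|^2)\,\mathrm{d}\theta$. Since the $a_k$ are real one has $|f(-\theta)|=|f(\theta)|$, so $|f|$ is even and $\tfrac1{2\pi}\int_0^{2\pi}G(|f|^2)\,\mathrm{d}\theta=\tfrac1\pi\int_0^{\pi}G(|f|^2)\,\mathrm{d}\theta$; the change of variable $\theta=x/2$ then recasts the latter as $\int_0^{2\pi}\tfrac{\mathrm{d}x}{2\pi}\,G(|f(x/2)|^2)$. Recalling $s(x)=|f(x/2)|$ and $G=F(\sqrt{\,\cdot\,})$, this is exactly $\int_0^{2\pi}\tfrac{\mathrm{d}x}{2\pi}\,F(s(x))$, as required.
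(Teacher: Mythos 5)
Your proof is essentially correct, but note that the paper does not prove this statement at all: it is quoted as a known result of Avram and Parter, with the proof delegated to the cited matrix-analysis literature. There is therefore no in-paper argument to compare against; what you have written is a self-contained proof of the standard type (moment method for $H_n=(T^{(n)})^{T}T^{(n)}$, banded/circulant comparison for trigonometric-polynomial symbols, Fej\'er approximation with the uniform bound $\|f_N\|_\infty\le\|f\|_\infty$ to pass to general bounded symbols, then Weierstrass to go from monomials to continuous $F$ of bounded support). The quantitative chain is sound: $\frac1n\|T_n[f]-T_n[f_N]\|_F^2\le\|f-f_N\|_{L^2}^2$, $\|X\|_1\le\sqrt n\,\|X\|_F$, and the telescoping bound give $\frac1n\bigl|\Tr(H_n^p)-\Tr((H_n^{(N)})^p)\bigr|\le 2pM^{2p-1}\|f-f_N\|_{L^2}$, uniformly in $n$, which is exactly what the $\varepsilon/3$ argument needs; and your final change of variables $\theta=x/2$ together with the evenness of $|f|$ for real coefficients correctly reconciles the paper's normalisation $s(x)=|f(x/2)|$ on $[0,2\pi]$ with the usual statement on the torus.

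One point deserves an explicit sentence. The hypothesis as stated only says that the pointwise bi-infinite sum $\sum_k a_k e^{ikx/2}$ has bounded modulus; your argument actually uses the stronger (standard) reading that there is an $f\in L^\infty(\mathbb{T})$ with $\widehat f(k)=a_k$ and $\|f\|_\infty=M$, both for the operator-norm bound $\|T^{(n)}\|_{\mathrm{op}}\le\|f\|_\infty$ and for the Parseval step $\sum_k|a_k-\widehat{f_N}(k)|^2=\|f-f_N\|_{L^2}^2$. Pointwise bounded convergence of the symmetric partial sums does not by itself identify the $a_k$ as Fourier coefficients of the limit without some additional uniformity. This is a looseness of the theorem as the paper phrases it rather than a flaw in your argument, and it is immaterial for the paper's application, where the symbol is an absolutely convergent (indeed analytic) series in $\sqrt{\mu}\,e^{ix/2}$ with $\sqrt{\mu}<1$; but you should state which reading of the hypothesis you are proving.
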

\begin{lemma}\label{lemmino_LL}
    By using the notations of Theorem~\ref{Avram--Parter}, for all $n\in\N$ it holds that all the singular values of $T^{(n)}$ are upper bounded by $\sup_{x\in[0,2\pi]}s(x)$, i.e.
    \bb
        s^{(n)}_n\le \sup_{x\in[0,2\pi]}s(x)\,\qquad\forall\,n\in\N\,.
    \ee
\end{lemma}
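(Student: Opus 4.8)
The plan is to identify the largest singular value $s_n^{(n)}$ with the operator (spectral) norm of $T^{(n)}$ and to bound this norm directly by the supremum of the symbol, bypassing the Avram--Parter theorem itself --- the latter controls only the \emph{averaged} distribution of singular values and therefore cannot by itself exclude finitely many large outliers at fixed $n$. Concretely, I would introduce the symbol $f(\theta)\coloneqq\sum_{k}a_k\,e^{ik\theta}$, so that $a_m=\frac{1}{2\pi}\int_0^{2\pi}f(\theta)\,e^{-im\theta}\,\mathrm{d}\theta$ and $s(x)=|f(x/2)|$, and recall that $s_n^{(n)}=\|T^{(n)}\|_{\mathrm{op}}=\sup_{\|u\|_2=\|v\|_2=1}\left|\langle u,T^{(n)}v\rangle\right|$, the supremum running over unit vectors $u,v\in\C^n$.

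The core estimate comes from rewriting the bilinear form in Fourier space. Substituting the integral representation of $a_{k-j}$ into $\langle u,T^{(n)}v\rangle=\sum_{k,j}\overline{u_k}\,a_{k-j}\,v_j$ and factoring, I obtain
\bb
\langle u,T^{(n)}v\rangle=\frac{1}{2\pi}\int_0^{2\pi}f(\theta)\,\overline{U(\theta)}\,V(\theta)\,\mathrm{d}\theta\,,
\ee
where $U(\theta)\coloneqq\sum_k u_k e^{ik\theta}$ and $V(\theta)\coloneqq\sum_j v_j e^{ij\theta}$ are trigonometric polynomials. Bounding $|f(\theta)|\le\|f\|_\infty$ pointwise, then applying Cauchy--Schwarz together with Parseval's identity $\frac{1}{2\pi}\int_0^{2\pi}|U(\theta)|^2\,\mathrm{d}\theta=\|u\|_2^2$ (and likewise for $V$), yields $\left|\langle u,T^{(n)}v\rangle\right|\le\|f\|_\infty\,\|u\|_2\,\|v\|_2=\|f\|_\infty$. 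Taking the supremum over unit vectors gives $s_n^{(n)}\le\|f\|_\infty$.

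The one point requiring care --- and the main obstacle --- is reconciling $\|f\|_\infty=\sup_{\theta\in[0,2\pi]}|f(\theta)|$ with the quantity $\sup_{x\in[0,2\pi]}s(x)=\sup_{\theta\in[0,\pi]}|f(\theta)|$ appearing in the statement, since the reparametrisation $\theta=x/2$ only sweeps half a period. Here I would use that the $a_k$ are \emph{real}, whence $\overline{f(\theta)}=f(-\theta)$ and $|f|$ is even; combined with $2\pi$-periodicity this forces $\sup_{\theta\in[0,\pi]}|f(\theta)|=\sup_{\theta\in[0,2\pi]}|f(\theta)|$, so that $\sup_{x\in[0,2\pi]}s(x)=\|f\|_\infty$. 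Together with the previous paragraph this gives $s_n^{(n)}\le\sup_{x\in[0,2\pi]}s(x)$, as claimed. Conceptually the same bound reflects the fact that $T^{(n)}=P_n T P_n$ is a compression of the semi-infinite Toeplitz operator $T$ with symbol $f$, whose norm equals $\|f\|_\infty$; I would mention this interpretation but carry out the self-contained Fourier estimate above to keep the argument elementary and independent of the theory of infinite Toeplitz operators.
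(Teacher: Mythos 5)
Your argument is correct and is essentially the paper's own proof: both express the bilinear form $\langle u,T^{(n)}v\rangle$ through the symbol in Fourier space, bound it by $\|f\|_\infty$ via Cauchy--Schwarz and Parseval, and then use the reality of the coefficients $a_k$ (evenness of $|f|$ plus periodicity) to identify the supremum over the full period with the supremum over the half period appearing in the definition of $s$. The only cosmetic difference is that you work in the variable $\theta=x/2$ with period $2\pi$, whereas the paper stays in $x$ with period $4\pi$.
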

\begin{proof}
    The maximum singular value of the matrix $T^{(n)}$ satisfy
\bb
        &s^{(n)}_n = \max_{\substack{\mathbf{v},\mathbf{w}\in\mathbb{C}^n, \\ \mathbf{v}^T\mathbf{v}=\mathbf{w}^T\mathbf{w}=1}} |\mathbf{v}^T T^{(n)} \mathbf{w}|\\
        &= \max_{\substack{\mathbf{v},\mathbf{w}\in\mathbb{C}^n, \\ \mathbf{v}^T\mathbf{v}=\mathbf{w}^T\mathbf{w}=1}} \left|\sum_{k,j=1}^n v_k^\ast T^{(n)}_{k,j}w_j \right|\\
        &= \max_{\substack{\mathbf{v},\mathbf{w}\in\mathbb{C}^n, \\ \mathbf{v}^T\mathbf{v}=\mathbf{w}^T\mathbf{w}=1}} \left|\sum_{k,j=1}^n v_k^\ast w_j a_{k-j} \right|\\
        &\eqt{(i)} \max_{\substack{\mathbf{v},\mathbf{w}\in\mathbb{C}^n, \\ \mathbf{v}^T\mathbf{v}=\mathbf{w}^T\mathbf{w}=1}} \left|\sum_{k,j=1}^n v_k^\ast w_j \int_{0}^{4\pi}\frac{\text{d}x}{4\pi}e^{-i(k-j)\frac{x}{2}}\sum_{l=-\infty}^{+\infty}a_l e^{il \frac{x}{2}} \right|\\
        &\le \max_{\substack{\mathbf{v},\mathbf{w}\in\mathbb{C}^n, \\ \mathbf{v}^T\mathbf{v}=\mathbf{w}^T\mathbf{w}=1}} \int_{0}^{4\pi}\frac{\text{d}x}{4\pi}\left|\left(\sum_{k=1}^n v_k e^{ik\frac{x}{2}}\right)^\ast  \left(\sum_{j=1}^n w_j e^{ij\frac{x}{2}}\right)\right|\\& \quad\times\left|\sum_{l=-\infty}^{+\infty}a_l e^{il \frac{x}{2}} \right| \\&=\max_{\substack{\mathbf{v},\mathbf{w}\in\mathbb{C}^n, \\ \mathbf{v}^T\mathbf{v}=\mathbf{w}^T\mathbf{w}=1}} \int_{0}^{4\pi}\frac{\text{d}x}{4\pi}\left|\left(\sum_{k=1}^n v_k e^{ik\frac{x}{2}}\right)^\ast  \left(\sum_{j=1}^n w_j e^{ij\frac{x}{2}}\right)\right|s(x) \\&\le \left(\sup_{x\in[0,4\pi]}s(x)\right)\max_{\substack{\mathbf{v},\mathbf{w}\in\mathbb{C}^n, \\ \mathbf{v}^T\mathbf{v}=\mathbf{w}^T\mathbf{w}=1}} \\
        &\quad\times\int_{0}^{2\pi}\frac{\text{d}x}{2\pi}\left|\left(\sum_{k=1}^n v_k e^{ikx}\right)^\ast  \left(\sum_{j=1}^n w_j e^{ijx}\right)\right|\\&\le  \left(\sup_{x\in[0,4\pi]}s(x)\right)\max_{\substack{\mathbf{v},\mathbf{w}\in\mathbb{C}^n, \\ \mathbf{v}^T\mathbf{v}=\mathbf{w}^T\mathbf{w}=1}} \sqrt{\int_{0}^{2\pi}\frac{\text{d}x}{2\pi}\left|\sum_{k=1}^n v_k e^{ikx}\right|^2} \\
        & \quad\times \sqrt{\int_{0}^{2\pi}\frac{\mathrm{d}x}{2\pi}\left|\sum_{j=1}^n w_j e^{ijx}\right|^2}\\
        &=\sup_{x\in[0,4\pi]}s(x)\,.
    \ee
    where in (i) we exploited the fact that
    \bb
        \int_{0}^{4\pi}\frac{\text{d}x}{4\pi}e^{-i(k-j)\frac{x}{2}}e^{il \frac{x}{2}} =\begin{cases}
        1 &\text{if $l=k-j$ ,} \\
        0 &\text{otherwise.}
    \end{cases} 
    \ee
    Since $\{a_k\}_k$ are real numbers, the function $s(x)\coloneqq \left|\sum_{k=-\infty}^{+\infty}a_k\,e^{\frac{ikx}{2}}\right|$ satisfies $s(4\pi-x)=s(x)$.
    Consequently, it holds that 
    \bb
         \sup_{x\in[0,4\pi]}s(x)=\sup_{x\in[0,2\pi]}s(x)\,.
    \ee
    Hence, this concludes the proof.
\end{proof}
\begin{lemma}\label{lemma_consequence}
    For all $n\in \N$ let $\{s^{(n)}_j\}_{j\in\{1,2,\ldots,n\}}$ be $n$ real numbers that are non-decreasing in $j$ and uniformly bounded, i.e.~there exists $M>0$ such that $|s^{(n)}_n|\le M$ for all $n\in\N$. Assume that there exists a monotonically non-decreasing continuous function $s:[0,2\pi]\to\mathbb{R}$ such that the following relation holds for any continuous function $F:\mathbb{R}\to\mathbb{R}$ with bounded support:
    \bb\label{relation_F}
        \lim\limits_{n\rightarrow\infty}\frac{1}{n}\sum_{j=1}^nF\left(s^{(n)}_j\right)=\int_{0}^{2\pi}\frac{\mathrm{d}x}{2\pi}F\left(s(x)\right)\,.
    \ee
    Then, it holds that
    \bb\label{eq:F}        &\lim\limits_{n\rightarrow\infty}\max\Bigg\{\left|s^{(n)}_j-s\left(\frac{2\pi j}{n}\right)\right|\,\\
    &\qquad:\, j\in\{1,2,\ldots,n\},\quad s^{(n)}_j\in[s(0),s(2\pi)]\Bigg\}=0
    \ee
    and
    \bb\label{eq:boundary}
        \lim\limits_{n\rightarrow\infty}\frac{1}{n}\left|\left\{j:\, j\in\{1,2,\ldots,n\},\,\quad s^{(n)}_j\notin[s(0),s(2\pi)]\,  \right\}\right|=0\,.
    \ee
\end{lemma}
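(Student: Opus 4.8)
The plan is to read the hypothesis \eqref{relation_F} as the convergence of probability measures and then to use the two monotonicity assumptions to upgrade this \emph{distributional} statement to the \emph{per-index} conclusions \eqref{eq:F}--\eqref{eq:boundary}. First I would introduce the empirical measures $\mu_n:=\frac1n\sum_{j=1}^n\delta_{s^{(n)}_j}$ and the limit $\mu:=s_*\big(\frac{1}{2\pi}\,\mathrm{d}x\big)$, i.e.\ the pushforward of normalised Lebesgue measure on $[0,2\pi]$ through $s$; with this notation \eqref{relation_F} says exactly that $\int F\,\mathrm{d}\mu_n\to\int F\,\mathrm{d}\mu$ for every $F\in C_c(\R)$. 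Since $\mu$ is a probability measure supported on the compact interval $[s(0),s(2\pi)]$, testing against an $F\in C_c(\R)$ equal to $1$ on a large interval shows the $\mu_n$ are tight, so this vague convergence is genuine weak convergence $\mu_n\Rightarrow\mu$. Writing $G(t):=\mu((-\infty,t])=\frac{1}{2\pi}\big|\{x\in[0,2\pi]:s(x)\le t\}\big|$ and $G_n(t):=\mu_n((-\infty,t])=\frac1n|\{j:s^{(n)}_j\le t\}|$, one then gets $G_n(t)\to G(t)$ at every continuity point $t$ of $G$.

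The second ingredient is a pair of elementary facts tying $G$ and $G_n$ to the quantities in \eqref{eq:F}. Because $s$ is non-decreasing and continuous, $[0,y]\subseteq\{x:s(x)\le s(y)\}$, so $G(s(y))\ge \frac{y}{2\pi}$ for all $y\in[0,2\pi]$; in fact $s(2\pi j/n)$ is precisely the $(j/n)$-quantile of $\mu$, and this quantile function is continuous since $\mu$ charges every subinterval of its support. On the empirical side, monotonicity of $j\mapsto s^{(n)}_j$ gives the order-statistic identity $s^{(n)}_j\le t\Leftrightarrow j\le nG_n(t)$, which lets me translate statements about $G_n$ into statements about individual $s^{(n)}_j$.

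To prove \eqref{eq:F} I would argue by contradiction and sandwich, making the estimate uniform in $j$ by working on a fixed finite grid. Suppose an in-range index satisfies $s^{(n)}_j\ge s(2\pi j/n)+\varepsilon$ (the lower deviation being symmetric). By the order-statistic identity, $nG_n\big(s(2\pi j/n)+\tfrac{\varepsilon}{2}\big)\le j-1$. On the other hand, uniform continuity of $s$ on $[0,2\pi]$ yields a $\delta>0$, independent of $j$, with $G\big(s(2\pi j/n)+\tfrac{\varepsilon}{4}\big)\ge \frac{j}{n}+\frac{\delta}{2\pi}$; here the in-range constraint $s^{(n)}_j\le s(2\pi)$ forces $2\pi j/n$ to stay $\delta$-away from $2\pi$, so the extra interval of length $\delta$ is genuinely gained. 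Fixing once and for all finitely many continuity points of $G$ that are $\tfrac{\varepsilon}{4}$-dense in $[s(0),s(2\pi)+\tfrac{\varepsilon}{2}]$ turns $G_n\to G$ into a bound uniform over those points, and comparing the two displays forces $\frac{j}{n}+\frac{\delta}{2\pi}-o(1)\le\frac{j-1}{n}$, a contradiction for large $n$. As $\delta$ and the grid do not depend on $j$, this yields \eqref{eq:F}.

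Finally, \eqref{eq:boundary} follows by evaluating $G_n\to G$ just outside the support. Choosing continuity points $t\downarrow s(0)$ gives $\frac1n|\{j:s^{(n)}_j<s(0)\}|\le G_n(t)\to G(t)\to \mu(\{s(0)\})$, while continuity points $t\uparrow s(2\pi)$ give $\frac1n|\{j:s^{(n)}_j>s(2\pi)\}|\le 1-G_n(t)\to \mu(\{s(2\pi)\})$, so the excluded fraction is controlled by the mass $\mu$ places on the two endpoints. I expect this edge analysis to be the main obstacle: weak convergence constrains only the bulk, and if $\mu$ had an atom at $s(0)$ or $s(2\pi)$ — equivalently if $s$ were constant on a neighbourhood of $0$ or $2\pi$ — then \eqref{eq:boundary} could genuinely fail. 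The estimates above show it holds exactly when $G$ is continuous at the endpoints, which is the case whenever $s$ has no flat stretch there. In the application of Lemma~\ref{lemma_consequence} to the singular values of the Toeplitz matrices of Theorem~\ref{Avram--Parter} the relevant $s$ is strictly increasing, so both endpoint masses vanish; moreover Lemma~\ref{lemmino_LL} forces $s^{(n)}_n\le\sup_x s(x)=s(2\pi)$, making the upper-edge set empty outright and leaving only the lower edge to the argument above.
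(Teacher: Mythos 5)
Your proposal is correct and is built on the same foundation as the paper's proof --- reading \eqref{relation_F} as weak convergence of the empirical measures $\mu_n=\frac1n\sum_j\delta_{s_j^{(n)}}$ to the pushforward $\mu=s_*\bigl(\frac{1}{2\pi}\,\mathrm{d}x\bigr)$ and then working with cumulative distribution functions --- but the two arguments upgrade distributional convergence to the per-index statement \eqref{eq:F} differently. The paper uses the order-statistic identity $C_n(s_j^{(n)})=j/n$, invokes \emph{uniform} convergence $C_n\to C$ (Pólya's theorem, which requires $C$ continuous), and concludes from uniform convergence of $C^{-1}\circ C_n$ to the identity on $[s(0),s(2\pi)]$; you instead run an order-statistic sandwich on a fixed finite grid of continuity points, which needs only pointwise convergence of the CDFs at the cost of a slightly longer contradiction argument. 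Both routes are sound, and yours is marginally more elementary. The more substantive difference is your endpoint analysis for \eqref{eq:boundary}: you correctly observe that the out-of-range fraction is controlled by the atoms $\mu(\{s(0)\})$ and $\mu(\{s(2\pi)\})$, which vanish only if $s$ has no flat stretch at $0$ or $2\pi$ --- a condition the lemma does not state (it assumes only that $s$ is non-decreasing) and that the paper's proof uses silently when it asserts that $C$ is continuous and that $C(s(0))=0$. Your remark that \eqref{eq:boundary} can genuinely fail otherwise (e.g.\ push a positive fraction of the $s_j^{(n)}$ infinitesimally below $s(0)$ when $s$ is flat near $0$) is accurate, and your check that the function actually fed into Lemma~\ref{lemma_consequence} via Theorem~\ref{Thm_consequence_avram_parter} is strictly increasing in the nontrivial parameter range is exactly what rescues the application. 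This is a gap in the paper's own write-up rather than in your argument, and is worth recording.
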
 
\begin{proof}
For each $n\in\mathbb{N}$, let
\begin{equation}
  \mu_n \coloneqq \frac{1}{n}\sum_{j=1}^n\delta_{s_j^{(n)}}\,,
\end{equation}
where for any $x\in\mathbb{R}$, $\delta_x$ is the Dirac delta probability distribution centered in $x$.
From~\eqref{relation_F}, the sequence of probability distributions $\mu_n$ converges weakly to the probability distribution $\mu$ with cumulative distribution function $C$~\cite{chow2012probability}, which is defined by the following equation:
\begin{equation}
C^{-1}(y) = s(2\pi y)\qquad\forall\,y\in[0,1]\,,
\end{equation}
where $C^{-1}$ is the inverse function of $C$.
For any $n\in\N$, let $C_n$ be the cumulative distribution function of $\mu_n$, which satisfies
\begin{equation}
C_n\left(s^{(n)}_j\right) = \frac{j}{n}\qquad\forall\,j=1,\,\ldots,\,n\,.
\end{equation}
We have
\begin{align}\label{eq:limsup}
&\limsup_{n\to\infty}\max\Bigg\{\left|s^{(n)}_j - s\left(\frac{2\pi\,j}{n}\right)\right|:  \\
&\qquad j=1,\,\ldots,\,n\,,\;s(0)\le s^{(n)}_j \le s(2\pi)\Bigg\}\nonumber\\
&=\limsup_{n\to\infty}\max\Bigg\{\left|s^{(n)}_j - s\left(2\,\pi\,C_n\left(s^{(n)}_j\right)\right)\right|: \\
&\qquad j=1,\,\ldots,\,n\,,\;s(0)\le s^{(n)}_j \le s(2\pi)\Bigg\}\nonumber\\
&\le\limsup_{n\to\infty}\max\Big\{\left|\lambda - C^{-1}\left(C_n(\lambda)\right)\right|:s(0)\le \lambda \le s(2\pi)\Big\}\,.
\end{align}
Since $C$ is continuous and $\mu_n$ converges weakly to $\mu$, $C_n$ converges uniformly to $C$~\cite{chow2012probability}.
Therefore, $C^{-1}\circ C_n$ converges uniformly to the identity function on the interval $[s(0),\,s(2\pi)]$, and the claim~\eqref{eq:F} follows. Moreover, for any $n\in\mathbb{N}$, let
\bb
j_n &= \max\left\{j=1,\,\ldots,\,n : s^{(n)}_j < s(0)\right\} \\
&= \left|\left\{j=1,\,\ldots,\,n : s^{(n)}_j < s(0)\right\}\right|\,.
\ee
We have
\bb
\limsup_{n\to\infty}\frac{j_n}{n} &= \limsup_{n\to\infty}C_n\left(s^{(n)}_{j_n}\right) \\
&\le \limsup_{n\to\infty}C_n\left(s(0)\right)\\
&= C(s(0)) \\
&= 0\,.
\ee
Analogously, one can show that
\bb
    \limsup_{n\rightarrow\infty}\frac{1}{n}\left|\left\{j=1,\,\ldots,\,n : s^{(n)}_j > s(2\pi)\right\}\right|=0\,,
\ee
and the claim~\eqref{eq:boundary} follows.

\end{proof}

\begin{thm}\label{Thm_consequence_avram_parter}
    Let $\{a_k\}_{k\in\mathbb{Z}}$ be a sequence of real numbers. For all $n\in\N$ let $T^{(n)}$ be the $n\times n$ Toeplitz matrix with elements $T^{(n)}_{k,j}\coloneqq a_{k-j}$ for all $k,j\in\{1,2,\ldots,n\}$. Let $\{s^{(n)}_j\}_{j=1,2,\ldots,n}$ be the singular values of the matrix $T^{(n)}$ ordered in increasing order in $j$. Assume that the function
    $s:[0,2\pi]\to \C$ defined by 
    \bb\label{asymptotic_distribution}
    s(x)\coloneqq \left|\sum_{k=-\infty}^{+\infty}a_k\,e^{\frac{ikx}{2}}\right|\,\qquad\forall\,x\in[0,2\pi]\,,
    \ee
    is monotonically non-decreasing and continuous. Then, the following notion of convergence of the singular values $\{s^{(n)}_j\}_{j=1,2,\ldots,n}$ to the function $s(\cdot)$ 
 holds:
    \bb
        \lim\limits_{n\rightarrow\infty}\max\left\{\left|s^{(n)}_j-s\left(\frac{2\pi j}{n}\right)\right|:\, j\in\{k_n,k_n+1,\ldots,n\}\right\}=0\,,
    \ee
    where $\{k_n\}_{n\in\N}\subseteq \N$ is a suitable sequence such that $k_n\le n$ for all $n\in\N$ and $\lim\limits_{n\rightarrow\infty}\frac{k_n}{n}=0$. In particular, for all $x\in(0,1]$ it holds that 
    \bb
        \lim\limits_{n\rightarrow\infty}s^{(n)}_{ \floor{x n} }=s(2\pi x)\,,
    \ee
    where $\floor{\cdot}$ is the floor function.
    Furthermore, as $n$ approaches infinity, the fraction of singular values $\{s^{(n)}_j\}_{j=1,2,\ldots,n}$ that fall outside the image of the function $s(\cdot)$ becomes negligible. Mathematically, this can be expressed as
    \bb\label{sing_negligible_fraction}
        \lim\limits_{n\rightarrow\infty}\frac{1}{n}\left|\left\{j:\, j\in\{1,2,\ldots,n\},\,\quad s^{(n)}_j\notin[s(0),s(2\pi)]\,  \right\}\right|=0\,.
    \ee
More specifically, the inequality $s^{(n)}_n\le s(2\pi)$ holds for any $n\in\N$.
\end{thm}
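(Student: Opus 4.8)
The plan is to assemble the three ingredients already established — the Avram--Parter theorem (Theorem~\ref{Avram--Parter}), the norm bound of Lemma~\ref{lemmino_LL}, and the deterministic convergence criterion of Lemma~\ref{lemma_consequence} — and then to translate a statement about the \emph{values} of the singular values into one about their \emph{indices}. First I would observe that, since $s$ is assumed monotonically non-decreasing and continuous on the compact interval $[0,2\pi]$, one has $\sup_{x\in[0,2\pi]}s(x)=s(2\pi)<\infty$. Lemma~\ref{lemmino_LL} then immediately gives $s^{(n)}_n\le s(2\pi)$ for every $n$, which is the final assertion of the theorem; it also supplies the uniform bound $M=s(2\pi)$ needed to invoke Lemma~\ref{lemma_consequence}.

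Next I would verify the hypotheses of Lemma~\ref{lemma_consequence} for the sequence $\{s^{(n)}_j\}_j$. The singular values are ordered increasingly, hence non-decreasing in $j$; they are uniformly bounded by the previous step; and the relation~\eqref{relation_F} is exactly the conclusion of the Avram--Parter theorem, whose only hypothesis — boundedness of $s$ — follows from continuity on a compact set. Applying Lemma~\ref{lemma_consequence} therefore yields at once its conclusion~\eqref{eq:boundary}, which is precisely the negligible-fraction statement~\eqref{sing_negligible_fraction} of the theorem, together with the value-constrained convergence
\[
\lim_{n\to\infty}\max\left\{\left|s^{(n)}_j - s\left(\tfrac{2\pi j}{n}\right)\right|: j\in\{1,\ldots,n\},\ s^{(n)}_j\in[s(0),s(2\pi)]\right\}=0 .
\]

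The main — and essentially only — remaining step is to replace the value constraint $s^{(n)}_j\in[s(0),s(2\pi)]$ by an index constraint $j\in\{k_n,\ldots,n\}$ with $k_n/n\to 0$. Here monotonicity is decisive: because $s^{(n)}_j$ is non-decreasing in $j$ and, by the first step, never exceeds $s(2\pi)$, no singular value lies \emph{above} the image of $s$, so the constraint reduces to $s^{(n)}_j\ge s(0)$. Since the values are non-decreasing, the indices satisfying this form a terminal block $\{k_n,k_n+1,\ldots,n\}$, where $k_n$ is the least index with $s^{(n)}_j\ge s(0)$. The count $k_n-1=\left|\{j: s^{(n)}_j<s(0)\}\right|$ is exactly the quantity controlled by~\eqref{sing_negligible_fraction}, so $k_n/n\to 0$, and the displayed maximum over $\{j: s^{(n)}_j\in[s(0),s(2\pi)]\}$ coincides with the maximum over $\{k_n,\ldots,n\}$. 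This establishes the first claim, and it is the step where I expect the only genuine subtlety to lie, since the heavy analytic work (weak convergence of the empirical measures and uniform convergence of their cumulative distribution functions) has already been carried out inside Lemma~\ref{lemma_consequence}.

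Finally I would deduce the pointwise statement. Fixing $x\in(0,1]$ and setting $j=\lfloor xn\rfloor$, one has $j/n\to x>0$ while $k_n/n\to 0$, so $j\ge k_n$ for all large $n$ and $j$ lies in the admissible block. The first claim then gives $\left|s^{(n)}_{\lfloor xn\rfloor}-s(2\pi\lfloor xn\rfloor/n)\right|\to 0$, and continuity of $s$ together with $2\pi\lfloor xn\rfloor/n\to 2\pi x$ yields $s^{(n)}_{\lfloor xn\rfloor}\to s(2\pi x)$, completing the proof.
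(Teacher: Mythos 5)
Your proposal is correct and follows exactly the route the paper takes: the paper's own proof is the single sentence that the theorem is ``a direct consequence of Theorem~\ref{Avram--Parter}, Lemma~\ref{lemmino_LL}, and Lemma~\ref{lemma_consequence}.'' You have simply made explicit the details the paper leaves implicit, in particular the translation of the value constraint $s^{(n)}_j\in[s(0),s(2\pi)]$ into the index block $\{k_n,\ldots,n\}$ via monotonicity and the bound $s^{(n)}_n\le s(2\pi)$, which is done correctly.
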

\begin{proof}
    This is a direct consequence of Theorem~\ref{Avram--Parter}, Lemma~\ref{lemmino_LL}, and Lemma~\ref{lemma_consequence}.
\end{proof}
Analogously to Theorem~\ref{Thm_consequence_avram_parter}, a similar statement can be made regarding the convergence of the eigenvalues of $T^{(n)}$ as $n$ approaches infinity, leveraging the Szeg\H{o} Theorem~\cite{Szego1920, GRENADER}.

\section{Capacities of the Delocalised Interaction Model}\label{Sec_capacities_DIM}
In this section, we derive the exact solutions for the quantum capacity $Q$, the two-way quantum capacity $Q_2$, and the secret-key capacity $K$ of the DIM in the absence of thermal noise ($\nu=0$). Furthermore, we investigate the parameter regions of noise where the capacities of our model are strictly positive. This characterisation allows us to determine the regions where qubit distribution, two-way entanglement distribution, and quantum key distribution can be achieved.

Let us recall that, in our model, the channel $\Phi_{\lambda,\mu,\nu}^{(n)}$ corresponds to $n$ uses of the optical fibre. In addition, in Theorem~\ref{thm_strong_sm} we proved that $\Phi_{\lambda,\mu,\nu}^{(n)}$ is unitary equivalent to the following channel:
\begin{equation}
\left\{\bigotimes_{i=1}^n \mathcal{E}_{\eta_i^{(n,\lambda,\mu)},\nu} \right\}_{n\in\mathbb{N}},
\end{equation}
which is the tensor product of $n$ distinct thermal attenuators. The transmissivities $\{\eta_i^{(n,\lambda,\mu)}\}_{i=1,2,\ldots,n}$ of these attenuators are the square of the singular values of the Toeplitz matrix reported in~\eqref{matrix_sm}. Hence, we can exploit here the properties of Toeplitz matrices, that we discussed in Section~\ref{section_toeplitz}, to analyse the performance of quantum communication tasks over optical fibres with memory effects. We begin by establishing the forthcoming Lemma~\ref{thm_conv_main_sm}, which, roughly speaking, establishes that by plotting the points
$$\left\{\left(2\pi\frac{j}{n}\,,\, \eta_j^{(n,\lambda,\mu)}\right)\quad \text{for } j=1,2,\ldots,n \right\}$$ on a two-dimensional plane, they converge for $n\to\infty$ to the graph of a certain \emph{known} function $\eta^{(\lambda,\mu)}:[0,2\pi]\to\mathbb{R}$ reported in~\eqref{formula_effect_transm_sm}, which we dub \emph{effective transmissivity function} (see Fig.~\ref{fig:plot_effective_trans} for an example). 
\begin{figure}[t]
	\includegraphics[width=1\linewidth]{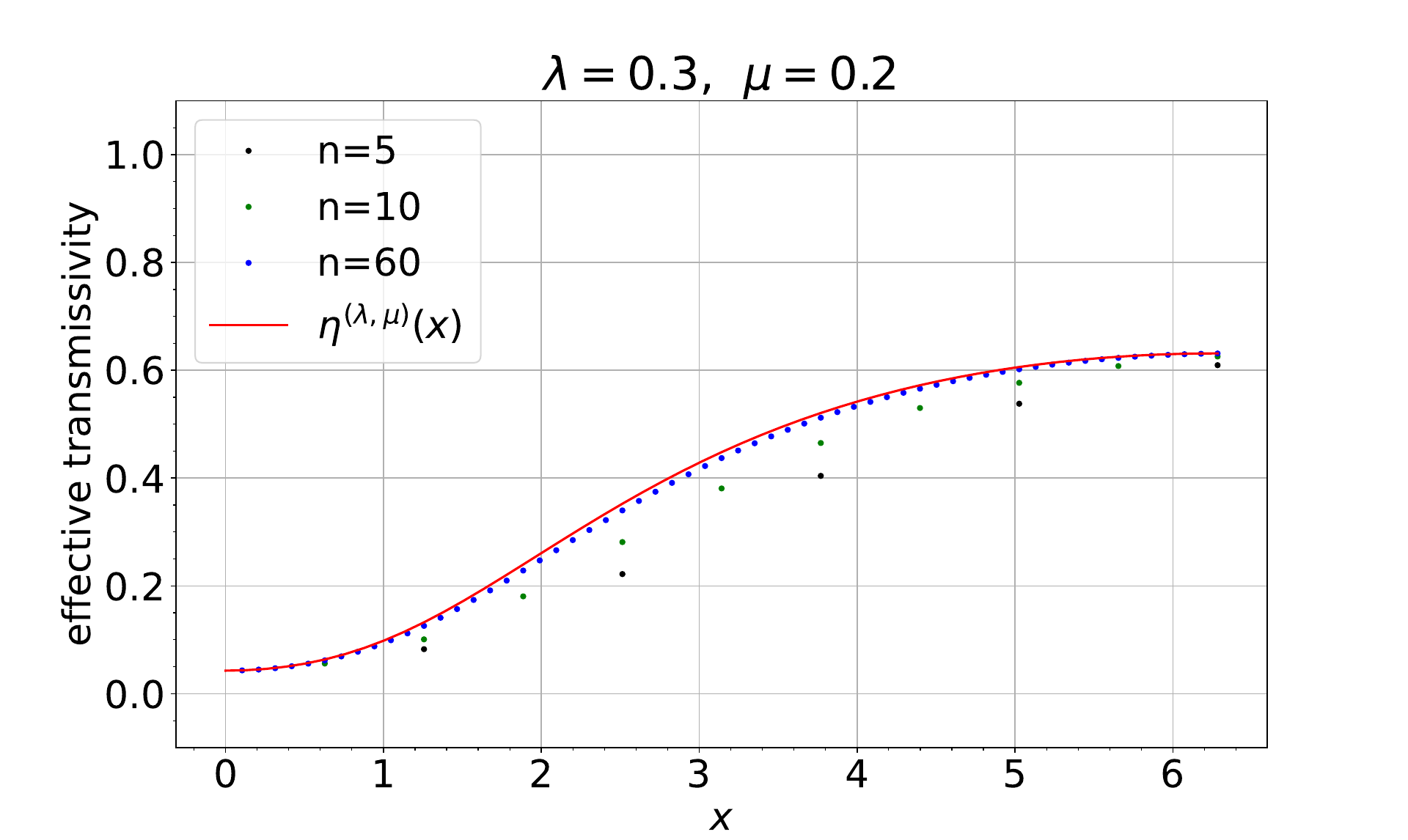}
	\caption{The red curve is the effective transmissivity function $\eta^{(\lambda,\mu)}(x)$, as defined in~\eqref{formula_effect_transm_sm}, plotted with respect $x\in[0,2\pi]$ for $\lambda=0.3$ and $\mu=0.2$. For $n=4$ (black points), $n=10$ (green points), and $n=60$ (blue points), we plot the $n$ points $\left\{\left(2\pi\frac{j}{n}\,,\, \eta_j^{(n,\lambda,\mu)}\right)\,:\, j=1,2,\ldots,n \right\}$. We observe that these $n$ points converge for $n\rightarrow\infty$ to the graph of the effective transmissivity function $\eta^{(\lambda,\mu)}$, as established by Lemma~\ref{thm_conv_main_sm}. Here, $\{\eta_j^{(n,\lambda,\mu)}\}_{j=1,2,\ldots,n}$ are the transmissivities defined in Theorem~\ref{thm_strong_sm}. }
\label{fig:plot_effective_trans}
\end{figure}
\begin{lemma}\label{thm_conv_main_sm}
    Let $\lambda\in[0,1]$ and $\mu\in[0,1)$. There exists a sequence $\{j_n\}_{n\in\N}\subseteq \N$ such that $j_n\le n$ for all $n\in\N$, $\lim\limits_{n\rightarrow\infty}\frac{j_n}{n}=0$, and
    \bb\label{eq_conv_statemente_sm}
    \lim\limits_{n\rightarrow\infty}\max\left\{ \left|\eta^{(n,\lambda,\mu)}_j-\eta^{(\lambda,\mu)}\left(\frac{2\pi j}{n}\right)\right|:\,  j\in\{j_n,\ldots,n\}  \right\}=0\,,
    \ee
    where $\eta^{(\lambda,\mu)}$ is the effective transmissivity function defined by
    \bb\label{formula_effect_transm_sm}
        \eta^{(\lambda,\mu)}(x) \coloneqq \lambda^{\frac{1-\mu}{ 1+\mu-2\sqrt{\mu}\cos(x/2) }}\quad\forall \,x\in[0,2\pi]\,.
    \ee
    In particular, for all $x\in(0,1]$ it holds that 
    \bb
        \lim\limits_{n\rightarrow\infty}\eta^{(n,\lambda,\mu)}_{ \floor{x n} }=\eta^{(\lambda,\mu)}(2\pi x)\,,
    \ee
    where $\floor{\cdot}$ is the floor function.
    In addition, all the transmissivities are uniformly bounded by $\eta^{(\lambda,\mu)}(2\pi)$, i.e.
    \bb\label{upp_eta2pi}
        \eta^{(n,\lambda,\mu)}_n\le \eta^{(\lambda,\mu)}(2\pi)=\lambda^{ \frac{1-\sqrt{\mu}}{1+\sqrt{\mu}} }\qquad\forall\, n\in\N\,.
    \ee
\end{lemma} 
\begin{proof}
    The transmissivities $\{\eta_i^{(n,\lambda,\mu)}\}_{i=1,2,\ldots,n}$ are defined by 
    \bb
        \eta_i^{(n,\lambda,\mu)}\coloneqq (s_i^{(n,\lambda,\mu)})^2\,,
    \ee
    where $\{s_i^{(n,\lambda,\mu)}\}_{i=1,2,\ldots,n}$ are the singular values, ordered in increasing order in $j$, of the $n\times n$ Toeplitz matrix $\bar{A}^{(n,\lambda,\mu)}$ reported in~\eqref{matrix_sm}. For all $n\in\N$ and all $i,k\in\{1,2,\ldots,n\}$, the $(i,k)$ element of $\bar{A}^{(n,\lambda,\mu)}$ can be expressed as 
        \bb\label{matrix_sm22}
            \bar{A}^{(n,\lambda,\mu)}_{i,k}=a^{(\lambda,\mu)}_{i-k}
        \ee
    where 
    \bb
        a_j^{(\lambda,\mu)}\coloneqq  \Theta(j)\,\sqrt{\lambda}\mu^{\frac{j}{2}} L_{j}^{(-1)}(-\ln\lambda)\qquad \forall\, j\in\mathbb{Z}\,.
    \ee
    By using that the generalised Laguerre polynomials $\{L^{(-1)}_k\}_{k\in\N}$ satisfies 
\bb
    \sum_{k=0}^\infty L^{(-1)}_k(x)\,w^k=e^{\frac{wx}{w-1}}
\ee
for all $x\in\R$ and all $w\in\C$ such that $|w|<1$, one can show that
\bb
    \left|\sum_{k=-\infty}^{+\infty}a_k^{(\lambda,\mu)}\,e^{\frac{ikx}{2}}\right|^2 = \lambda^{\frac{1-\mu}{ 1+\mu-2\sqrt{\mu}\cos(x/2) }}\qquad\forall\, x\in [0,2\pi]\,.
\ee
Consequently, by applying Theorem~\ref{Thm_consequence_avram_parter}, the proof of~\eqref{upp_eta2pi} is complete. Moreover, Theorem~\ref{Thm_consequence_avram_parter} implies that there exists a sequence $\{j_n\}_{n\in\N}\subseteq \N$ such that $j_n\le n$ for all $n\in\N$, $\lim\limits_{n\rightarrow\infty}\frac{j_n}{n}=0$, and
\bb
    &\lim\limits_{n\rightarrow\infty}\max\Bigg\{ \left|\sqrt{\eta^{(n,\lambda,\mu)}_j}-\sqrt{\eta^{(\lambda,\mu)}\left(\frac{2\pi j}{n}\right)}\right|:  j\in\{j_n,\ldots,n\}  \Bigg\}=0\,.
\ee
Since $\{\eta_i^{(n,\lambda,\mu)}\}_{i=1,2,\ldots,n}$ and $\eta^{(\lambda,\mu)}$ are both bounded, with lower bound $0$ and upper bound $1$, the limit in~\eqref{eq_conv_statemente_sm} follows.
\end{proof}
In the forthcoming Lemma~\ref{pos_cap_thm_sm} we study the parameter region of zero capacities of our model of optical fibre with memory effects. 
\begin{thm}\label{pos_cap_thm_sm}
    Let $\lambda\in(0,1]$, $\mu\in[0,1)$, and $\nu\ge 0$. Let $C(\lambda,\mu,\nu)$ be one of the following capacities of the quantum memory channel $\{\Phi_{\lambda,\mu,\nu}^{(n)}\}_{n\in\N}$: quantum capacity $Q$, two-way quantum capacity $Q_2$, or secret key capacity $K$. It holds that
    \bb\label{iff_cap_0}
        C(\lambda,\mu,\nu)>0\qquad\Longleftrightarrow \qquad C(\mathcal{E}_{\eta^{(\lambda,\mu)}(2\pi),\nu})>0\,,
    \ee
    where $\eta^{(\lambda,\mu)}$ is reported in~\eqref{formula_effect_transm_sm}. In particular, in the absence of thermal noise, i.e.~$\nu=0$, it holds that
\bb\label{cond_strictly_q_sm}
    Q(\lambda,\mu,\nu=0)>0\qquad\Longleftrightarrow\qquad\sqrt{\mu}>\frac{ \log_2\left(\frac{1}{\lambda}\right)-1 }{  \log_2\left(\frac{1}{\lambda}\right)+1  }\,.
\ee
In addition, for all $\nu\ge0$ it holds that
\bb\label{cond_strictly_q2_sm}
    &K(\lambda,\mu,\nu),\,Q_2(\lambda,\mu,\nu)>0\Longleftrightarrow
	\sqrt{\mu}> \frac{ \ln\left(\frac{1}{\lambda}\right)-\ln(1+\frac{1}{\nu}) }{  \ln\left(\frac{1}{\lambda}\right)+\ln(1+\frac{1}{\nu})  }\,,\\
 &Q(\lambda,\mu,\nu)>0\Longrightarrow\sqrt{\mu}> \frac{ \ln\left(\frac{1}{\lambda}\right)-\ln\left(1+\frac{1}{2\nu+1}\right) }{  \ln\left(\frac{1}{\lambda}\right)+\ln\left(1+\frac{1}{2\nu+1}\right)  }\,,\\
&Q(\lambda,\mu,\nu)>0\Longleftarrow 
	\sqrt{\mu}> \frac{ \ln\left(\frac{1}{\lambda}\right)-\ln\left( 1+2^{-g(\nu)}\right) }{  \ln\left(\frac{1}{\lambda}\right)+\ln\left( 1+2^{-g(\nu)}\right)  }\,.
 \ee
\end{thm}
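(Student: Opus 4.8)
The plan is to reduce every capacity to the single ``best'' effective channel $\mathcal{E}_{\bar\eta,\nu}$ with $\bar\eta\coloneqq\eta^{(\lambda,\mu)}(2\pi)=\lambda^{\frac{1-\sqrt\mu}{1+\sqrt\mu}}$, and then read off the explicit thresholds by elementary algebra. By Theorem~\ref{thm_strong_sm} the capacity $C(\lambda,\mu,\nu)$ coincides with that of the tensor-product family $\{\bigotimes_{i=1}^n \mathcal{E}_{\eta_i^{(n,\lambda,\mu)},\nu}\}_{n\in\N}$, so it suffices to work with the latter. The heart of the matter is the equivalence~\eqref{iff_cap_0}; once it is established, the three displayed threshold conditions follow by substituting the single-channel positivity criteria~\eqref{q_equal_zero_sm},~\eqref{q_equal_zero_sm2},~\eqref{q_equal_zero_sm3} and solving for $\sqrt\mu$.

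For the implication $C(\mathcal{E}_{\bar\eta,\nu})=0\Rightarrow C(\lambda,\mu,\nu)=0$ I would argue by a degradation (data-processing) bound. By~\eqref{upp_eta2pi} every effective transmissivity satisfies $\eta_i^{(n,\lambda,\mu)}\le\bar\eta$, so the composition rule~\eqref{composition_them} lets me factor $\mathcal{E}_{\eta_i^{(n,\lambda,\mu)},\nu}=\mathcal{E}_{\eta_i^{(n,\lambda,\mu)}/\bar\eta,\nu}\circ\mathcal{E}_{\bar\eta,\nu}$, where $\eta_i^{(n,\lambda,\mu)}/\bar\eta\in[0,1]$ is a legitimate transmissivity. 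Hence $\bigotimes_i \mathcal{E}_{\eta_i^{(n,\lambda,\mu)},\nu}$ is a post-processing of $\mathcal{E}_{\bar\eta,\nu}^{\otimes n}$, and since $Q$, $Q_2$, and $K$ are all monotone under post-composition with a channel (Alice and Bob may simply apply the extra attenuators themselves), one obtains $C\bigl(\bigotimes_i \mathcal{E}_{\eta_i^{(n,\lambda,\mu)},\nu}\bigr)\le C(\mathcal{E}_{\bar\eta,\nu}^{\otimes n})=n\,C(\mathcal{E}_{\bar\eta,\nu})$, the last equality being the regularised definition of the memoryless capacity. Dividing by $n$ and passing to the memory-channel limit yields $C(\lambda,\mu,\nu)\le C(\mathcal{E}_{\bar\eta,\nu})$. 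This simultaneously handles all three capacities and, importantly, sidesteps any additivity or superactivation issue, since the comparison is already made against the full block $\mathcal{E}_{\bar\eta,\nu}^{\otimes n}$.

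For the converse $C(\mathcal{E}_{\bar\eta,\nu})>0\Rightarrow C(\lambda,\mu,\nu)>0$ I would exhibit a rate bounded away from zero. Fix a small $\delta>0$. The function $\eta^{(\lambda,\mu)}$ is continuous and non-decreasing with maximum $\bar\eta$ attained at $x=2\pi$, and single-channel capacities are non-decreasing in the transmissivity (again via~\eqref{composition_them} and data processing); since positivity of $C(\mathcal{E}_{\cdot,\nu})$ is an \emph{open} threshold condition, for $\delta$ small enough one still has $C(\mathcal{E}_{\eta^{(\lambda,\mu)}(2\pi(1-\delta)),\nu})>0$, a value I call $c$. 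By Theorem~\ref{thm_conv_main_sm} the transmissivities $\eta_j^{(n,\lambda,\mu)}$ with $j\in\{\lceil(1-\delta)n\rceil,\dots,n\}$ converge to $\eta^{(\lambda,\mu)}(2\pi j/n)\ge\eta^{(\lambda,\mu)}(2\pi(1-\delta))$, so for $n$ large each such channel has capacity at least $c/2$. Restricting the unitarily-reduced product channel to this block of $\approx\delta n$ modes and coding independently on them gives an achievable rate $\ge\frac1n\sum_{j=\lceil(1-\delta)n\rceil}^{n}C(\mathcal{E}_{\eta_j^{(n,\lambda,\mu)},\nu})\ge\tfrac12\delta c>0$ in the limit, whence $C(\lambda,\mu,\nu)>0$. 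This proves~\eqref{iff_cap_0}.

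With~\eqref{iff_cap_0} in hand the rest is substitution. At $\nu=0$, \eqref{q_equal_zero_sm} gives $Q(\mathcal{E}_{\bar\eta,0})>0\iff\bar\eta>\tfrac12$; writing $\bar\eta=\lambda^{(1-\sqrt\mu)/(1+\sqrt\mu)}$, taking $\log_2$, and solving the resulting linear inequality in $\sqrt\mu$ produces exactly~\eqref{cond_strictly_q_sm}. The identical computation with the thresholds $\tfrac{\nu}{\nu+1}$ from~\eqref{q_equal_zero_sm2}, with $\tfrac{\nu+1/2}{\nu+1}=\bigl(1+\tfrac{1}{2\nu+1}\bigr)^{-1}$, and with $\bigl(1+2^{-g(\nu)}\bigr)^{-1}$ from~\eqref{q_equal_zero_sm3} yields the three lines of~\eqref{cond_strictly_q2_sm}: the $Q_2,K$ line as a genuine equivalence, and the two $Q$-lines as the one-sided implications inherited from the one-sided bounds of~\eqref{q_equal_zero_sm3}. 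I expect the only genuinely delicate step to be the memory-channel bookkeeping in~\eqref{iff_cap_0}, namely making the ``positive fraction of good modes'' lower bound and the degradation upper bound rigorous against the formal definition of $C(\{\Phi^{(n)}\}_{n\in\N})$; the threshold algebra itself is entirely routine.
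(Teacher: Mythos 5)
Your proposal is correct and follows essentially the same route as the paper's proof: the upper direction via the factorisation $\mathcal{E}_{\eta_i,\nu}=\mathcal{E}_{\eta_i/\bar\eta,\nu}\circ\mathcal{E}_{\bar\eta,\nu}$ (the paper phrases this as Alice and Bob simulating the tensor product from $n$ uses of $\mathcal{E}_{\eta^{(\lambda,\mu)}(2\pi),\nu}$ using the composition rule and the bound $\eta_i^{(n,\lambda,\mu)}\le\eta^{(\lambda,\mu)}(2\pi)$), and the lower direction by isolating, via continuity and Theorem~\ref{thm_conv_main_sm}, a positive fraction of modes whose transmissivity exceeds a fixed threshold with positive single-channel capacity, then degrading them to a common memoryless channel. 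The threshold algebra from~\eqref{q_equal_zero_sm},~\eqref{q_equal_zero_sm2},~\eqref{q_equal_zero_sm3} is identical.
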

\begin{proof}
    Theorem~\ref{thm_strong_sm} establishes that the quantum memory channel $\{\Phi_{\lambda,\mu,\nu}^{(n)}\}_{n\in\N}$ is unitary equivalent to the quantum memory channel $\left\{\bigotimes_{i=1}^n \mathcal{E}_{\eta_i^{(n,\lambda,\mu)},\nu} \right\}_{n\in\mathbb{N}}$. Hence, their capacities $C$ are identical. Let us begin with the proof of the following implication: 
    \bb\label{implication1}
        C(\lambda,\mu,\nu)>0\Longrightarrow C(\mathcal{E}_{\eta^{(\lambda,\mu)}(2\pi),\nu})>0\,.
    \ee
    The composition rule in~\eqref{composition_them} establishes that any thermal attenuator with a given transmissivity and thermal noise can be simulated using any other thermal attenuator with higher transmissivity but the same thermal noise. Specifically, for two thermal attenuators $\mathcal{E}_{\lambda_1,\nu}$ and $\mathcal{E}_{\lambda_2,\nu}$ where $\lambda_1 \le \lambda_2$, the composition rule in~\eqref{composition_them} implies that $\mathcal{E}_{\lambda_1,\nu}=\mathcal{E}_{\frac{\lambda_1}{\lambda_2},\nu} \circ \mathcal{E}_{\lambda_2,\nu}$. Therefore, if Alice and Bob are connected through $\mathcal{E}_{\lambda_2,\nu}$, they can simulate the channel $\mathcal{E}_{\lambda_1,\nu}$ provided that Bob applies the channel $\mathcal{E}_{\frac{\lambda_1}{\lambda_2},\nu}$ to the output of $\mathcal{E}_{\lambda_2,\nu}$. As a consequence, since \eqref{upp_eta2pi} establishes that all the transmissivities $\{\eta^{(n,\lambda,\mu)}_i\}_{i=1,2,\ldots,n}$ are upper bounded by $\eta^{(\lambda,\mu)}(2\pi)$, Alice and Bob can simulate $\bigotimes_{i=1}^n \mathcal{E}_{\eta_i^{(n,\lambda,\mu)},\nu}$ by using $n$ uses of the memoryless channel $\mathcal{E}_{\eta^{(\lambda,\mu)}(2\pi),\nu}$. This implies that $C(\mathcal{E}_{\eta^{(\lambda,\mu)}(2\pi),\nu})\ge C(\lambda,\mu,\nu)$, which constitutes a proof of~\eqref{implication1}. Now, let us show that 
    \bb\label{implication2}
        C(\lambda,\mu,\nu)>0\Longleftarrow C(\mathcal{E}_{\eta^{(\lambda,\mu)}(2\pi),\nu})>0\,.
    \ee
    Assume $C(\mathcal{E}_{\eta^{(\lambda,\mu)}(2\pi),\nu})>0$. Since the function $x\longmapsto C(\mathcal{E}_{\eta^{(\lambda,\mu)}(2\pi x),\nu})$ is continuous in $x\in[0,1]$, there exists $\bar{x}\in[0,1]$ such that $C(\mathcal{E}_{\eta^{(\lambda,\mu)}(2\pi x),\nu})>0$ for all $x\in [\bar{x},1]$. Observe that Lemma~\ref{thm_conv_main_sm} implies that 
    \bb
        \lim\limits_{n\rightarrow\infty} \eta^{(n,\lambda,\mu)}_{\floor{\left(\frac{1+\bar{x}}{2}\right)n}}&=\eta^{(\lambda,\mu)}\left(2\pi \left(\frac{1+\bar{x}}{2}\right)\right)>\eta^{(\lambda,\mu)}(2\pi \bar{x})\,,
    \ee
    where $\floor{\cdot}$ is the floor function. Consequently, if $n$ is sufficiently large it holds that 
    \bb
        \eta^{(n,\lambda,\mu)}_{j}>\eta^{(\lambda,\mu)}(2\pi \bar{x})\qquad\forall\, j\in\left\{\floor{\left(\frac{1+\bar{x}}{2}\right)n},\ldots,n\right\}\,.
    \ee
     Hence, since any thermal attenuator with a given transmissivity and thermal noise can be simulated using any other thermal attenuator with higher transmissivity but the same thermal noise (as demonstrated above), Alice and Bob can simulate $n-\floor{\left(\frac{1+\bar{x}}{2}\right)n}+1$ uses of the memoryless channel $\mathcal{E}_{ \eta^{(\lambda,\mu)}(2\pi)  }$ by using the channel $\bigotimes_{i=1}^n \mathcal{E}_{\eta_i^{(n,\lambda,\mu)},\nu}$, for $n$ sufficiently large. Hence, since
     \bb
        \lim\limits_{n\rightarrow\infty} \frac{n-\floor{\left(\frac{1+\bar{x}}{2}\right)n}+1}{n}=\frac{1-\bar{x}}{2}\,,
     \ee
     it holds that 
     \bb
        C(\lambda,\mu,\nu)\ge\left(\frac{1-\bar{x}}{2}\right)C(\mathcal{E}_{ \eta^{(\lambda,\mu)}(2\pi)  })>0\,,
     \ee
    i.e.~the implication in~\eqref{implication2} is proved. Hence, we have proved that $C(\lambda,\mu,\nu)>0$ if and only if $ C(\mathcal{E}_{\eta^{(\lambda,\mu)}(2\pi),\nu})>0 $. As a consequence of this fact, by using that 
    \bb
    \eta^{(\lambda,\mu)}(2\pi)=\lambda^{ \frac{1-\sqrt{\mu}}{1+\sqrt{\mu}} }\,,
    \ee
    and by exploiting the bounds on the parameter region of zero capacities reported in~\eqref{q_equal_zero_sm},~\eqref{q_equal_zero_sm2}, and~\eqref{q_equal_zero_sm3}, one can show the validity of~\eqref{cond_strictly_q_sm} and~\eqref{cond_strictly_q2_sm}.
\end{proof}
Now, let us discuss why Theorem~\ref{pos_cap_thm_sm} is interesting. In the absence of memory effects ($\mu=0$), it is known that no quantum communication tasks can be achieved when the transmissivity is sufficiently low ($Q=0$ for $\lambda\le \frac{1}{2}$, and $Q_2=K=0$ for $\lambda\le \frac{\nu}{\nu+1}$). However, in Theorem~\ref{pos_cap_thm_sm} we established that for any transmissivity $\lambda>0$ and thermal noise $\nu\ge0$, there exists a critical value of the memory parameter $\mu$ above which, it becomes possible to achieve qubit distribution ($Q>0$), entanglement distribution ($Q_2>0$), and secret-key distribution ($K>0$). This result is particularly intriguing as it demonstrates that memory effects provide an advantage, enabling quantum communication tasks in highly noisy regimes that were previously deemed impossible. Specifically, this phenomenon bears resemblance to the ``die-hard quantum communication'' phenomenon observed in~\cite{Die-Hard-2-PRL,Die-Hard-2-PRA} within their toy model of optical fibre with memory effects, which was limited to the analysis of $Q$ only and does not consider $K$ and $Q_2$. In our paper, we not only demonstrate the persistence of such a phenomenon in a more realistic model, accounting for $Q$, $Q_2$, and $K$, but we also derive an analytical expression for the critical value of the memory parameter $\mu$, which goes beyond the findings of~\cite{Die-Hard-2-PRL,Die-Hard-2-PRA}. Note that, by relating $\mu$ to the temporal interval $\delta t$ between subsequent input signals (e.g.~$\mu=e^{-\delta t/t_E}$ with $t_E$ being the thermalisation timescale), Theorem~\ref{pos_cap_thm_sm} can also be expressed in terms of the critical $\delta t$ below which the above mentioned quantum communication tasks can be achieved.

In the forthcoming Theorem~\ref{Cap_absence_thermal_sm} we calculate the quantum capacity $Q$, the two-way quantum capacity $Q_2$, and the secret-key capacity $K$ of our model in the absence of thermal noise ($\nu=0$). 
\begin{thm}\label{Cap_absence_thermal_sm}
    Let $\lambda\in[0,1)$, $\mu\in[0,1)$, $\nu\ge0$. Let $C$ be one of the following capacities: quantum capacity $Q$, two-way quantum capacity $Q_2$, or secret key capacity $K$. In addition, let $C(\lambda,\mu,\nu)$ be the capacity $C$ of the quantum memory channel $\{\Phi_{\lambda,\mu,\nu}^{(n)}\}_{n\in\N}$. 
    In absence of thermal noise, i.e.~$\nu=0$, it holds that
    \bb
        C(\lambda,\mu,\nu=0)=\int_{0}^{2\pi}\frac{\mathrm{d}x}{2\pi} C\left(\mathcal{E}_{ \eta^{(\lambda,\mu)}(x)   ,0}\right)\,.
    \ee
    where $\eta^{(\lambda,\mu)}(x)$ is the effective transmissivity function expressed in~\eqref{formula_effect_transm_sm}. In particular,
    \bb\label{thesis_ris_cap}
        Q\left( \lambda,\mu,\nu=0\right)&=\int_{0}^{2\pi}\frac{\mathrm{d}x}{2\pi}\, \max\left\{0,\log_2 \left(\frac{\eta^{(\lambda,\mu)}(x)}{1-\eta^{(\lambda,\mu)}(x)}\right)\right\}\,,\\
        Q_2\left( \lambda,\mu,\nu=0 \right)&=K\left( \lambda,\mu,\nu=0 \right)\\&=\int_{0}^{2\pi}\frac{\mathrm{d}x}{2\pi}\,\log_2\left(\frac{1}{1-\eta^{(\lambda,\mu)}(x)}\right)\,.
    \ee
    Moreover, in the presence of thermal noise, i.e.~$\nu>0$, it holds that 
    \bb\label{eq_low_therm}
        C(\lambda,\mu,\nu)\ge\int_{0}^{2\pi}\frac{\mathrm{d}x}{2\pi} C\left(\mathcal{E}_{ \eta^{(\lambda,\mu)}(x)   ,\nu}\right)\,.
    \ee
\end{thm}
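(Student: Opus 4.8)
The plan is to reduce everything, via the unitary equivalence of Theorem~\ref{thm_strong_sm}, to the tensor-product channel $\{\bigotimes_{i=1}^n \mathcal{E}_{\eta_i^{(n,\lambda,\mu)},\nu}\}_{n\in\N}$, whose capacities coincide with those of $\{\Phi_{\lambda,\mu,\nu}^{(n)}\}_{n\in\N}$. I would then establish the asserted identity in two halves: an achievability bound $C(\lambda,\mu,\nu)\ge \int_0^{2\pi}\frac{\mathrm{d}x}{2\pi}\,C(\mathcal{E}_{\eta^{(\lambda,\mu)}(x),\nu})$ valid for every $\nu\ge0$ (this already gives \eqref{eq_low_therm}), and a matching converse $C(\lambda,\mu,0)\le \int_0^{2\pi}\frac{\mathrm{d}x}{2\pi}\,C(\mathcal{E}_{\eta^{(\lambda,\mu)}(x),0})$ available only at $\nu=0$, where the single-letter capacities \eqref{capacities_pure_loss} are additive. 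Substituting the explicit pure-loss values into the resulting integral then yields \eqref{thesis_ris_cap}.

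For the lower bound I would use the grouping scheme indicated in the Methods. Fix $P\in\N$, write $n=Pl$, and partition the $n$ modes into $P$ consecutive blocks of length $l$. Since the transmissivities are ordered increasingly and $\eta^{(\lambda,\mu)}$ is non-decreasing, the smallest transmissivity of block $p$ sits at the block's first index $(p-1)l+1$, whose normalised position tends to $(p-1)/P$; hence by Theorem~\ref{thm_conv_main_sm}, for fixed $P$ and $l\to\infty$, every transmissivity in block $p$ exceeds $\eta^{(\lambda,\mu)}(2\pi(p-1)/P)-\epsilon$. Using the composition rule \eqref{composition_them}, Bob post-processes each mode so that block $p$ becomes $l$ independent copies of $\mathcal{E}_{\eta^{(\lambda,\mu)}(2\pi(p-1)/P)-\epsilon,\nu}$; running the optimal i.i.d. code inside each block, with free two-way assistance for $Q_2$ and $K$, achieves as $l\to\infty$ a rate $\frac1P\sum_{p=1}^P C(\mathcal{E}_{\eta^{(\lambda,\mu)}(2\pi(p-1)/P)-\epsilon,\nu})$ with vanishing total error (the finitely many per-block errors each tend to zero). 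The vanishing fraction of ill-behaved small-index transmissivities, i.e.\ the indices below $j_n$ in Theorem~\ref{thm_conv_main_sm}, is simply discarded. Letting $P\to\infty$ and $\epsilon\to0$ converts the average into the claimed Riemann integral.

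For the converse at $\nu=0$ I would invoke additivity. Because $Q$, $Q_2$, and $K$ of the pure-loss channel are additive, the one-shot $C$-quantity of $\bigotimes_{i=1}^n \mathcal{E}_{\eta_i^{(n,\lambda,\mu)},0}$ equals $\sum_{i=1}^n C(\mathcal{E}_{\eta_i^{(n,\lambda,\mu)},0})$, and the corresponding single-letter weak-converse bounds (the relative-entropy-of-entanglement bound for $Q_2,K$, and degradability of pure loss for $Q$) remain valid against adaptive two-way protocols, so $C(\lambda,\mu,0)\le \limsup_n\frac1n\sum_{i=1}^n C(\mathcal{E}_{\eta_i^{(n,\lambda,\mu)},0})$. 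The crucial point is that $\eta\mapsto C(\mathcal{E}_{\eta,0})$, although unbounded as $\eta\to1$, is continuous and bounded on $[0,\eta^{(\lambda,\mu)}(2\pi)]$, since $\eta^{(\lambda,\mu)}(2\pi)=\lambda^{(1-\sqrt\mu)/(1+\sqrt\mu)}<1$ and all transmissivities obey $\eta_i^{(n,\lambda,\mu)}\le\eta^{(\lambda,\mu)}(2\pi)$ by \eqref{upp_eta2pi}. Applying the Avram--Parter corollary, Theorem~\ref{Thm_consequence_avram_parter}, with the continuous and boundedly supported function $s\mapsto C(\mathcal{E}_{s^2,0})$ then gives $\frac1n\sum_{i=1}^n C(\mathcal{E}_{\eta_i^{(n,\lambda,\mu)},0})\to \int_0^{2\pi}\frac{\mathrm{d}x}{2\pi}\,C(\mathcal{E}_{\eta^{(\lambda,\mu)}(x),0})$, which together with the lower bound proves equality.

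I expect the converse to be the main obstacle, for two reasons. First, one must justify that the memory-channel capacity is controlled by the single-use quantity $\frac1n C(\bigotimes_i\mathcal{E}_{\eta_i^{(n,\lambda,\mu)},0})$; this rests on additivity and on single-letter converse bounds surviving adaptive two-way assistance, both of which are genuinely special to the pure-loss case and fail for $\nu>0$, which is precisely why only the lower bound \eqref{eq_low_therm} persists there. Second, the integrand $C(\mathcal{E}_{\cdot,0})$ is not globally bounded, so Theorem~\ref{Avram--Parter} does not apply verbatim; the argument hinges on the uniform spectral bound $\eta_i^{(n,\lambda,\mu)}\le\eta^{(\lambda,\mu)}(2\pi)<1$ from \eqref{upp_eta2pi} (equivalently Lemma~\ref{lemmino_LL}) to confine all transmissivities to a compact subinterval of $[0,1)$. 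The achievability direction is comparatively routine, its only delicate point being the interchange of the $l\to\infty$ and $P\to\infty$ limits, which is handled by taking them sequentially.
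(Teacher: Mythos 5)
Your proof is correct and rests on the same pillars as the paper's: the unitary reduction of Theorem~\ref{thm_strong_sm}, the convergence of the transmissivities from Theorem~\ref{thm_conv_main_sm}, the composition rule~\eqref{composition_them}, the uniform bound~\eqref{upp_eta2pi}, and additivity of the pure-loss capacities at $\nu=0$. The achievability half is essentially identical to the paper's (block the $n$ modes into $P$ groups, degrade each block to its worst transmissivity via~\eqref{composition_them}, code i.i.d.\ within blocks, then let $P\to\infty$), and it yields~\eqref{eq_low_therm} for all $\nu\ge0$ exactly as in the paper. The one place where you genuinely diverge is the converse: the paper also discretizes there, showing that $n$ uses of the memory channel can be \emph{simulated} from $\lceil n/P\rceil$ i.i.d.\ uses of the fixed $P$-mode attenuator $\bigotimes_{p}\mathcal{E}_{\min\{1,\eta^{(\lambda,\mu)}(2\pi p/P)+1/P\},\nu}$, so that the upper bound follows from known memoryless capacity formulas together with additivity, and no weak-converse statement for the $n$-dependent product channel is ever needed. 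Your route instead applies a single-letter converse bound directly to $\bigotimes_{i}\mathcal{E}_{\eta_i^{(n,\lambda,\mu)},0}$ and evaluates $\frac{1}{n}\sum_{i}C(\mathcal{E}_{\eta_i^{(n,\lambda,\mu)},0})$ via the Avram--Parter corollary (Theorem~\ref{Thm_consequence_avram_parter}); this works, and you correctly flag both delicate points --- the cutoff needed because $C(\mathcal{E}_{\cdot,0})$ is unbounded near transmissivity $1$, cured by~\eqref{upp_eta2pi}, and the survival of the relative-entropy-of-entanglement and degradability bounds against adaptive two-way protocols --- but it places more weight on converse technology for memory channels that the paper's simulation argument quietly sidesteps, which is worth making fully explicit if you write this up.
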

\begin{proof}
{The structure of the proof is as follows:
\begin{itemize}
    \item \textbf{Part 1)} We find a lower and an upper bound on the capacity $C(\lambda,\mu,\nu)\coloneqq C\left(   \left\{\bigotimes_{i=1}^n \mathcal{E}_{\eta_i^{(n,\lambda,\mu)},\nu} \right\}_{n\in\mathbb{N}} \right) $ in terms of the capacity of a multimode thermal attenuator whose transmissivities are expressed in terms the effective transmissivity function given in~\eqref{formula_effect_transm_sm}. 
    \item \textbf{Part 2)} We exploit the lower bound proved in the Part 1 in order to prove \eqref{eq_low_therm}.
    \item \textbf{Part 3)} We leverage the fact that the capacities of the pure loss channel are additive (as proved in~\cite{Wolf2007} for the quantum capacity, and in~\cite{PLOB} for the two-way quantum and secret-key capacity) in order to show that the lower and upper bound proved in the Part 1 converges to the same quantity, thus proving \eqref{thesis_ris_cap}.
\end{itemize}
}
\textbf{Part 1)} Let us fix $P\in\N$ with $P\ge 2$. In addition, for all $n\in\N$ and all $i\in\N$ with $i\ge n$, let us define 
\bb
    \eta^{(n,\lambda,\mu)}_{i}\coloneqq \eta^{(n,\lambda,\mu)}_{n}\,.
\ee
By Lemma~\ref{thm_conv_main_sm}, there exists a sequence $\{j_k\}_{k\in\N}\subseteq \N$ such that $j_k\le k$ for all $k\in\N$, $\lim\limits_{k\rightarrow\infty}\frac{j_k}{k}=0$, and
    \bb
    \max\left\{ \left|\eta^{(n,\lambda,\mu)}_j-\eta^{(\lambda,\mu)}\left(\frac{2\pi j}{n}\right)\right|:\,  j\in\{j_{n},\ldots,n\}  \right\}<\frac{1}{P}
    \ee
for $n$ sufficiently large. In particular, 
    \bb\label{ineq_proof_sm}
    \eta^{(n,\lambda,\mu)}_{\ceil{\frac{n}{P}}p}&<\eta^{(\lambda,\mu)}\left(\frac{2\pi p}{P}\right)+\frac{1}{P}\,\qquad  \forall\,p\in\{1,2,\ldots,P\}  \,,\\
    \eta^{(n,\lambda,\mu)}_{\floor{\frac{n}{P}}p+1}&>
    \eta^{(\lambda,\mu)}\left(\frac{2\pi p}{P}\right)-\frac{1}{P}\,\qquad  \forall\,p\in\{1,2,\ldots,P-1\}  \,,
\ee
for $n$ sufficiently large, where $\ceil{\cdot}$ and $\floor{\cdot}$ denote the ceil and floor function, respectively. It holds that
\bb
    C(\lambda,\mu,\nu)&\eqt{(i)}C\left(   \left\{\bigotimes_{i=1}^n \mathcal{E}_{\eta_i^{(n,\lambda,\mu)},\nu} \right\}_{n\in\mathbb{N}} \right)\\&\leqt{(ii)} C\left(   \left\{\bigotimes_{i=1}^{\ceil{\frac{n}{P}}P} \mathcal{E}_{\eta_i^{(n,\lambda,\mu)},\nu} \right\}_{n\in\mathbb{N}} \right)\\&\leqt{(iii)} C\left(   \left\{\left(\bigotimes_{p=1}^{P} \mathcal{E}_{\eta_{\ceil{\frac{n}{P}}p}^{(n,\lambda,\mu)},\nu}\right)^{\otimes \ceil{\frac{n}{P}}} \right\}_{n\in\mathbb{N}} \right)\\&\leqt{(iv)} C\left(   \left\{\left(\bigotimes_{p=1}^{P} \mathcal{E}_{\min\{1,\eta^{(\lambda,\mu)}(2\pi\frac{p}{P})+\frac{1}{P}\},\nu}\right)^{\otimes \ceil{\frac{n}{P}}} \right\}_{n\in\mathbb{N}} \right)\\&\eqt{(v)} \frac{1}{P} C\left( \bigotimes_{p=1}^{P} \mathcal{E}_{\min\{1,\eta^{(\lambda,\mu)}(2\pi\frac{p}{P})+\frac{1}{P}\},\nu}\right)\,,
\ee
where: in (i) we applied Theorem~\ref{thm_strong_sm}; (ii) is a consequence of the fact that $\bigotimes_{i=1}^n \mathcal{E}_{\eta_i^{(n,\lambda,\mu)},\nu}$
can be simulated via $\bigotimes_{i=1}^{\ceil{\frac{n}{P}}P} \mathcal{E}_{\eta_i^{(n,\lambda,\mu)},\nu}$ by using only the first $n$ channel uses and discarding the remaining $\ceil{\frac{n}{P}}P-n$ ones;
in (iii) we exploited the composition rule in~\eqref{composition_them} and the fact that the transmissivity $\eta_i^{(Pl,\lambda,\mu)}$ is monotonically non-decreasing in $i$; in (iv) we made the assumption that $n$ is sufficiently large, which can be done without loss of generality because the capacities are defined in the limit as the number of channel uses $n$ approaches infinity, and in addition we employed the inequality stated in~\eqref{ineq_proof_sm}; in (v) we exploited that for any $\alpha\in(0,1]$ and any quantum channel $\Phi$ it holds that $C(\{\Phi^{\otimes \ceil{\alpha n}}\}_{n\in\N})=\alpha C(\Phi)$, where $C(\Phi)$ denotes the capacity of the memoryless channel $\Phi$. Analogously, it holds that 
\bb
    C(\lambda,\mu,\nu) 
    &= C\left(   \left\{\bigotimes_{i=1}^n \mathcal{E}_{\eta_i^{(n,\lambda,\mu)},\nu} \right\}_{n\in\mathbb{N}} \right)\\&\geqt{(vi)} C\left(   \left\{\bigotimes_{i=1}^{\floor{\frac{n}{P}}P} \mathcal{E}_{\eta_i^{(n,\lambda,\mu)},\nu} \right\}_{n\in\mathbb{N}} \right)\\&\ge C\left(   \left\{\left(\bigotimes_{p=1}^{P-1} \mathcal{E}_{\eta_{\floor{\frac{n}{P}}p+1}^{(n,\lambda,\mu)},\nu}\right)^{\otimes \floor{\frac{n}{P}}} \right\}_{n\in\mathbb{N}} \right)\\&\ge C\left(   \left\{\left(\bigotimes_{p=1}^{P} \mathcal{E}_{\max\{0,\eta^{(\lambda,\mu)}(2\pi\frac{p}{P})-\frac{1}{P}\},\nu}\right)^{\otimes \floor{\frac{n}{P}}} \right\}_{n\in\mathbb{N}} \right)\\&= \frac{1}{P} C\left( \bigotimes_{p=1}^{P} \mathcal{E}_{\max\{0,\eta^{(\lambda,\mu)}(2\pi\frac{p}{P})-\frac{1}{P}\},\nu}\right)\,,
\ee
where (vi) is a consequence of the fact that 
\bb
\bigotimes_{i=1}^{\floor{\frac{n}{P}}P} \mathcal{E}_{\eta_i^{(n,\lambda,\mu)},\nu} 
\ee 
can be simulated via
\bb
\bigotimes_{i=1}^n \mathcal{E}_{\eta_i^{(n,\lambda,\mu)},\nu}
\ee
by using only the first $\floor{\frac{n}{P}}P$ channel uses and discarding the remaining $n-\floor{\frac{n}{P}}P$ ones. Consequently, the capacity $C(\lambda,\mu,\nu)$ can be bounded as
    \bb\label{bound0}
        &\frac{1}{P}C\left(\bigotimes_{p=1}^{P-1}\mathcal{E}_{\max(0,\eta^{(\lambda,\mu)}\left(2\pi\frac{p}{P}\right)-\frac{1}{P}),\nu}\right)\le C(\lambda,\mu,\nu)\\
        &\le\frac{1}{P}C\left( \bigotimes_{p=1}^P\mathcal{E}_{\min(1,\eta^{(\lambda,\mu)}\left(2\pi\frac{p}{P}\right)+\frac{1}{P}),\nu} \right)\,
    \ee
for all $P\in\N$ with $P\ge 2$. In particular
    \bb\label{bound}
        &\limsup_{P\rightarrow\infty}\frac{1}{P}C\left(\bigotimes_{p=1}^{P-1}\mathcal{E}_{\max(0,\eta^{(\lambda,\mu)}\left(2\pi\frac{p}{P}\right)-\frac{1}{P}),\nu}\right)\\
        &\le C(\lambda,\mu,\nu)\le\liminf_{P\rightarrow\infty}\frac{1}{P}C\left( \bigotimes_{p=1}^P\mathcal{E}_{\min(1,\eta^{(\lambda,\mu)}\left(2\pi\frac{p}{P}\right)+\frac{1}{P}),\nu} \right)\,
    \ee
\textbf{Part 2)} Note that
\begin{align}
    C(\lambda,\mu,\nu)&\geqt{(vii)} \limsup_{P\rightarrow\infty}\frac{1}{P}C\left(\bigotimes_{p=1}^{P-1}\mathcal{E}_{\max(0,\eta^{(\lambda,\mu)}\left(2\pi\frac{p}{P}\right)-\frac{1}{P}),\nu}\right)\\
    &\geqt{(viii)} \limsup_{P\rightarrow\infty}\frac{1}{P} \sum_{p=1}^{P-1} C\left(\mathcal{E}_{\max(0,\eta^{(\lambda,\mu)}\left(2\pi\frac{p}{P}\right)-\frac{1}{P}),\nu}\right)\label{ineq_capacity_term_sm0}\\
    &\eqt{(ix)}\int_{0}^{2\pi}\frac{\mathrm{d}x}{2\pi} C\left(\mathcal{E}_{ \eta^{(\lambda,\mu)}(x)   ,\nu}\right)\,,\label{ineq_capacity_term_sm}
\end{align}
where: (vii) is a consequence of \eqref{bound}; (viii) follows from the fact that Alice and Bob can independently employ the optimal communication strategy for each of the $P$ single-mode channels that define the $P$-mode attenuator $\bigotimes_{p=1}^{P-1}\mathcal{E}_{\max(0,\eta^{(\lambda,\mu)}\left(2\pi\frac{p}{P}\right)-\frac{1}{P}),\nu}$; in (ix) we exploited the fact that the function $\lambda\longmapsto C(\mathcal{E}_{\lambda,0}) $ is continuous in $\lambda\in[0,1)$ and that
\bb
    \lambda^{\frac{1}{1-\mu}}\le\eta^{(\lambda,\mu)}(x)\le \lambda^{\frac{1-\sqrt{\mu}}{1+\sqrt{\mu}}}<1\,,
\ee
as established by~\eqref{upp_eta2pi}. Hence, we have proved~\eqref{eq_low_therm}.

\textbf{Part 3)} Now, let us assume that the thermal noise is zero, i.e.~$\nu=0$.  Under this assumption, we can exploit the additivity of the capacity $C$ of the pure-loss channels, i.e.~$C$ is such that for all transmissivities $\{\lambda_p\}_{p=1,2,\ldots,P}$ and all $P\in\N$ it holds that
\bb
    C\left(\bigotimes_{p=1}^P\mathcal{E}_{\lambda_p,\nu}\right)=\sum_{p=1}^PC\left( \mathcal{E}_{\lambda_p,\nu} \right)\,.
\ee
Indeed, the capacities $Q$~\cite{Wolf2007}, $Q_2$~\cite{PLOB}, and $K$~\cite{PLOB} of the pure-loss channel are additive. Hence,~\eqref{bound} implies that:
\bb\label{eq_proof_integral}
        &\limsup_{P\rightarrow\infty}\frac{1}{P}\sum_{p=1}^{P-1}C\left(\mathcal{E}_{\max(0,\eta^{(\lambda,\mu)}\left(2\pi\frac{p}{P}\right)-\frac{1}{P}),0}\right)  \le C(\lambda,\mu,0) \\
        &\le\liminf_{P\rightarrow\infty}\frac{1}{P}\sum_{p=1}^{P}C\left( \mathcal{E}_{\min(1,\eta^{(\lambda,\mu)}\left(2\pi\frac{p}{P}\right)+\frac{1}{P}),0} \right)\,.
\ee
By exploiting that $\eta^{(\lambda,\mu)}(2\pi)<1$ and the fact that the function $\lambda\longmapsto C(\mathcal{E}_{\lambda,0}) $ is continuous in $\lambda\in[0,1)$, we have that the left-hand side and the right-hand side of~\eqref{eq_proof_integral} converge to the same quantity:
\bb
    &\liminf_{P\rightarrow\infty}\frac{1}{P}\sum_{p=1}^{P}C\left( \mathcal{E}_{\min(1,\eta^{(\lambda,\mu)}\left(2\pi\frac{p}{P}\right)+\frac{1}{P}),0} \right)\\
    &\qquad=\int_{0}^{2\pi}\frac{\mathrm{d}x}{2\pi}C(\mathcal{E}_{\eta^{(\lambda,\mu)}(x),0})
\ee
and
\bb
     &\limsup_{P\rightarrow\infty}\frac{1}{P}\sum_{p=1}^{P-1}C\left(\mathcal{E}_{\max(0,\eta^{(\lambda,\mu)}\left(2\pi\frac{p}{P}\right)-\frac{1}{P}),0}\right)\\
     &\qquad=\int_{0}^{2\pi}\frac{\mathrm{d}x}{2\pi}C(\mathcal{E}_{\eta^{(\lambda,\mu)}(x),0})\,.
\ee
Hence, in the absence of thermal noise, the capacity $C(\lambda,\mu,0)$ of our quantum memory channel is given by
\bb
    C(\lambda,\mu,0)=\int_{0}^{2\pi}\frac{\mathrm{d}x}{2\pi}C(\mathcal{E}_{\eta^{(\lambda,\mu)}(x),0})\,.
\ee
Consequently, by leveraging the expression of the capacities $Q$~\cite{Wolf2007}, $Q_2$~\cite{PLOB}, and $K$~\cite{PLOB} of the pure-loss channel reported in~\eqref{capacities_pure_loss}, we obtain~\eqref{thesis_ris_cap}.

\end{proof}

\section{Conclusions}
In this work we investigated quantum communication across optical fibres in the little-studied case where memory effects are present, and hence where commonly employed approximations break down. We showed that memory effects enhance quantum communication, enabling the transmission of qubits, entanglement, and secret keys even in highly noisy regimes. While our results are model dependent, they do offer some concrete hope that memory effects could contribute to achieving efficient long-distance quantum communication with fewer quantum repeaters than previously believed.

\medskip
\section*{Acknowledgment}
FAM and VG acknowledge financial support by MUR (Ministero dell'Istruzione, dell'Universit\`a e della Ricerca) through the following projects: PNRR MUR project PE0000023-NQSTI, PRIN 2017 Taming complexity via Quantum Strategies: a Hybrid Integrated Photonic approach (QUSHIP) Id. 2017SRN-BRK, and project PRO3 Quantum Pathfinder. GDP has been supported by the HPC Italian National Centre for HPC, Big Data and Quantum Computing - Proposal code CN00000013 and by the Italian Extended Partnership PE01 - FAIR Future Artificial Intelligence Research - Proposal code PE00000013 under the MUR National Recovery and Resilience Plan funded by the European Union - NextGenerationEU.
GDP is a member of the ``Gruppo Nazionale per la Fisica Matematica (GNFM)'' of the ``Istituto Nazionale di Alta Matematica ``Francesco Severi'' (INdAM)''. MF is supported by a Juan de la Cierva Formaci\`on fellowship (project FJC2021-047404-I), with funding from MCIN/AEI/10.13039/501100011033 and European Union NextGenerationEU/PRTR, and by Spanish Agencia Estatal de Investigaci\`on, project PID2019-107609GB-I00/AEI/10.13039/501100011033, by the European Union Regional Development Fund within the ERDF Operational Program of Catalunya (project QuantumCat, ref. 001-P-001644), and by European Space Agency, project ESA/ESTEC 2021-01250-ESA. FAM and LL thank the Freie Universit\"{a}t Berlin for hospitality. FAM, LL, and VG acknowledge valuable discussions with Paolo Villoresi, Giuseppe Vallone, and Marco Avesani and their hospitality at the University of Padua. FAM and MF acknowledge valuable discussions with Giovanni Barbarino regarding the Avram--Parter theorem and Szeg\H{o} theorem.

\begin{IEEEbiographynophoto}{Francesco Anna Mele}
was born in Acquaviva delle Fonti, Italy, in 1997.
He received a B.Sc.\ and a M.Sc.\ degree in Physics from the University of Pisa, Italy, in 2021, a Diploma in Physics at the Scuola Normale Superiore, Pisa, Italy, in 2021. Since November 2021, he is a PhD student at Scuola Normale Superiore, Pisa, Italy. His research interests include all aspects of quantum information and computation.
\end{IEEEbiographynophoto}

\begin{IEEEbiographynophoto}{Giacomo De Palma}
was born in Lanciano, Italy, in 1990. He received the B.S. and M.S. degrees in physics from the University of Pisa in 2011 and 2013, respectively, and the Diploma di Licenza and Ph.D. degrees in physics from Scuola Normale Superiore in 2014 and 2016, respectively. From 2016 to 2018, he held a post-doctoral position at the University of Copenhagen. From 2018 to 2019, he was a Marie-Sklodowska Curie Fellow at the University of Copenhagen. From 2019 to March 2021, he was a Post-Doctoral Associate at MIT. From March to December 2021, he was a Tenure-track Assistant Professor in mathematical physics at the Scuola Normale Superiore. Since December 2021, he is Associate Professor in mathematical physics at the University of Bologna. He is the author of 49 scientific articles. His research interests include all aspects of quantum information and computation.
Prof. De Palma is a member of the International Association of Mathematical Physics (IAMP), of the Italian Mathematical Union (UMI) and of the Istituto Nazionale di Alta Matematica "Francesco Severi" (INdAM). He was a recipient of the Best Italian Researcher in Denmark (BIRD) Award in 2018.
\end{IEEEbiographynophoto}

\begin{IEEEbiographynophoto}{Marco Fanizza}
received the M.S. degree in Physics in 2017 from the University of Pisa, Italy and the Diploma di Licenza in Physics from Scuola Normale Superiore, Pisa, Italy, in 2018. In 2021 he earned his Ph.D. in Nanosciences from Scuola Normale Superiore. Since then, he has been a postdoc at Universitat Autonoma de Barcelona, Spain. He has been supported by a Juan de la Cierva fellowship since 2023.
\end{IEEEbiographynophoto}

\begin{IEEEbiographynophoto}{Vittorio Giovannetti}
was born in Castelnuovo Garfagnana, Italy, in 1970. He graduated with a degree in Theoretical Physics from the University of Pisa in 1997 and earned his Ph.D. in Physics from the University of Camerino in 2001. From 2001 to 2004, he served as a postdoctoral associate at the Research Laboratory of Electronics at the Massachusetts Institute of Technology. Between 2004 and 2010, he held a postdoctoral position at the Scuola Normale Superiore (SNS) in Pisa. From 2010 to 2022, he was an Associate Professor of Physics at SNS, and since 2022, he has been a Full Professor of Physics there.
\end{IEEEbiographynophoto}

\begin{IEEEbiographynophoto}{Ludovico Lami}
received a B.Sc.\ and a M.Sc.\ degree in Physics from the University of Pisa, Italy, in 2014, a Diploma in Physics at the Scuola Normale Superiore, Pisa, Italy, in 2015, and a Ph.D.\ degree from the Autonomous University of Barcelona, Spain, in 2017, where he was later conferred the Special Award for Doctoral Studies. He was an Alexander von Humboldt Research Fellow at the University of Ulm, Germany, from 2020 to 2022, and from 2022 he is an Assistant Professor at the University of Amsterdam. His research interests lie in quantum information, entanglement theory, quantum Shannon theory, and in foundational aspects of quantum physics.
\end{IEEEbiographynophoto}

\bibliographystyle{unsrt}
\bibliography{biblio}
\section{Capacities of the Localised Interaction Model}\label{sec_appendix_capacities_LIM}
In this section, we provide an expression for the capacities of the "Localised Interaction Model" (LIM) of references~\cite{Memory1, Memory2, Memory3}.

Let $Q^{\text{LIM}}(\lambda,\mu,\nu)$, $Q_2^{\text{LIM}}(\lambda,\mu,\nu)$, and $K^{\text{LIM}}(\lambda,\mu,\nu)$ be the quantum capacity, two-way quantum capacity, and secret-key capacity of the LIM with transmissivity $\lambda$, memory parameter $\mu$, and thermal noise $\nu$. The effective transmissivity function of the LIM is given by~\cite[Eq.~(5)]{Memory1}:
\bb
    \eta_{\text{LIM}}^{(\lambda,\mu)}(x)= \frac{\mu+\lambda-2\sqrt{\mu\lambda}\cos\left(\frac{x}{2}\right)}{1+\mu\lambda-2\sqrt{\mu\lambda}\cos\left(\frac{x}{2}\right)}\qquad\forall\, x\in[0,2\pi]\,.
\ee
Note that $\eta_{\text{LIM}}^{(\lambda,\mu)}(x)$ is monotonically increasing in $x$ and thus achieves is maximum in $x=2\pi$:
\bb
    \max_{x\in[0,2\pi]}\eta_{\text{LIM}}^{(\lambda,\mu)}(x)=\eta_{\text{LIM}}^{(\lambda,\mu)}(x=2\pi)=\left(\frac{\sqrt{\mu}+\sqrt{\lambda}}{1+\sqrt{\mu\lambda}}\right)^2\,.
\ee
We can apply the same reasoning used in Theorem~\ref{pos_cap_thm_sm} in order to determine the parameter region where the capacities are strictly positive. Specifically, in the absence of thermal noise ($\nu=0$), the quantum capacity $Q^{\text{LIM}}(\lambda,\mu,\nu=0)$ is strictly positive if and only if $\eta_{\text{LIM}}^{(\lambda,\mu)}(x=2\pi)>\frac{1}{2}$. This condition can be expressed as follows:
\bb
    Q^{\text{LIM}}(\lambda,\mu,\nu=0)>0 \Longleftrightarrow \sqrt{\mu}>\frac{1-\sqrt{2\lambda}}{\sqrt{2}-\sqrt{\lambda}}\,.
\ee
Additionally, the two-way quantum capacity $Q_2^{\text{LIM}}(\lambda,\mu,\nu)$ and secret-key capacity $K^{\text{LIM}}(\lambda,\mu,\nu)$ are strictly positive if and only if $\eta_{\text{LIM}}^{(\lambda,\mu)}(x=2\pi)>\frac{\nu}{\nu+1}$. This condition can be expressed as follows:
\bb
    &Q^{\text{LIM}}(\lambda,\mu,\nu),K^{\text{LIM}}(\lambda,\mu,\nu)>0 \\&\qquad\Longleftrightarrow \sqrt{\mu}>\frac{ \sqrt{\nu}-\sqrt{\lambda(\nu+1)}  }{ \sqrt{\nu+1}-\sqrt{\lambda\nu} } \,.
\ee
The quantum capacity $Q^{\text{LIM}}(\lambda,\mu,0)$ of the LIM in the absence of thermal noise ($\nu=0$) is given by~\cite{Memory1}:
\bb
    Q^{\text{LIM}}(\lambda,\mu,0)=\int_{0}^{2\pi}\frac{\mathrm{d}x}{2\pi}\max\left(0,\log_2\left(\frac{\eta_{\text{LIM}}^{(\lambda,\mu)}(x)}{1-\eta_{\text{LIM}}^{(\lambda,\mu)}(x)}\right)\right)\,.
\ee
One can numerically verify that $Q^{\text{LIM}}(\lambda,\mu,\nu=0)$ is an increasing function of $\mu$ for each assigned transmissivity value $\lambda$, implying that memory effects can improve quantum communication.

Furthermore, by exploiting the same reasoning used in Theorem~\ref{Cap_absence_thermal_sm}, we can directly derive an expression for the two-way quantum and the secret-key capacity of the LIM in the absence of thermal noise. Namely, they are given by:
\bb
    K^{\text{LIM}}(\lambda,\mu,0)&=Q^{\text{LIM}}_2(\lambda,\mu,0)\\
    &=\int_{0}^{2\pi}\frac{\mathrm{d}x}{2\pi}\log_2\left(\frac{1}{1-\eta_{\text{LIM}}^{(\lambda,\mu)}(x)}\right)\,.
\ee

\section{Two-way quantum communication and memory effects}
The reader might question the meaningfulness of considering communication tasks assisted by two-way classical communication in a setting where the memory parameter $\mu$, 
and thus the time interval between subsequent signals, is fixed. Indeed, in general, the rounds of classical communication between consecutive channel uses may vary during the communication protocol. 
Additionally, another problem is that, when the sender and receiver are very far apart, even a single round of classical communication requires an excessively long waiting time which prevents the exploitation of memory effects altogether. However, in the case of Choi-simulable channels~\cite{PLOB}, 
it is meaningful to consider such communication tasks. This is because optimal entanglement and secret-key distribution strategies for Choi-simulable channels can be achieved by initially utilising the channel (with a constant and short time interval between subsequent input signals) to share multiple copies of the Choi state, followed by employing optimal distillation protocols (either for entanglement or secret-key) to distil the Choi state. 
Fortunately, our model is associated with the quantum memory channel $\{\Phi_{\lambda,\mu,\nu}^{(n)}\}_{n\in\N}$, which is Choi simulable because it is Gaussian~\cite{PLOB}. 

\section{Transversal attenuation}\label{section_new_model_SM}

 A possible limitation of the DIM introduced before is the fact that the environments are taken to be single-mode. This would not be an issue in absence of memory effects, since one could always select a single-mode component of the environment as the effective state interacting with the input. Instead, memory effects allow for more complex dynamics. One could easily extend our model to capture a memoryless contribution to the noise, by considering the case where between each interaction with $E_1^{(j)}$ and $E_1^{(j+1)}$ there is an additional thermal loss of attenuation $\gamma^{1/M}$ and environment state $\tau_{\nu}$, but without memory effects, as depicted in Fig.~\ref{fig_new_for_Marco}. In this way, we are allowing for the presence of an additional mechanism of interaction with the environment which happens on a faster timescale. By taking $\gamma^{1/M}$ to be uniform along the several uses of the transmission line, the action on the annihilation operators of the input modes turns out to be a simple rescaling by $\sqrt{\gamma^{1/M}}$, which commutes with the original interaction. The net effect is to place identical thermal attenuators at each mode, before or after the memory channel.

\begin{figure*}[t]
	\centering
	\includegraphics[width=1\linewidth]{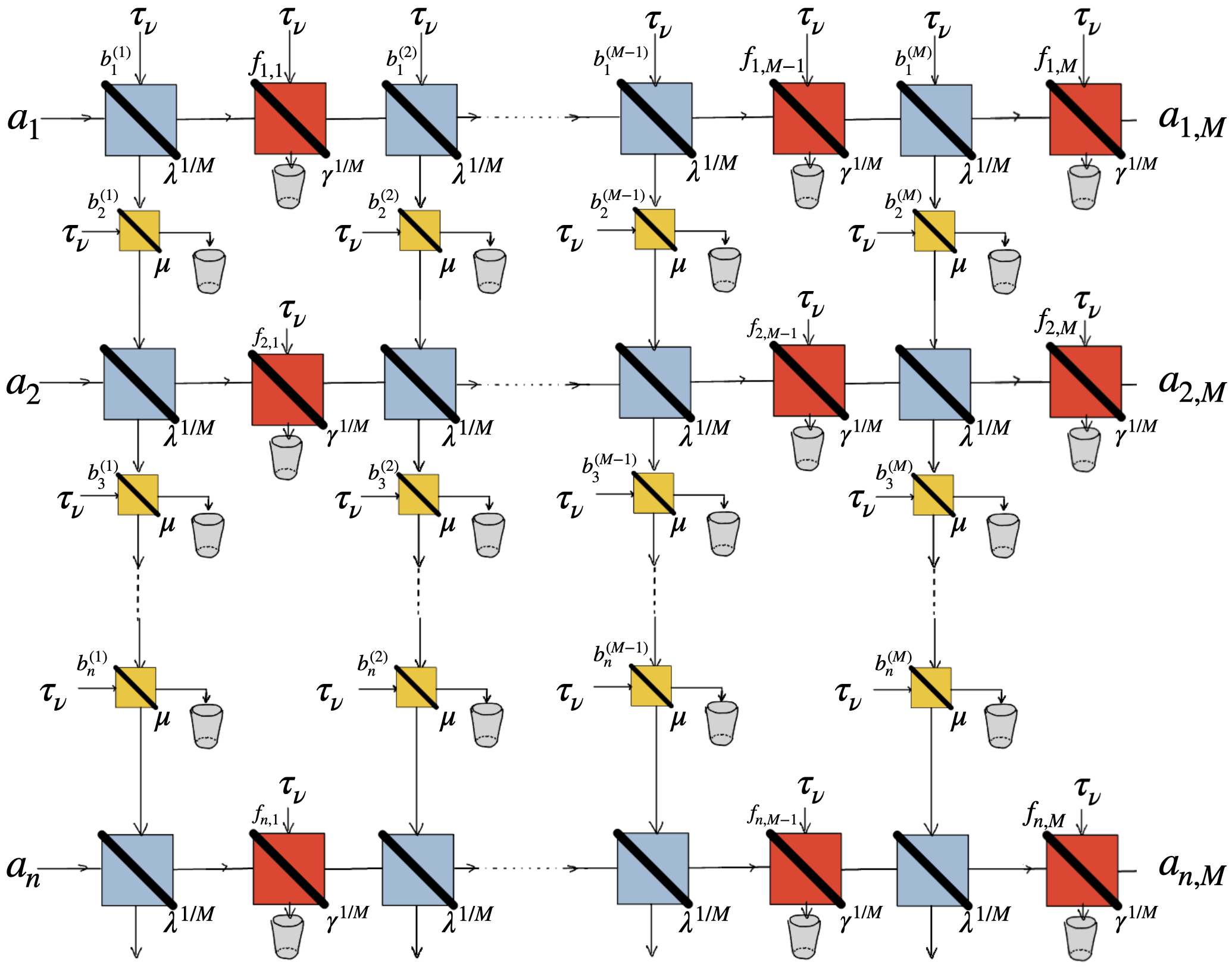}
	\caption{Depiction of the quantum memory channel which models both memory and memoryless noise. In contrast to the model presented in Figure~\ref{fig:griglia_SM}, here we have also the red beam splitters of transmissivity $\gamma^{1/M}$ which models the presence of memoryless contribution to the noise. This addition, while it does not substantially affect the mathematical structure of the model, captures the possibility of having different kinds of noise effects on the fibre at the same time, such as scattering, which could enable memory effects, and absorption, which may be memoryless. }
    \label{fig_new_for_Marco}
\end{figure*}
This simple model captures the possibility of having the communication line interacting with a multi-mode system, where effectively there are two independent noise effects acting on different timescales. For example, one may think of the memory effects to be due to scattering and the memoryless noise to be due to absorption. Since at the end we will see that the memory channel can be written as a tensor product of attenuators, by the composition rules for attenuators, the capacities of this enhanced model are also completely computable. In particular, it follows that the advantage on the quantum capacity might remain, if $\gamma$ is sufficiently large. If $\nu=0$, the advantage due to memory effects on one-way qubit transmission  remains if and only if $\gamma>\frac{1}{2}$, while the two-way quantum capacity can be non-zero as soon as $\gamma\geq \frac{\nu}{\nu+1}$.

In concrete, we modify the model by introducing new modes $\mathbf{f}$ as depicted in Fig.~\ref{fig_new_for_Marco}, i.e.
\bb
\mathbf{a}&\coloneqq(a_1,\,a_2,\ldots,\,a_n)^T\,,\\ \mathbf{a_M}&\coloneqq(a_{1,M},\,a_{2,M},\ldots,\,a_{n,M})^T\,,\\
\mathbf{b}&\coloneqq (b_{1}^{(1)},\,b_{1}^{(2)},\ldots,\, b_{1}^{(M)},\, b_{2}^{(1)},\\
&\qquad b_{2}^{(2)},\ldots,\,b_{2}^{(M)},\ldots,\,b_{n}^{(1)},\,b_{n}^{(2)},\ldots,\,b_{n}^{(M)})^T\,,\\
\mathbf{f}&\coloneqq (f_{1,1},f_{1,2},\ldots,\,f_{1,M-1}, f_{1,M},\ldots,\\
&\qquad f_{n-1,1},\,f_{n,2},\ldots,\,f_{n,M-1}, f_{n,M})^T\,,
\ee
one can modify~\eqref{vector_annihilation} as
\bb\label{vector_annihilation2}
\mathbf{a_M}=A^{(M,n,\lambda,\mu,\gamma)}\mathbf{a}+E^{(M,n,\lambda,\mu,\gamma)}\mathbf{b}+F^{(M,n,\lambda,\mu,\gamma)}\mathbf{f}\,,
\ee
where $A^{(M,n,\lambda,\mu,\gamma)}$ is an $n\times n$ real matrix, $E^{(M,n,\lambda,\mu,\gamma)}$  is an $n\times nM$ real matrix, and $F^{(M,n,\lambda,\mu,\gamma)}$ is an $n\times nM$ real matrix which satisfy
\bb\label{implication_comm2}
    &A^{(M,n,\lambda,\mu,\gamma)}{A^{(M,n,\lambda,\mu,\gamma)}}^T+E^{(M,n,\lambda,\mu,\gamma)}{E^{(M,n,\lambda,\mu,\gamma)}}^T\\
    &\qquad+F^{(M,n,\lambda,\mu,\gamma)}{F^{(M,n,\lambda,\mu,\gamma)}}^T=\mathbb{1}_{n\times n}\,,
\ee
thanks to~\eqref{comm_rel}. By following the same steps of the proof of Theorem~\ref{thm_strong_sm}, we get again
\bb
&\chi_{\Phi_{\lambda,\mu,\nu,\gamma}^{(M,n)}(\rho^{(n)})}(\mathbf{z})\coloneqq \Tr\left[\Phi_{\lambda,\mu,\nu,\gamma}^{(M,n)}(\rho^{(n)})\,e^{\mathbf{a}^\dagger\mathbf{z}-\mathbf{z}^\dagger\mathbf{a} }\right]\\
&=\chi_{\rho^{(n)}}\left({A^{(M,n,\lambda,\mu,\gamma)}}^T\mathbf{z}\right)\\
&\quad\times\chi_{\tau_\nu^{\otimes nM}}\left({E^{(M,n,\lambda,\mu,\gamma)}}^T\mathbf{z}\right)\chi_{\tau_\nu^{\otimes nM}}\left({F^{(M,n,\lambda,\mu,\gamma)}}^T\mathbf{z}\right)\\
&=\chi_{\rho^{(n)}}\left({A^{(M,n,\lambda,\mu,\gamma)}}^T\mathbf{z}\right)\\
&\quad\times e^{-(\nu+\frac{1}{2})\mathbf{z}^\dagger ( \mathbb{1}_{n\times n}- A^{(M,n,\lambda,\mu,\gamma)}{A^{(M,n,\lambda,\mu,\gamma)}}^T ) \mathbf{z}}\qquad\forall\,\mathbf{z}\in\mathbb{C}^{n}\,,
\ee
And the analysis continues as in the proof of Theorem~\ref{thm_strong_sm}. We are left to specify $A^{(M,n,\lambda,\mu,\gamma)}$ in the modified model. By exploiting~\eqref{transf_beam} and the notation in Fig.~\ref{fig_new_for_Marco}, we derive the following passive transformation: 
\bb
a_{i,j}&=\sqrt{\gamma^{1/M}}\sqrt{\lambda^{1/M}}\,a_{i,j-1}+\sqrt{\gamma^{1/M}}\sqrt{1-\lambda^{1/M}}\,m_{i-1,j}\\
&\quad+\sqrt{1-\gamma^{1/M}}\,f_{i,j-1}\,,
\ee
\bb
m_{i,j}&=-\sqrt{1-\mu}\,b_{i+1}^{(j)}+\sqrt{\mu\lambda^{1/M}}\,m_{i-1,j}\\
&\quad-\sqrt{\mu(1-\lambda^{1/M})}\,a_{i,j-1}\,,
\ee
where the transmissivity $\gamma$ is associated to memoryless noise. By defining $\tilde{a}_{i,j}\coloneqq\gamma^{-j/M}a_{i,j}$, and $\tilde{m}_{i,j}\coloneqq\gamma^{-(j-1)/M}m_{i,j}$  we obtain
\bb\label{eq_a_m2}
\tilde{a}_{i,j}&\simeq\sqrt{\lambda^{1/M}}\,\tilde{a}_{i,j-1}+\sqrt{1-\lambda^{1/M}}\,\tilde{m}_{i-1,j}\,,\\
\tilde{m}_{i,j}&\simeq\sqrt{\mu\lambda^{1/M}}\,\tilde{m}_{i-1,j}-\sqrt{\mu(1-\lambda^{1/M})}\,\tilde{a}_{i,j-1}\,,
\ee
which are the recurrence relations for the model at $\gamma=1$.
We immediately get 
\bb
A^{(M,n,\lambda,\mu,\gamma)}=\sqrt{\gamma} A^{(M,n,\lambda,\mu,1)}=\sqrt{\gamma} A^{(M,n,\lambda,\mu)}\,
\ee
and therefore
\bb\label{lim_elements_def2}
    \bar{A}_{i,h}^{(n,\lambda,\mu,\gamma)}\coloneqq \lim\limits_{M\rightarrow\infty}A_{i,h}^{(M,n,\lambda,\mu,\gamma)}=\sqrt{\gamma}  \bar{A}_{i,h}^{(n,\lambda,\mu)}\,.
\ee
This means that this modification of the model is equivalent to placing a thermal attenuator with transmissivity $\gamma$ before or after the memory channel. Mathematically, the square of the absolute values of the singular values of $\bar{A}_{i,h}^{(n,\lambda,\mu,\gamma)}$ are obtained by rescaling those of  $\bar{A}_{i,h}^{(n,\lambda,\mu)}$ by $\gamma$. It follows that as long as $\gamma>1/2$ and for $\nu=0$, there are values of $\mu$ and $\lambda$ such that the quantum capacity is still non-zero. In particular $\gamma\lambda$, which can be thought as the effective attenuation on the fibre, could be less than $1/2$ while the quantum capacities are still non zero. A similar observation holds for the two-way capacities. If there were no memory effects, the two-way quantum capacity would be zero if and only if $\gamma\lambda\leq\frac{\nu}{\nu+1}$, while in our model it can be still non-zero even if the inequality is satisfied.

\end{document}